\newcommand{\etal}{et al.~}
\DeclareMathOperator{\operatorClassNP}{NP}
\newcommand{\classNP}{\ensuremath{\operatorClassNP}}
\DeclareMathOperator{\operatorClassW}{W}
\newcommand{\classW}[1]{\ensuremath{\operatorClassW[#1]}}
\begin{document}

\title{Induced Disjoint  Paths in AT-free Graphs\thanks{This work is supported by EPSRC (EP/G043434/1), Royal Society (JP100692), the European Research Council (ERC) via the grant Rigorous Theory of Preprocessing (reference 267959),  and ERC StG project PAAl no.\ 259515. A preliminary abstract of this paper appeared in the proceedings of SWAT 2012~\cite{GPV12-swat}.}}
\author{Petr A. Golovach\inst{1} 
\and 
Dani\"el Paulusma\inst{2}
\and
Erik Jan van Leeuwen\inst{3}
}
\institute{
Department of Informatics, University of Bergen, PB 7803, 5020 Bergen, Norway\\
\texttt{petr.golovach@ii.uib.no}.
\and
School of Engineering and  Computing Sciences, Durham University\\
Science Laboratories, South Road, Durham DH1 3LE, UK\\
\texttt{daniel.paulusma@durham.ac.uk}.
\and
Department of Information and Computing Sciences, Utrecht University\\
PO Box 80.089,
3508 TB Utrecht,
The Netherlands\\
\texttt{e.j.van.leeuwen@uu.nl}.
}
\maketitle

\begin{abstract}
Paths $P_1,\ldots,P_k$ in a graph~$G=(V,E)$ are mutually induced if 
any two distinct $P_i$ and $P_j$  have neither common vertices nor adjacent vertices (except perhaps their end-vertices). 
The {\sc Induced Disjoint Paths} problem is to decide if
a graph~$G$ with $k$ pairs of specified vertices $(s_i,t_i)$ contains 
$k$ mutually induced paths $P_i$ such that each $P_i$ connects $s_i$ and $t_i$.
This is a classical graph problem that is \classNP-complete even for $k=2$. We study it for AT-free graphs.

\medskip
\noindent
Unlike its subclasses of permutation graphs and cocomparability graphs, the class of AT-free graphs has  
no known characterisation by a geometric intersection model.
However, by a new, structural analysis of the behaviour of  {\sc Induced Disjoint Paths} for AT-free graphs,
we prove that it can be solved in polynomial time for AT-free graphs even when $k$ is part of the input.
This is in contrast to the situation for other well-known graph classes, such as planar graphs, claw-free graphs, or more recently, (theta,wheel)-free graphs (JCTB 2021), for which such a result only holds if $k$ is fixed. 

\medskip
\noindent
As a consequence of our main result, the problem of deciding if a given AT-free graph contains a fixed graph~$H$ as an induced
 topological minor admits a polynomial-time algorithm. In addition, we show that such an algorithm  
 is essentially optimal by proving that the problem is  \classW{1}-hard with parameter~$|V_H|$, even on a subclass of AT-free graphs, namely cobipartite graphs.
We also show that the problems {\sc $k$-in-a-Path} and {\sc $k$-in-a-Tree} are polynomial-time solvable on AT-free graphs even if $k$ is part of the input.
These problems are to test if a graph has an induced path or induced tree, respectively, spanning $k$ given vertices.
\end{abstract}

\section{Introduction}\label{s-intro}

We study the {\sc Induced Disjoint Paths} problem. This problem can be seen as the induced version of the well-known {\sc Disjoint Paths} problem, which is to decide if a graph~$G$ with $k$ pairs of specified vertices $(s_i,t_i)$ contains a set of $k$ mutually vertex-disjoint paths $P_1,\ldots,P_k$, where each $P_i$ starts from $s_i$ and ends at $t_i$.
The {\sc Disjoint Paths} problem is one of the problems in Karp's list of \classNP-compete problems~\cite{Ka75}.
If $k$ is a {\it fixed} integer, that is, not part of the input, then the problem can be solved in $O(n^3)$ time for $n$-vertex graphs, as shown by Robertson and Seymour~\cite{RS95}
(see~\cite{KKR12} for an algorithm with $O(n^2)$ running time).
In contrast, the {\sc Induced Disjoint Paths} problem is \classNP-complete even if $k=2$, as shown both by Fellows~\cite{Fe89} and  Bienstock~\cite{Bi91}.

The {\sc Induced Disjoint Paths} problem is formally defined as follows. The problem takes as input a graph~$G$ and a set~$S=\{(s_{1},t_{1}),\ldots,(s_{k},t_{k})\}$ of $k$ distinct unordered pairs 
of specified vertices of a graph~$G$, which we call {\it terminals}. It asks if $G$ contains $k$ 
paths $P_{1},\ldots,P_{k}$ that satisfy the following conditions:
\begin{itemize}
\item [1.] for every $i$, $P_i$ has end-vertices $s_i$ and $t_i$;
\item [2.] if $i\neq j$, then $P_i$ and $P_j$ may 
share at most one vertex, which must be an end-vertex of both $P_i$ and $P_j$;
\item [3.] if $i\neq j$, then no inner vertex~$u$ of $P_i$ is adjacent to a vertex~$v$ of $P_j$
except when $v$ is an end-vertex of both $P_i$ and $P_j$;
\item [4.] for every $i$, $P_i$ is an induced path.
\end{itemize}
A collection of paths $P_1,\ldots,P_k$ that satisfies the aforementioned 
conditions 1--4 for a given graph~$G$ with a set $S$ of $k$ terminals pairs $(s_i,t_i)$ is called a {\it solution} for $(G,S)$. An example of a graph with three paths that satisfy all conditions is shown in Fig.~\ref{fig:dp}.

\begin{figure}[ht]
\centering\scalebox{0.7}{\input{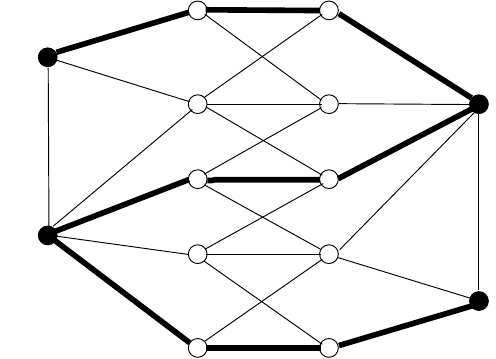_t}}
\caption{An example of an instance $(G,S)$ of {\sc Induced Disjoint Paths}, which is a yes-instance as the three paths $P_1,P_2,P_3$ satisfy conditions 1--3. 
\label{fig:dp}}
\end{figure}

\noindent
We make two observations about the problem definition. 

\medskip
\noindent
{\bf Remark 1.} 
We require the terminal pairs to be distinct, but do allow in conditions~2 and~3 that  
two terminals from distinct pairs coincide or are adjacent.  Hence, conditions~2 and~3 are slightly more relaxed compared to the usual definition of {\sc Induced Disjoint Paths} by means of mutually induced paths where paths in a solution have neither common nor adjacent vertices. For general graphs, we can make these relaxations without loss of generality, because we can perform some elementary graph operations to ensure that two terminals from different pairs are distinct and non-adjacent. However, these operations may break membership of a graph class. 
Hence, our definition with the relaxed conditions yields a more general problem to solve if we restrict the input to some special graph class. We need this more general problem for our applications, as we will explain in 
Section~\ref{s-appp}.\footnote{We refer to Open Problem~1 in Section~\ref{s-con} for a discussion on the even more general problem variant where two terminal pairs are allowed to be the same.}

\medskip
\noindent
{\bf Remark 2.}
As we can take shortcuts if necessary, we may omit condition~4 to obtain an equivalent problem. However, we added condition~4 explicitly, as this will be convenient for algorithmic purposes. 

\medskip
\noindent
{\bf Research Question.} 
As the {\sc Induced Disjoint Paths} problem is \classNP-complete even if $k=2$~\cite{Bi91,Fe89}, we study the following research question to increase our understanding in the computational complexity of the problem:

\medskip
\noindent
{\it What are natural graph classes for which {\sc Induced Disjoint Paths} is polynomial-time solvable?}

\subsection{Known Results}
For planar graphs, {\sc Induced Disjoint Paths} stays \classNP-complete. This result can be obtained by subdividing each edge of a planar input graph of {\sc Disjoint Paths}. As planar graphs are closed under edge subdivision, this yields a planar input 
graph of {\sc Induced Disjoint Paths}. We then apply the result of Lynch~\cite{Ly75} who proved that {\sc Disjoint Paths} is \classNP-complete  for planar graphs.
Kobayashi and Kawarabayashi~\cite{KK12} 
presented an algorithm that solves {\sc Induced Disjoint Paths} on planar graphs that runs in linear time for any fixed $k$; see~\cite{KK09} for an extension of this result to graph classes of bounded genus.

For claw-free graphs, {\sc Induced Disjoint Paths} stays \classNP-complete as well. This is shown by Fiala et al.~\cite{FKLP12} even for line graphs, which form a subclass of claw-free graphs, 
and very recently by Radovanovi\'c, Trotignon and Vu\v{s}kovi\'c~\cite{RTV19} even for line graphs of triangle-free chordless graphs.
In addition, Fiala et al.~\cite{FKLP12} showed that  {\sc Induced Disjoint Paths} can be solved in polynomial time for claw-free graphs if $k$ is fixed.  
In a previous paper~\cite{GPV12-esa},  
we improved the latter result by showing that {\sc Induced Disjoint Paths} is fixed-parameter tractable for claw-free graphs when parameterized by $k$.
Radovanovi\'c, Trotignon and Vu\v{s}kovi\'c~\cite{RTV19} proved that {\sc Induced Disjoint Paths} can be solved in polynomial time for
(theta,wheel)-free graphs if $k$ is fixed. However, the problem stays \classNP-complete for (theta,wheel)-free graphs if $k$ is part of the input due to the aforementioned \classNP-completeness result for the subclass of line graphs of triangle-free chordless graphs~\cite{RTV19}.

From the above we observe that being planar, claw-free or (theta,wheel)-free does not help with respect to polynomial-time solvability of {\sc Induced Disjoint Paths} when $k$ is part of the input. However, for chordal graphs, {\sc Induced Disjoint Paths} is polynomial-time solvable when $k$ is part of the input, as shown by Belmonte et al.~\cite{BGHHKP12}, whereas we showed that for circular-arc graphs, {\sc Induced Disjoint Paths} is even linear-time solvable~\cite{GPV16}.

More recently, Jaffke, Kwon and Telle~\cite{JKT17} proved that {\sc Induced Disjoint Paths} is polynomial-time solvable for any graph class of bounded mim-width. 
Their results imply that in addition to the above graph classes, the problem is polynomial-time solvable for the following classes:  bi-interval graphs, circular permutation graphs, convex graphs, 
circular $p$-trapezoid graphs, $p$-polygon graphs, Dilworth-$p$ graphs, and co-$p$-degenerate graphs. 

\subsection{Our Main Result}
In this paper, we prove that {\sc Induced Disjoint Paths} is polynomial-time solvable on
a large class of graphs, namely
the class of {\it asteroidal triple-free graphs} (or {\it AT-free graphs}), even if $k$ is part of the input. An {\it asteroidal triple} is a set of three mutually non-adjacent vertices, such that each two of them are joined by a path that avoids the neighbourhood of the third; AT-free graphs are exactly those graphs that contain no such triple~\cite{LB62}.
The class of AT-free graphs is easily recognized in polynomial time, widely studied (see e.g.~\cite{KKM01,Kr00,KM12,KMT03,St10}), and contains many well-known classes, such as cobipartite graphs, cocomparability graphs, cographs, interval graphs, permutation graphs, and $p$-trapezoid graphs (cf.~\cite{CorneilOS97}). 
In contrast to these subclasses, AT-free graphs miss the advantage of having a known geometric intersection model (which can be exploited for the design of polynomial-time algorithms).
The class of AT-free graphs is incomparable to chordal graphs, circular-arc graphs,
and graphs of bounded mim-width,
which were covered by previous work, as mentioned above. Hence, with our main result, we make significant progress on answering the research question.

It is interesting to observe here that {\sc Induced Disjoint Paths} seems to behave substantially different from {\sc Disjoint Paths}. Generally, the induced variant of a containment relation problem is computationally just as hard or even harder than its non-induced variant. 
For example, {\sc Matching} can be solved in polynomial time, but {\sc Induced Matching} is \classNP-hard 
even for bipartite graphs~\cite{Ca89}.
In contrast,
{\sc Disjoint Paths} is \classNP-complete on split graphs~\cite{HHSV15}, but {\sc Induced Disjoint Paths} is polynomial-time solvable on the strictly larger class of chordal graphs~\cite{BGHHKP12}. Similarly, {\sc Disjoint Paths} is \classNP-complete on interval graphs~\cite{NatarajanS96}, but {\sc Induced Disjoint Paths} is polynomial-time solvable on the strictly larger classes of circular-arc graphs~\cite{GPV16} and chordal graphs~\cite{BGHHKP12}. With our main result, we now add the strictly and 
substantially
larger class of AT-free graphs to the list of classes for which {\sc Induced Disjoint Paths} is polynomial-time solvable, but {\sc Disjoint Paths} is not~\cite{NatarajanS96}, unless P$=$\classNP.

In order to solve {\sc Induced Disjoint Paths} in polynomial time on AT-free graphs, we first apply some general preprocessing rules (Section~\ref{sec:preproc}). We then provide a thorough exploration of the structure of AT-free graphs, in particular in relation to the {\sc Induced Disjoint Paths} problem (Section~\ref{sec:struct}). Finally, we use these structural results for a 
dynamic programming algorithm (Section~\ref{sec:dp}). Here we heavily rely on the fact that AT-free graphs have a path that dominates the graph and that AT-free graphs cannot contain long induced cycles, so that we can always focus on the neighbourhood of a constant number of vertices in the graph.

Our approach is substantially different from the approach of Belmonte~\etal\cite{BGHHKP12} for solving this problem in polynomial time on chordal graphs, as it appears that the tree-decomposition-based approach of~\cite{BGHHKP12} does not work for AT-free graphs. 
Our proof techniques are also quite different from the irrelevant vertex technique used by 
Kobayashi and Kawarabayashi~\cite{KK12} for planar graphs, which only work for fixed $k$ 
in contrast to our result, 
and from the proof techniques for claw-free graphs~\cite{FKLP12,GPV12-esa}.
The latter techniques are based on the characterization of claw-free graphs of Chudnovsky and Seymour~\cite{CS05} and also only work when $k$ is 
a fixed constant instead of being part of the input.

\subsection{Applications to Other Containment Relation Problems}\label{s-appp}
We show that our algorithm for {\sc Induced Disjoint Paths} for AT-free graphs can be used as a subroutine to solve two categories of containment relation problems in polynomial-time for AT-free graphs. Recall that a containment relation problem asks to detect if one graph is contained in some other graph by some specified graph containment relation. For general graphs, {\sc Induced Disjoint Paths} has no direct applications for other containment problems, as it is \classNP-complete when $k=2$~\cite{Bi91,Fe89}. However, such applications are possible for graph classes
for which {\sc Induced Disjoint Paths} is polynomial-time solvable, such 
as AT-free graphs.

Below we provide a definition, survey previous work, and state our results for both considered categories of containment relation problems. Proof details can be found in Sections~\ref{s-itm} and~\ref{s-path+tree}.

\medskip
\noindent
{\bf \boldmath$H$-Induced Topological Minor.}
A graph~$G$ contains a graph~$H$ as a {\it topological (induced) minor} if 
$G$ contains an (induced) subgraph that is isomorphic to a subdivision of $H$, that is, to a graph obtained from $H$ by a number of edge subdivisions.
The problems that are to decide whether a given graph contains some fixed graph~$H$ as a topological minor or induced topological minor are 
called {\sc $H$-Topological Minor} and {\sc $H$-Induced Topological Minor}, respectively. 

There is a great disparity in what we know about {\sc $H$-Topological Minor} and {\sc $H$-Induced Topological Minor}. Robertson and Seymour~\cite{RS95} showed that {\sc $H$-Topological Minor} can be solved in polynomial time for any fixed graph~$H$. 
Grohe et al.~\cite{GKMW11} improved this result to cubic time. In contrast, for {\sc $H$-Induced Topological Minor}, L\'{e}v\^{e}que et al.~\cite{LLMT09} obtained some partial results by showing both polynomial-time and \classNP-complete cases. 
In particular they proved that  {\sc $H$-Induced Topological Minor} is \classNP-complete for $H=K_5$.
Afterwards, Le~\cite{Le17} proved that {\sc $H$-Induced Topological Minor} is polynomial-time solvable for $H=K_4$, which was known to be a stubborn case (see also~\cite{LMT12}). However, the complexity classification for {\sc $H$-Induced Topological Minor} is still 
far from complete.

In contrast to the situation for general graphs, we can determine the complexity 
of {\sc $H$-Induced Topological Minor} for any graph~$H$ when we restrict the input to AT-free graphs. 
For doing this, we make explicit use of conditions 2 and~3 for the terminals in the definition of {\sc Induced Disjoint Paths}. 
We first consider the anchored version of $H$-{\sc Induced Topological Minor}. 
This variant is to decide if a  graph~$G$ contains a graph~$H$ as an induced topological minor such that the vertices of $H$ are mapped to specified vertices of $G$. 
We show that the anchored version of $H$-{\sc Induced Topological Minor} can be solved in polynomial time on AT-free graphs even when $H$ is an arbitrary graph that is part of the input.
This result cannot be generalized to the original, non-anchored version of this problem, 
which we show to be \classNP-complete even for a subclass of AT-free graphs, namely for the class of cobipartite graphs. 

However, our result for the anchored version still implies a polynomial-time algorithm for {\sc $H$-Induced Topological Minor} on AT-free graphs as long as $H$ is fixed. This may be the best result one can hope for: besides proving \classNP-completeness, we also prove in Section~\ref{s-itm} that {\sc Induced Topological Minor} is \classW{1}-hard when parameterized by $|V_H|$ even for cobipartite graphs.

We note that by demanding conditions~2 and~3 and following the above approach based on the anchored version of the problem, $H$-{\sc Induced Topological Minor} (for fixed $H$) is polynomial-time solvable for any 
graph class for which {\sc Induced Disjoint Paths} is polynomial-time solvable or becomes polynomial-time solvable after fixing~$k$; see, for example,~\cite{BGHHKP12,FKLP12,JKT17,RTV19} where this has been shown for chordal graphs, claw-free graphs, graphs of bounded mim-width and (theta,wheel)-free graphs, respectively.

\medskip
\noindent
{\bf \boldmath$k$-in-a-Subgraph.}
The problem of detecting an induced subgraph of a certain type containing a set of $k$ specified vertices (called terminals as well) has also been 
extensively studied, in particular for the case where the induced subgraph is required to be a tree, cycle, or path. Then this problem is called  {\sc $k$-in-a-Tree}, {\sc $k$-in-a-Cycle}, and {\sc $k$-in-a-Path}, respectively.
Derhy and Picouleau~\cite{derhy2009}  proved that {\sc $k$-in-a-Tree} is \classNP-complete when $k$ is part of the input even for planar bipartite cubic graphs. Chudnovsky and Seymour~\cite{CS10}  proved that {\sc $3$-in-a-Tree} is polynomial-time solvable.
Very recently, Lai, Lu and Thorup~\cite{LLT19} gave a faster, near-linear algorithm for {\sc $3$-in-a-Tree} and
Gomes et al.~\cite{GSSS}  proved various parameterized complexity results for {\sc $k$-in-a-Tree}.
However, the (classical) complexity of {\sc $k$-in-a-Tree} is still open for every fixed $k\geq 4$.
In contrast, the problems {\sc $2$-in-a-Cycle} and {\sc $3$-in-a-Path} are \classNP-complete; this follows from the aforementioned results of Fellows~\cite{Fe89} and  Bienstock~\cite{Bi91}.

We note that for every fixed $k$, the problems {\sc $k$-in-a-Cycle}, {\sc $k$-in-a-Path} and {\sc $k$-in-a-Tree} can be reduced to a polynomial number of instances of {\sc Induced Disjoint Paths} due to the relaxed conditions~2 and~3 in the problem definition (see, for example,~\cite{RTV19} for details). Hence, all the aforementioned polynomial-time results for
the latter problem restricted to special graph classes carry over to these three $k$-in-a-subgraph problems for every fixed integer $k\geq 1$. 
It is also known that {\sc $k$-in-a-Tree} can be solved in polynomial time for graphs of girth at least $k$~\cite{LT10} even if $k$ is part of the input (see~\cite{DPT09} for the case where $k=3$).
L\'{e}v\^{e}que et al.~\cite{LLMT09} proved that {\sc $2$-in-a-Cycle} is polynomial-time solvable for graphs not containing an induced path or induced subdivided claw on some fixed number of vertices. 
Radovanovi\'c, Trotignon and Vu\v{s}kovi\'c~\cite{RTV19} proved that {\sc $k$-in-a-Cycle} is fixed-parameter tractable for 
(theta,wheel)-free graphs when parameterized by $k$. 

In our paper we add to these positive results by showing 
that the three problems {\sc $k$-in-a-Tree}, {\sc $k$-in-a-Cycle}, and {\sc $k$-in-a-Path}
are polynomial-time solvable on AT-free graphs even when $k$ is part of the input. Note that solving {\sc $k$-in-a-Cycle} in polynomial time is straightforward, because AT-free graphs do not contain induced cycles on six or more vertices. For the remaining two problems, we again apply our main result and make explicit use of conditions 2 and~3 for the terminals in the definition of {\sc Induced Disjoint Paths}. 
The standard brute-force approach that reduces the instance to a polynomial number of instances of {\sc Induced Disjoint Paths} 
will only yield this result for any fixed integer~$k$, because in the worst case it must process all $k!$ orderings of the terminals. However, we are able to give a more direct approach for AT-free graphs, 
based on dynamic programming, that works even when $k$ is part of the input.

\section{Preliminaries}\label{sec:prelim}

Throughout the paper, we consider only finite, undirected graphs without multiple edges and self-loops. We denote the vertex set and edge set of a graph $G$ by $V_G$ and $E_G$, respectively; we may omit subscripts if no confusion is possible.

Let $G=(V,E)$ be a graph.
For $U\subseteq V$, the graph~$G[U]$ denotes the subgraph of $G$ induced by the vertices in $U$.
We denote the (open) neighbourhood of a vertex~$u$ by $N(u)=\{v\; |\; uv\in E\}$ and its closed neighbourhood by $N[u]=N(u)\cup \{u\}$.
We denote the (open) neighbourhood of a set $U\subseteq V$ by $N(U)=\{v\in V\setminus U\; |\;  uv\in E\; \mbox{for some}\; u\in U\}$ and its 
closed neighbourhood by $N[U]=N(U)\cup U$.
We let $d(u)=|N(u)|$ denote the {\it degree} of a vertex~$u$. Whenever it is not clear from the context, we may add an extra subscript $_G$ to these notations.

The {\it length} of a path is its number of edges.
Paths $P_1,\ldots,P_k$ in a graph~$G=(V,E)$ are {\it mutually induced} if any two distinct $P_i$ and $P_j$  have neither common vertices nor adjacent vertices (except perhaps their end-vertices). 
Let $uw\in E$. 
The \emph{edge subdivision} of $uw$ removes $uw$ and adds a new vertex~$v$ with edges $uv$ and $vw$.
A graph~$G'$ is a \emph{subdivision} of $G$ if $G'$ can be obtained from $G$ by a sequence of edge subdivisions.
A graph~$H$ is an \emph{induced topological minor} of $G$, if  $G$ has an induced subgraph that is isomorphic to a subdivision of $H$.

Let $G$ be a graph in which we specify $r$ distinct vertices $p_1,\ldots,p_r$. Let $H$ be a graph in which we specify $r$ distinct vertices $q_1,\ldots,q_r$.
Then $G$ contains $H$ as an induced topological minor {\it anchored} in $(p_1,q_1),\ldots,(p_k,q_k)$ if
$G$ contains an induced subgraph isomorphic to a subdivision of $H$ such that the isomorphism maps $p_i$ to $q_i$ for $i=1,\ldots, r$. 
The graphs $G$ and $H$ may have common vertices. If $p_i=q_i$ for $i=1,\ldots,r$, we speak of ``being anchored in $p_1,\ldots,p_k$'' instead.

A subset $U\subseteq V$ {\it dominates} $G$ if $u\in U$ or $u\in N(U)$ for each $u\in V$.
A pair of vertices $\{x,y\}$  is a \emph{dominating pair} of a graph~$G$ if the vertex  set of every $(x,y)$-path  dominates $G$. A path $P$ dominates a vertex~$u$ if $u\in N[V_P]$. A path $P$ dominates a vertex set $U$ if it dominates each $u\in U$. 
Corneil, Olariu and Stewart proved the following result, which we will use as a lemma.

\begin{lemma}[\cite{CorneilOS97,CorneilOS99}]\label{l-cc}
Every connected AT-free graph has a dominating pair and such a pair can be found in linear time.
\end{lemma}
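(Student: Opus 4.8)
The plan is to establish the two assertions of Lemma~\ref{l-cc} --- existence of a dominating pair and its linear-time computation --- through a single structural engine, namely a breadth-first search refined to a \emph{lexicographic} breadth-first search (LBFS). I would first treat existence using plain BFS, since the argument is conceptually cleanest there, and then upgrade BFS to LBFS to obtain the running-time bound.

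For existence I would use a \emph{double sweep}. Start a BFS from an arbitrary vertex $r$ and let $x$ be a vertex in the last (farthest) layer; then start a BFS from $x$ and let $y$ be a vertex in its last layer. The claim is that $\{x,y\}$ is a dominating pair. To prove it, suppose some $(x,y)$-path $P$ fails to dominate a vertex $w$, that is, $w\notin N[V_P]$. The goal is to exhibit an asteroidal triple, contradicting AT-freeness. The path $P$ already connects $x$ and $y$ while avoiding $N[w]$, so what remains is to build a $w$-to-$x$ path avoiding $N[y]$ and a $w$-to-$y$ path avoiding $N[x]$; here one exploits the eccentricity of $x$ with respect to $r$ and of $y$ with respect to $x$, using the BFS layering to keep these connecting paths clear of the forbidden neighbourhoods. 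After replacing $x$, $y$, $w$ by pairwise non-adjacent representatives where needed, these three paths certify an asteroidal triple, the desired contradiction.

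For the linear-time bound I would replace BFS by LBFS, which can be implemented in $O(n+m)$ time by partition refinement. The reason to prefer LBFS is that its last-visited vertex enjoys much stronger positional properties than a generic BFS endpoint --- it behaves well with respect to minimal separators and to the ordering of the components of $G-N[x]$ --- and it is exactly these properties that make the sweep argument go through. Running a bounded number of LBFS sweeps and, if necessary, one local post-processing pass (as is standard for this class) keeps the total time linear, so the entire procedure runs in $O(n+m)$ time.

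The main obstacle is the correctness of the sweep, that is, proving that the two computed endpoints really form a dominating pair. Two points require genuine care. First, the candidate asteroidal-triple vertices $x$, $y$, $w$ may be mutually adjacent or share neighbours, so one must descend to suitable non-adjacent representatives without destroying the three connecting paths; this is the delicate case analysis. Second, a plain two-sweep BFS need not suffice in general, so the key technical content lies in the LBFS structural lemmas (the ordering of the components of $G-N[x]$ and the admissibility of LBFS endpoints) that guarantee the three required paths avoid the correct neighbourhoods. Establishing those lemmas --- essentially the heart of the Corneil--Olariu--Stewart analysis --- is where the real work is.
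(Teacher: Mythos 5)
The first thing to note is that the paper does not prove Lemma~\ref{l-cc} at all: it is imported verbatim from Corneil, Olariu and Stewart \cite{CorneilOS97,CorneilOS99} and used as a black box. So there is no in-paper argument to compare yours against; the only fair benchmark is whether your proposal would stand on its own as a proof of the cited result.

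As it stands, it would not. What you have written is an accurate roadmap of the Corneil--Olariu--Stewart strategy (double LBFS sweep, partition refinement for the linear-time bound), but every step that actually carries logical weight is deferred. The existence argument hinges on the claim that if an $(x,y)$-path $P$ misses $N[w]$, then one can also produce a $(w,x)$-path avoiding $N[y]$ and a $(w,y)$-path avoiding $N[x]$, ``using the BFS layering to keep these connecting paths clear of the forbidden neighbourhoods'' --- but this is precisely the statement that needs proof, and it is not a routine consequence of BFS layers. Indeed you undercut your own existence argument later by conceding that ``a plain two-sweep BFS need not suffice in general''; if that is so, then the BFS-based existence proof in your second paragraph is not established either, and the whole burden shifts to the LBFS structural lemmas (admissibility of the last-visited vertex, the ordering of the components of $G-N[x]$), which you explicitly do not prove. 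The case analysis for replacing $x$, $y$, $w$ by pairwise non-adjacent representatives is likewise named but not carried out. In short: the architecture matches the literature, but the proposal contains no complete proof of any of the three nontrivial ingredients it identifies, so it cannot be accepted as a proof of the lemma --- only as a correct pointer to where one is found.
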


Throughout the paper,
 we let $T=\bigcup_{i=1}^k \{s_{i},t_{i}\}$ denote the set of terminals, whereas $S=\{(s_1,t_1),\ldots,(s_k,t_k)\}$ denotes the set of terminal pairs.  
Note that a vertex~$u$ may be in more than one terminal pair. 
We say that a terminal~$s_i$ or $t_i$ is {\it represented} by a vertex~$u$ if $u=s_i$ or $u=t_i$, respectively. We also say that $u$ is a {\it terminal vertex}.
If a vertex~$u$ does not represent any terminal, then we call $u$ a {\it non-terminal vertex}.

\section{Induced Disjoint Paths}\label{s-dip}

In this section we present our polynomial-time algorithm for the {\sc Induced Disjoint Paths} problem. Our algorithm has as input a pair $(G,S)$, where $G$ is an AT-free graph and $S$ is a set of terminal pairs.
It consists of the following three phases.

\medskip
\noindent
{\bf Phase 1.} Preprocess $(G,S)$ to derive a number of convenient properties. For instance, afterwards two terminals of the same pair are non-adjacent. 

\medskip
\noindent
{\bf Phase 2.}  Derive a number of  structural properties of $(G,S)$. The algorithm constructs an auxiliary graph~$H$, which is obtained from the subgraph of $G$ induced by the terminal vertices by adding a path of length~2 between each pair of terminals. It then checks whether $H$
satisfies some necessary conditions, such as being AT-free and being an anchored topological minor of $G$.  The latter condition is also shown to be sufficient and demands the construction of another auxiliary graph in order to describe how induced paths connecting terminal pairs 
may interfere with each other.

\medskip
\noindent
{\bf Phase 3.} Perform dynamic programming using the information from 
Phases~1--2.

\subsection{Phase 1: Preprocessing}\label{sec:preproc}

Let $(G,S)$ be an instance of {\sc Induced Disjoint Paths}, where $G=(V,E)$ is an AT-free graph on $n$ vertices with a set $S$ of $k$ terminal pairs $(s_1,t_1),\ldots, (s_k,t_k)$. 
Recall that we assume that the terminal pairs in $S$ are pairwise distinct if we consider them to be unordered.
In other words, although a vertex may represent more than one terminal, we do not allow two terminal pairs to coincide, that is,  $\{s_i,t_i\}\neq\{s_j,t_j\}$ for all $i\neq j$.
We can safely make this assumption, as the algorithm for detecting induced topological minors that uses our algorithm for {\sc Induced Disjoint Paths} as a subroutine does not require two terminal pairs to coincide, as we shall see.  It may happen though  that in this application two terminals of the same pair are the same. However,
we will show that we can easily work around this, and to avoid any further technicalities in our algorithm for {\sc Induced Disjoint Paths}
we assume from now on that this is not the case either. Both assumptions are summarized in the following condition.

\medskip
\noindent
{\bf Condition i)}
For all $i\neq j$, $\{s_i,t_i\}\neq\{s_j,t_j\}$, and for all $i$, $s_i\neq t_i$.

\medskip
\noindent
After imposing Condition~i) on $S$, we continue with 
the following preprocessing steps. We first apply Step~1, then Step~2, and then Step~3, where we perform each step as long as possible.
In the below, $M(u)$ denotes the set of all terminals that form a terminal pair with vertex~$u$.
Recall also that $T=\bigcup_{i=1}^k\{s_i,t_i\}$ denotes the set of terminals. 

\medskip
\noindent{\bf Step~1.} For all $u,v\in T$ with $uv\in E$, remove all common neighbours of $u$ and $v$ that are not terminals from $G$, that is,
remove all vertices of $N(u)\cap N(v)\cap (V\setminus T)$.

\medskip
\noindent{\bf Step~2.} 
Let $U$ denote the set of all terminal vertices $u$ such that $u$ only represents terminals whose partners are in $N(u)$; that is, $U = \{ u \in T \mid M(u)\subseteq N(u) \}$.
Remove $U$ and all non-terminal vertices of $N(U)$ from $G$. Remove also all terminal pairs $(s_j,t_j)$ from $S$ for which $s_j \in U$ or $t_j \in U$.

\medskip
\noindent{\bf Step~3.} For all $(s_i,t_i)$ with $s_it_i\in E$, remove $(s_i,t_i)$ from $S$.

\medskip
\noindent
Observe that Steps 1--3 preserve AT-freeness.
Steps 1--3 are identical to 
Rules~1--3 
in~\cite{GPV12-esa} for claw-free graphs. As such, the correctness of these three steps, as stated in our next lemma, can be proved by using exactly the same arguments as in~\cite{GPV12-esa}. 

\begin{lemma}\label{lem:p1-3}
Applying Steps~1--3 takes polynomial time and results in a new instance $(G',S')$, where $G'$ is an induced (and hence AT-free) 
subgraph of $G$ and $S'\subseteq S$, such that $(G',S')$  
is a yes-instance  of {\sc Induced Disjoint Paths} if and only if $(G,S)$ is a yes-instance of {\sc Induced Disjoint Paths}.
\end{lemma}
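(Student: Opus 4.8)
The plan is to verify that each of the three preprocessing steps is a polynomial-time operation that removes only vertices (producing an induced, hence AT-free, subgraph) and preserves the yes/no answer. Since the statement explicitly appeals to the correctness arguments of the analogous Rules~1--3 for claw-free graphs in~\cite{GPV12-esa}, the bulk of the work is to re-examine those arguments and confirm that they use nothing specific to claw-freeness, so that they transfer verbatim to the AT-free setting. The running-time claim and the structural claims ($G'$ induced, $S'\subseteq S$) are essentially by inspection of the step definitions; the real content is the equivalence of the instances.

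For the \emph{soundness} direction (if $(G',S')$ is a yes-instance then so is $(G,S)$), the key observation is that Steps~2 and~3 only discard terminal pairs that are trivially satisfiable \emph{in place}: in Step~3 a pair $(s_i,t_i)$ with $s_it_i\in E$ is connected by the one-edge induced path, and in Step~2 every terminal in $U$ has all its partners in its own neighbourhood, so each such pair is again realized by a single edge. One must then check that adding these short paths back does not violate the mutual-inducedness conditions~2--3 with the solution paths for $(G',S')$; this is exactly where the removal of the common non-terminal neighbours in Step~1 and of $N(U)$ in Step~2 does its job, by guaranteeing that no solution path for the surviving pairs can run through or alongside the vertices used by these trivial edges. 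For the \emph{completeness} direction (a solution for $(G,S)$ yields one for $(G',S')$), I would take an arbitrary solution and argue that the removed vertices are never needed: a common non-terminal neighbour of two adjacent terminals $u,v$ removed in Step~1 can be shown to be useless because any solution path through it could be short-cut using the edge $uv$ or rerouted, and the vertices removed in Step~2 are separated off from the pairs that remain in $S'$.

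\textbf{The main obstacle} will be Step~2, which is the only step that removes a nontrivial neighbourhood set $N(U)$ and simultaneously deletes terminal pairs: one must argue carefully that removing the non-terminal vertices of $N(U)$ cannot destroy a path needed by a \emph{surviving} pair, i.e.\ that any solution path for a pair in $S'$ can be assumed to avoid $N[U]$. The delicacy is that a vertex of $N(U)$ might a priori lie on such a path, and ruling this out requires using conditions~2 and~3 together with the fact that every vertex of $U$ has all of its terminal partners adjacent to it (so $U$ is, in the relevant sense, ``self-contained''). Once this separation property is established, the interchange argument showing that deleting $U\cup (N(U)\setminus T)$ preserves the answer is routine. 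The remaining verification---that the three steps can be interleaved and repeated polynomially many times (each application strictly decreases $|V|+|S|$, so at most $O(n+k)$ applications occur, each costing polynomial time)---I would dispatch in one sentence. Since the excerpt authorizes reusing the arguments of~\cite{GPV12-esa} wholesale, the write-up mainly needs to point out that claw-freeness is nowhere invoked in those arguments and that AT-freeness is preserved simply because all three steps take induced subgraphs.
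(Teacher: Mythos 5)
Your proposal matches the paper's treatment: the paper gives no further detail for this lemma, observing only that Steps~1--3 coincide with Rules~1--3 of~\cite{GPV12-esa} and that the correctness arguments there carry over verbatim (since they nowhere use claw-freeness and the steps only pass to induced subgraphs and subsets of $S$), which is precisely the transfer you describe, including the correct identification of Step~2 as the delicate case. One small refinement: the reason a vertex removed in Step~1 is dispensable is not that a path through it could be ``short-cut using the edge $uv$ or rerouted,'' but that conditions~2--4 already force such a common non-terminal neighbour of two adjacent terminals never to occur as an inner vertex of any solution path in the first place.
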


For convenience we denote the obtained instance by $(G,S)$ as well, and we also assume that $|S|=k$.  

\begin{figure}[ht]
\centering\scalebox{0.75}{\input{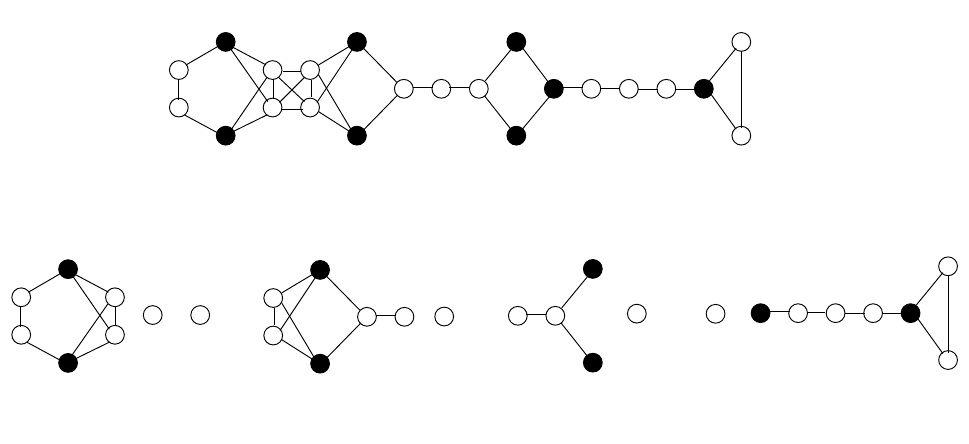_t}}
\caption{A graph $G$ with four terminal pairs and the corresponding graphs $G_1,\ldots,G_4$. 
\label{fig:Gi}}
\end{figure}

For $i=1,\ldots,k$, let $G_i$ denote the subgraph obtained from $G$ after removing all terminal vertices not equal to $s_i$ or $t_i$, together with all of their 
neighbours  not equal to $s_i$ or $t_i$ (should $s_i$ or $t_i$ be adjacent to a terminal from some other pair), that is, $G_i$ is the subgraph of $G$
induced by $(V\setminus(\bigcup_{v\in T\setminus \{s_i,t_i\}}N[v]))\cup\{s_i,t_i\}$ (see Fig.~\ref{fig:Gi}).
The following lemma is straightforward to see.

\begin{lemma}\label{lem:cut}
If $s_i$ and $t_i$ are in different connected components of $G_i$ for some $1\leq i\leq k$, then $(G,S)$ is a no-instance of {\sc Induced Disjoint Paths}.
\end{lemma}
Hence, we can add the following preprocessing step, which we can perform in polynomial time.

\medskip
\noindent{\bf Step~4.} If some $s_i$ and $t_i$ are in two different connected components of  
$G_i$, then return {\tt no}.

\medskip
\noindent
Summarizing, applying Steps~1--4 takes polynomial time and results in the 
following additional conditions for our instance:

\medskip
\noindent
{\bf Condition ii)}  The terminals of each pair $(s_i,t_i)$ are not adjacent.

\medskip
\noindent
{\bf Condition iii)}
 The terminals of each pair $(s_i, t_i)$ are in the same connected component of $G_i$.

\subsection{Phase 2: Obtaining Structural Results}\label{sec:struct}

Let $G$ be an AT-free graph with a set $S=\{(s_1,t_1),\ldots,(s_k,t_k)\}$ of terminal pairs  satisfying Conditions~i)--iii). 
In $G[T]$,
there is no edge between any two terminals of the same pair due to Condition~ii).
We extend $G[T]$
by joining the terminals of each pair $(s_i,t_i)$ by a path $P_i$ of length~2, that is, for each pair $(s_i,t_i)$ we introduce a new vertex that we make adjacent only to 
$s_i$ and $t_i$. We denote the resulting graph by $H$; see Fig.~\ref{fig:GH} for an example.
Note that $G[T]$ is an induced subgraph of $H$.
The inner vertex of each $P_i$ is called a \emph{path-vertex}, and the two edges of each $P_i$ are called \emph{path-edges}.

\begin{figure}[ht]
\centering\scalebox{0.75}{\input{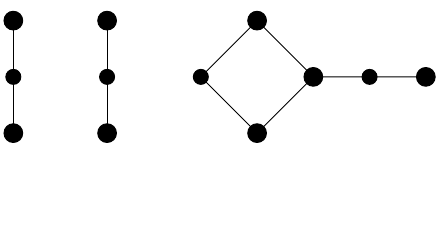_t}}
\caption{The graph $H$ that is constructed from the graph~$G$ shown in Fig.~\ref{fig:Gi}. 
\label{fig:GH}}
\end{figure}

We now prove the following lemma; note that in this lemma $H'=H$ is possible.

\begin{lemma}\label{lem:H}
The pair $(G,S)$ is a yes-instance of {\sc Induced Disjoint Paths} if and only if 
$G$ contains an induced subgraph that is isomorphic to $H'$, where $H'$ is obtained from $H$ after subdividing a number of path-edges one or more times, such that the isomorphism maps every terminal of $G$ to the same terminal of $H$.
\end{lemma}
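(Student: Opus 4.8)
The plan is to prove both implications by exhibiting an explicit correspondence between a solution $P_1,\dots,P_k$ for $(G,S)$ and an induced copy of a subdivision $H'$ of $H$. For the forward implication I would start from a solution, set $V'=\bigcup_{i=1}^{k} V(P_i)$, and argue that $G[V']$ is isomorphic to such an $H'$ via an isomorphism fixing every terminal. For the reverse implication I would start from an induced subgraph $G'$ of $G$ isomorphic to some $H'$ through an isomorphism $\phi$ that fixes the terminals, and read off the paths $Q_i:=\phi^{-1}(\text{path } i \text{ of } H')$ as a candidate solution, checking conditions~1--4.

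For the forward direction I first observe that, as the two terminals of each pair are non-adjacent (condition~ii), every $P_i$ has length $\ell_i\ge 2$, and that no inner vertex of a $P_i$ is a terminal: otherwise it would be a vertex shared with another path that is not an end-vertex of $P_i$, contradicting condition~2. Hence the terminals occurring in $V'$ are exactly $T$, and by condition~2 the inner vertices of distinct paths are pairwise distinct. The crux is then to classify the edges of $G[V']$. An edge between two terminals lies in $G[T]=H[T]$. An edge incident to an inner vertex $u$ of $P_i$ must go to a vertex $w$ lying on $P_i$: if $w$ lay on some $P_j$ with $j\neq i$ but not on $P_i$, condition~3 would force $w$ to be an end-vertex of $P_i$, a contradiction; and since $P_i$ is induced (condition~4), $w$ is the path-neighbour of $u$, so the edge is a path-edge. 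Thus every edge of $G[V']$ is an edge of $G[T]$ or a path-edge of some $P_i$. Taking $H'$ to be the subdivision of $H$ in which path $i$ has length $\ell_i$ (distributing the $\ell_i-2\ge 0$ subdivisions over its two path-edges), the identity on $T$ together with any length-preserving bijection between inner vertices yields the isomorphism $G[V']\cong H'$.

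For the reverse direction the key leverage is that $G'$ is an \emph{induced} subgraph, so two vertices of $G'$ are adjacent in $G$ precisely when their $\phi$-images are adjacent in $H'$. In $H'$ the path-vertices and subdivision vertices have degree~2 and the paths are pairwise internally disjoint, so an internal vertex of path $i$ is adjacent only to vertices of path $i$, and two paths meet only in a common end-vertex; moreover condition~i) gives $\{s_i,t_i\}\neq\{s_j,t_j\}$, so two paths share at most one terminal. Pulling this back through $\phi^{-1}$ and invoking inducedness to rule out extra edges, the paths $Q_i$ satisfy condition~1 (they run between $s_i$ and $t_i$ since $\phi$ fixes terminals), condition~2 (they meet only in a common end-vertex), condition~3 (an inner vertex of $Q_i$ is adjacent to $Q_j$ only at a common end-vertex), and condition~4 (each $Q_i$ is induced, since each path of $H'$ is induced in $H'$, using $s_it_i\notin E$ by condition~ii). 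Hence $(G,S)$ is a yes-instance.

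I expect the main obstacle, in both directions, to be the bookkeeping around \emph{shared terminals}: since a single vertex may represent terminals of several pairs, the corresponding paths legitimately meet at such a vertex and the inner vertices next to it are legitimately adjacent to it. The delicate point is to match these permitted coincidences exactly to the exceptions written into conditions~2 and~3, and to show that no further vertex coincidences or adjacencies can arise --- which is precisely where the inducedness of $G'$, the inducedness of each $P_i$, and condition~i) enter.
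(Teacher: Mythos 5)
Your proposal is correct and follows essentially the same route as the paper: in the forward direction you take the union of $G[T]$ with the solution paths and verify it induces the required subdivision $H'$ (the paper's one-line observation that inner vertices of $P_i$ can only be adjacent to terminal vertices representing $s_i$ or $t_i$ is exactly your edge-classification step), and the reverse direction is the pull-back through the isomorphism that the paper dismisses as "readily seen." Your write-up is simply a more detailed version of the paper's terse proof, with the uses of conditions i), ii), and 2--4 made explicit.
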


\begin{proof}
Suppose that $(G,S)$ is a yes-instance of {\sc Induced Disjoint Paths} and let $P_1,\ldots,P_k$ be a solution to $(G,S)$. Observe that the only terminal vertices to which any internal vertices of a path $P_i$ are adjacent are the terminal vertices that represent $s_i$ and $t_i$. Hence, the union of $G[T]$ and the paths $P_1,\ldots,P_k$ forms a graph $H'$ that satisfies the conditions of the lemma. The converse is readily seen.
\qed\end{proof}

Lemma~\ref{lem:H} implies that the pair $(G,S)$ is a yes-instance of {\sc Induced Disjoint Paths} if and only if $G$ contains $H$ as an induced topological minor anchored in the terminals of~$T$.
Moreover, Lemma~\ref{lem:H} immediately implies the next lemma.

\begin{lemma}\label{lem:H-AT}
If  $(G,S)$ is a yes-instance of {\sc Induced Disjoint Paths} then $H$ is AT-free.
\end{lemma}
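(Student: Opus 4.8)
The plan is to chain together Lemma~\ref{lem:H}, the heredity of AT-freeness, and the fact that subdividing an edge can never destroy an asteroidal triple. First I would invoke Lemma~\ref{lem:H}: if $(G,S)$ is a yes-instance, then $G$ contains an induced subgraph $G'$ isomorphic to some $H'$, where $H'$ is obtained from $H$ by subdividing path-edges. Since $G$ is AT-free and AT-freeness is hereditary (an asteroidal triple of an induced subgraph is also one of the whole graph, as the relevant non-adjacencies and neighbourhoods are inherited), the induced subgraph $G'$ is AT-free, and hence so is the isomorphic graph $H'$.

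It then remains to transfer AT-freeness from $H'$ back to $H$; that is, to show that unsubdividing the path-edges cannot create an asteroidal triple. I would prove the contrapositive of a single subdivision step: if a graph $F$ has an asteroidal triple $\{a,b,c\}$, then so does the graph $F'$ obtained from $F$ by subdividing one edge $e=pq$ with a new vertex $z$. The vertices $a,b,c$ survive in $F'$ and stay pairwise non-adjacent, since subdivision only deletes $pq$ and adds edges incident to the new vertex $z\notin\{a,b,c\}$. For each connecting path, say an $(a,b)$-path $\pi$ avoiding $N_F[c]$, I would produce an $(a,b)$-path in $F'$ avoiding $N_{F'}[c]$: if $\pi$ does not traverse $e$, it is reused verbatim (one checks that $N_{F'}[c]\setminus\{z\}\subseteq N_F[c]$ and $z\notin V(\pi)$); if $\pi$ does traverse $e$, then $p,q\in V(\pi)$ avoid $N_F[c]$, forcing $c\notin\{p,q\}$, so $z\notin N_{F'}[c]$ and rerouting $\pi$ through $p-z-q$ yields the desired path. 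Applying this step once for each subdivision of each path-edge shows that $H'$ AT-free implies $H$ AT-free, which completes the proof.

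The main obstacle is the bookkeeping in this subdivision step: the case analysis on whether the omitted vertex $c$ is an endpoint of the subdivided edge, and on whether the connecting path traverses that edge. The key observation that makes everything go through is that a path avoiding $N_F[c]$ can never pass through an endpoint of an edge incident to $c$, so the new subdivision vertex $z$ lies safely outside $N_{F'}[c]$ precisely in the case where it is needed for rerouting.
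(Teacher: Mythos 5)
Your proof is correct and follows the same route as the paper, which simply states that Lemma~\ref{lem:H} ``immediately implies'' the result, leaving implicit exactly the two facts you spell out: AT-freeness is hereditary (so $H'$ is AT-free), and subdividing edges cannot destroy an asteroidal triple (so $H$ is AT-free). Your careful case analysis of the single-subdivision step is a valid and complete justification of the second fact.
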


Lemma~\ref{lem:H-AT} yields the following step,
which can be straightforwardly implemented in polynomial time by checking for the existence of an asteroidal triple.

\medskip
\noindent{\bf Step~5.} If $H$ is not an AT-free graph, then return {\tt no}.

\medskip
\noindent
From now on, we assume that $H$ is AT-free. We deduce the following two lemmas.

\begin{lemma}\label{lem:triangle}
The graph~$G[T]$ is triangle-free.
\end{lemma}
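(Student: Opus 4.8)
The plan is to argue by contradiction. Suppose $G[T]$ contains a triangle on vertices $a,b,c$, and exploit the assumption that $H$ is AT-free (guaranteed by Step~5, cf.\ Lemma~\ref{lem:H-AT}) to produce an asteroidal triple in $H$, which is the desired contradiction. The idea is that the three ``path-vertices'' introduced in the construction of $H$ can be connected pairwise through the triangle edges while staying far from the third path-vertex.

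First I would record the consequences of preprocessing. Since $a,b,c$ are pairwise adjacent in $G[T]$, condition~ii) forbids any two of them from being the two terminals of a single pair (the terminals of a pair are non-adjacent). Each of $a,b,c$ nonetheless lies in at least one terminal pair, so I fix one such pair for each, say $(a,a')$, $(b,b')$, $(c,c')$. These are three \emph{distinct} pairs (two triangle vertices cannot share a pair), so the associated path-vertices $x_a,x_b,x_c$ of $H$ are three distinct vertices; moreover the chosen partners satisfy $a'\notin\{b,c\}$, $b'\notin\{a,c\}$, and $c'\notin\{a,b\}$, since otherwise two adjacent triangle vertices would form a pair, contradicting condition~ii). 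I would state these distinctness facts explicitly before building the paths, so that the later verifications become immediate.

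Next I would show that $\{x_a,x_b,x_c\}$ is an asteroidal triple of $H$. The three path-vertices are pairwise non-adjacent, because in $H$ a path-vertex is adjacent only to the two terminals of its own pair. For each pair of them I exhibit a connecting path through a triangle edge that avoids the closed neighbourhood of the third: the path $x_a,a,b,x_b$ avoids $N[x_c]=\{x_c,c,c'\}$; the path $x_b,b,c,x_c$ avoids $N[x_a]=\{x_a,a,a'\}$; and the path $x_a,a,c,x_c$ avoids $N[x_b]=\{x_b,b,b'\}$. Each avoidance check reduces to the distinctness facts above (for instance $a,b\neq c$ is trivial and $a,b\neq c'$ is the partner observation, while the path-vertices never equal a terminal). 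Hence $H$ contains an asteroidal triple, contradicting Lemma~\ref{lem:H-AT}, and so $G[T]$ must be triangle-free.

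The main obstacle I anticipate is the bookkeeping caused by a single vertex representing several terminals: I must guarantee that selecting one pair per triangle vertex genuinely yields three distinct path-vertices and that no chosen partner coincides with another triangle vertex. This is precisely where condition~ii) is indispensable, and handling it cleanly is just a matter of collecting the partner-distinctness facts up front rather than any genuinely delicate argument.
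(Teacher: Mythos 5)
Your proposal is correct and follows essentially the same route as the paper's proof: select one terminal pair per triangle vertex (condition~ii) guarantees the three pairs are distinct and their partners lie outside the triangle), and show the three corresponding path-vertices form an asteroidal triple in $H$, contradicting Lemma~\ref{lem:H-AT}. Your write-up merely makes explicit the connecting paths and distinctness checks that the paper leaves implicit.
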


\begin{proof}
In order to obtain a contradiction, assume that $G[T]$ contains a triangle. Then $H[T]$ also contains this triangle.
By Condition~ii), each vertex of the triangle is contained in at least one terminal pair that contains a vertex outside the triangle. Then let $(s_i,t_i), (s_{j},t_{j}), (s_{r},t_{r})$ be three terminal pairs for which, without loss of generality, $s_{i},s_{j},s_{r}$ induce the triangle. By Condition~ii), $s_i,s_j,s_r$ are not adjacent to $t_i,t_j,t_r$, respectively.
In $H$, let  $x_i,x_j,x_r$ be the path-vertices of the paths $P_i,P_j,P_r$, respectively. Recall that path-vertices
have degree~2.
Then $x_i,x_j,x_r$ form an asteroidal triple in $H$, a contradiction to the AT-freeness of $H$.
\qed 
\end{proof}

\begin{lemma}\label{lem:degree}
Every vertex~$u\in T$ is included in at most five terminal pairs.
\end{lemma}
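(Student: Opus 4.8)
The plan is to fix a vertex $u\in T$ contained in $m$ terminal pairs, say pairing $u$ with partners $v_1,\ldots,v_m$, and to prove $m\le 5$ by applying Ramsey's theorem to the graph $G[\{v_1,\ldots,v_m\}]$. By condition~i) the pairs are distinct and $s_\ell\ne t_\ell$, so $v_1,\ldots,v_m$ are $m$ pairwise distinct vertices, each different from $u$; by condition~ii) none of them is adjacent to $u$. Since the number of pairs containing $u$ equals the number of distinct partners, it suffices to bound $m$.

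First I would show that no three partners are pairwise non-adjacent. Suppose, for contradiction, that $v_i,v_j,v_r$ are mutually non-adjacent. Applying condition~iii) to the terminal pair formed by $u$ and $v_\ell$, the vertices $u$ and $v_\ell$ lie in the same component of the corresponding graph $G_i$, so there is a path $Q_\ell$ in $G$ from $u$ to $v_\ell$ whose internal vertices avoid $N[w]$ for every terminal $w\notin\{u,v_\ell\}$. In particular, the internal vertices of both $Q_i$ and $Q_j$ avoid $N[v_r]$; moreover $u\notin N[v_r]$ (as $u$ is non-adjacent to and distinct from $v_r$ by~ii)) and $v_i,v_j\notin N[v_r]$ (by the non-adjacency assumption and distinctness of the partners). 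Hence the walk obtained by concatenating the reverse of $Q_i$ with $Q_j$ is a $v_i$--$v_j$ walk avoiding $N[v_r]$, and therefore contains a $v_i$--$v_j$ path avoiding $N[v_r]$. By symmetry the analogous connecting paths exist for the pairs $\{v_i,v_r\}$ and $\{v_j,v_r\}$, so $\{v_i,v_j,v_r\}$ is an asteroidal triple in $G$, contradicting the AT-freeness of $G$. Thus $G[\{v_1,\ldots,v_m\}]$ contains no independent set of size~$3$.

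Next, since $v_1,\ldots,v_m\in T$, Lemma~\ref{lem:triangle} tells us that $G[\{v_1,\ldots,v_m\}]$ is triangle-free, hence it also contains no clique of size~$3$. A graph on at least $R(3,3)=6$ vertices must contain either a triangle or an independent set of size~$3$; as $G[\{v_1,\ldots,v_m\}]$ contains neither, it has at most $5$ vertices, giving $m\le 5$.

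The main obstacle is the construction in the first claim: one must check carefully, using the explicit description of $G_i$, that condition~iii) yields $u$--$v_\ell$ paths whose internal vertices miss $N[v_r]$ (and, symmetrically, $N[v_i]$ and $N[v_j]$), so that concatenating two of them produces a connecting path that genuinely avoids the closed neighbourhood of the third partner, as required by the definition of an asteroidal triple. Once this is established, the triangle-free property from Lemma~\ref{lem:triangle} together with the Ramsey bound $R(3,3)=6$ completes the argument at once.
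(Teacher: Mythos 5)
Your proof is correct, and it follows the paper's overall skeleton --- apply the Ramsey bound $R(3,3)=6$ to the partners of $u$, kill the triangle case with Lemma~\ref{lem:triangle}, and derive an asteroidal triple in the independent-set case --- but it handles that last case by a genuinely different mechanism. The paper stays entirely inside the auxiliary graph $H$: the three pairwise non-adjacent partners are joined to $u$ by the artificial length-2 paths through the degree-2 path-vertices, these paths are mutually induced essentially for free, and the contradiction is with the AT-freeness of $H$, which at that point is a checked condition (Step~5). You instead return to $G$ itself: condition~iii) supplies a $(u,v_\ell)$-path inside $G_\ell$ whose internal vertices avoid the closed neighbourhood of every terminal other than $u$ and $v_\ell$, and concatenating two such paths at $u$ yields the connecting paths of an asteroidal triple among the three partners in $G$, contradicting the AT-freeness of $G$ directly. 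The details you flag --- that the endpoints $u$ and the partners themselves lie outside $N[v_r]$, via condition~ii), the distinctness of partners forced by condition~i), and the assumed pairwise non-adjacency --- are exactly what is needed for the concatenation argument, and you verify them. The trade-off is that the paper's version needs only the auxiliary graph $H$ and its AT-freeness, whereas yours needs condition~iii) (reachability in the $G_\ell$'s) together with the AT-freeness of the input graph $G$; both sets of hypotheses are available at this point in the paper, so either route is legitimate, with the paper's being marginally shorter because the path-vertices make mutual inducedness immediate.
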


\begin{proof}
In order to obtain a contradiction, assume that $T$ contains a vertex~$u$ that is included in six terminal pairs $(s_{i_1},t_{i_1}),\ldots,(s_{i_6},t_{i_6})$. 
We assume without loss of generality that
$u=s_{i_1}=\ldots=s_{i_6}$. 
Among the vertices $t_{i_1},\ldots,t_{i_6}$, it follows from the Ramsey number $R(3,3)=6$ that we have either three pairwise adjacent or three pairwise non-adjacent vertices. In the first case, $G[T]$ contains a triangle. However, this is not possible  due to Lemma~\ref{lem:triangle}. In the second case, 
let  $t_{i_1},t_{i_2},t_{i_3}$ denote the three pairwise non-adjacent vertices. 
Then we find that $t_{i_1},t_{i_2},t_{i_3}$ form an asteroidal triple in $H$, because  $t_{i_1},t_{i_2},t_{i_3}$ are joined to $u$ by three mutually induced paths $P_{i_1},P_{i_2},P_{i_3}$, a contradiction to the AT-freeness of $H$. 
\qed 
\end{proof}

Let $H_1,\ldots,H_r$ be the connected components of $H$. 
We observe that the terminals of each pair are in the same connected component of $H$ due to the paths $P_i$.
Hence, we can define the set $S_i\subseteq S$ of terminal pairs in a connected component $H_i$. We write $T_i=V_{H_i}\cap T$ to denote the set of terminals in $H_i$.

\begin{lemma}\label{lem:DP}
Each $H_i$ has a dominating pair $\{x_i,y_i\}$ with $x_i,y_i\in T_i$. Moreover, such a dominating pair can be found in linear time.
\end{lemma}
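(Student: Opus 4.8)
The plan is to invoke Lemma~\ref{l-cc} to obtain a dominating pair of $H_i$ and then repair it so that both endpoints become terminals. Since $H$ is AT-free, every component $H_i$ is a connected AT-free graph, so Lemma~\ref{l-cc} yields a dominating pair $\{x,y\}$ of $H_i$ in linear time. It therefore suffices to establish the following replacement claim: if $\{x,y\}$ is a dominating pair of $H_i$ and $y$ is the path-vertex of a pair $(s_j,t_j)$, then at least one of $\{x,s_j\}$ and $\{x,t_j\}$ is again a dominating pair. Granting the claim, I would apply it at most twice: once to turn a path-vertex endpoint $y$ into a terminal, and then, treating the new terminal endpoint as the fixed vertex, once more to the other endpoint if it too is a path-vertex. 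This produces a dominating pair $\{x_i,y_i\}\subseteq T_i$.

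The first key step is a characterization of when a replacement works, exploiting that $y$ has degree exactly $2$ with $N(y)=\{s_j,t_j\}$. Given any $(x,s_j)$-path $Q$ that avoids $y$, appending $y$ through the edge $s_jy$ yields an $(x,y)$-path, whose vertex set dominates $H_i$; since $y$ dominates only $s_j,t_j,y$ and $s_j\in V_Q$, the set $V_Q$ already dominates all of $H_i$ except possibly $t_j$. As $y\in N[t_j]$, any $(x,s_j)$-path avoiding $N[t_j]$ also avoids $y$, and one checks that $(x,s_j)$-paths through $y$ always dominate $H_i$. Hence $\{x,s_j\}$ is a dominating pair if and only if there is no $(x,s_j)$-path avoiding $N[t_j]$, that is, if and only if $x$ and $s_j$ lie in different connected components of $H_i-N[t_j]$; the analogous statement holds for $\{x,t_j\}$ with the roles of $s_j$ and $t_j$ swapped. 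Each such condition is a single connectivity test, which is what will make the whole procedure run in linear time.

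With this characterization the claim follows from a short case analysis on the adjacency of $x$ to $\{s_j,t_j\}$, after first disposing of the degenerate situation $x\in\{s_j,t_j\}$ (treated below). If $x$ is adjacent to $s_j$, then every $(x,t_j)$-path meets $N[s_j]$ already at $x$, so no $(x,t_j)$-path avoids $N[s_j]$ and $\{x,t_j\}$ is a dominating pair; the case $x\sim t_j$ is symmetric. The remaining case, in which $x$ is adjacent to neither $s_j$ nor $t_j$, is the one I expect to be the main obstacle. Suppose for contradiction that both replacements fail; then there exist an $(x,s_j)$-path $Q_0$ avoiding $N[t_j]$ and an $(x,t_j)$-path $R_0$ avoiding $N[s_j]$. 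Since $x\notin\{s_j,t_j\}$ and $N(y)=\{s_j,t_j\}$, we have $x\not\sim y$, so the path $s_j\,y\,t_j$ avoids $N[x]$. Together with $s_j\not\sim t_j$ (condition~ii)) and $x$ non-adjacent to both, the triple $\{s_j,t_j,x\}$ is then an asteroidal triple of $H_i$, witnessed by $s_j\,y\,t_j$, by $Q_0$, and by $R_0$, contradicting the AT-freeness of $H$. Hence at least one replacement succeeds.

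It remains to handle the degenerate cases cheaply. If $x\in\{s_j,t_j\}$, say $x=s_j$, then the single-edge $(s_j,y)$-path forces $N[s_j]\cup N[y]=V_{H_i}$, and since $N[y]\subseteq N[s_j]\cup N[t_j]$ this makes $\{s_j,t_j\}$ a dominating pair directly. Finally, should the two terminals produced by the two successive replacements coincide, which can happen because a vertex may belong to several pairs, the affected path-vertex is then adjacent to the already-fixed terminal endpoint, so we are in the degenerate case just discussed and may instead use its other terminal neighbour, again a dominating pair by the claim. Putting everything together, one application of Lemma~\ref{l-cc} followed by a constant number of connectivity checks produces, in linear time, a dominating pair $\{x_i,y_i\}$ with $x_i,y_i\in T_i$, as desired.
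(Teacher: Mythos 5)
Your proof is correct and follows essentially the same route as the paper's: obtain a dominating pair of the connected AT-free component $H_i$ via Lemma~\ref{l-cc}, then replace each path-vertex endpoint by one of its two terminal neighbours, showing that if both replacements failed, the two witnessing paths (one avoiding $N[t_j]$, one avoiding $N[s_j]$) together with the length-two path through the path-vertex would make $\{s_j,t_j,x\}$ an asteroidal triple, and deciding which replacement works by the same single connectivity test. Your explicit treatment of the degenerate cases ($x\in\{s_j,t_j\}$ and coinciding replacement terminals) is slightly more careful than the paper's, but the argument is the same.
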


\begin{proof}
Each connected
component $H_i$ is a connected AT-free graph. Hence, $H_i$ has a dominating pair of vertices due to Lemma~\ref{l-cc}. Moreover, by the same lemma, there exists a linear-time algorithm that finds such a dominating pair $\{u,v\}$. 
If both $u$ and $v$ are in $T_i$, then we are done.
Suppose that this is not the case, say $u\notin T_i$. Then
$u$ must be the path-vertex of some $(s_j,t_j)$-path $P_j$ in $H_i$.

We claim that $\{s_j,v\}$ or else $\{t_j,v\}$ is a dominating pair. This can be seen as follows.
If $\{s_j,v\}$ is not a dominating pair, then there is an $(s_j,v)$-path $Q_1$ that does not dominate $H_i$. 
This path $Q_1$ does not use vertex~$u$; otherwise it would contain a subpath from $u$ to $v$, which dominates
$H_i$ because $\{u,v\}$ is a dominating pair. Hence, we can extend $Q_1$ by $u$ to obtain the path $uQ_1$.
This path is an $(u,v)$-path. Consequently, $uQ_1$ dominates $H_i$. As $u$ is only adjacent to $s_j$ and $t_j$, and $Q_1$ does not dominate $H_i$, we find that $Q_1$ does not dominate $t_j$. In other words, 
$Q_1$ avoids $t_j$ and the neighbourhood of $t_j$. By the same arguments, there exists a $(t_j,v)$-path $Q_2$ that avoids $s_j$ and the neighbourhood of $s_j$. Condition~ii) tells us that  $s_j$ and $t_j$ are not adjacent. As $v\in V_{Q_1}\cup V_{Q_2}$, we find that $v$ is not adjacent to $s_j,t_j$. Moreover, $u$ is only adjacent to $s_j$ and $t_j$.  Hence, $s_j,t_j,v$ form an asteroidal triple, a contradiction to the AT-freeness of $H_i$.  

From the above, it is clear that we can decide in linear time if $\{s_j,v\}$ or $\{t_j,v\}$ is a dominating pair.
We only have to check whether there exists an $(s_j,v)$-path that avoids $t_j$ and the neighbourhood of $t_j$. If not, then $\{s_j,v\}$ is a dominating pair, and otherwise $\{t_j,v\}$ is a dominating pair. If $v\notin T_i$, then $v$ is a path-vertex, and we can repeat the same arguments as we used for $u$ to replace $v$ by one of its neighbours in $T_i$. This competes the proof of Lemma~\ref{lem:DP}.
\qed 
\end{proof}

For the dominating pairs $\{x_i,y_i\}$ found by Lemma~\ref{lem:DP},
we compute in linear time a shortest $(x_i,y_i)$-path $D_i$ in $H_i$ for $i=1,\ldots,r$. 
The next two lemmas 
show a number of properties of these paths $D_i$.

\begin{lemma}\label{lem:Pi-term}
Each $D_i$ contains at least one of the terminals of every pair in $S_i$.
\end{lemma}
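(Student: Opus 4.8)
The plan is to exploit two facts together: that $D_i$, being an $(x_i,y_i)$-path for the \emph{dominating} pair $\{x_i,y_i\}$ of $H_i$, must dominate all of $H_i$; and that the path-vertices of $H$ have the rigid degree-$2$ structure built into the definition of $H$.

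First I would fix an arbitrary pair $(s_j,t_j)\in S_i$ and let $p_j$ denote its path-vertex, i.e.\ the inner vertex of the length-$2$ path $P_j$ joining $s_j$ and $t_j$ in $H$. By construction $p_j$ has degree exactly $2$ in $H$, with $N(p_j)=\{s_j,t_j\}$. Next I would invoke the dominating-pair property: since $\{x_i,y_i\}$ is a dominating pair of $H_i$ by Lemma~\ref{lem:DP}, and $D_i$ is an $(x_i,y_i)$-path, $D_i$ dominates $H_i$. In particular $p_j$ is dominated, so $p_j\in N[V_{D_i}]$, which means either $p_j\in V_{D_i}$ or $p_j$ is adjacent to some vertex of $D_i$.

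A short case analysis then closes the argument. If $p_j$ is merely adjacent to $D_i$, then because $N(p_j)=\{s_j,t_j\}$, one of $s_j,t_j$ must lie on $D_i$. If instead $p_j\in V_{D_i}$, I would use that the endpoints $x_i,y_i$ of $D_i$ are terminals (again by Lemma~\ref{lem:DP}), whereas $p_j$ is a path-vertex and hence not a terminal; thus $p_j$ cannot be an endpoint of $D_i$, so it is an internal vertex whose two neighbours along $D_i$ must both lie in $N(p_j)=\{s_j,t_j\}$, placing both $s_j$ and $t_j$ on $D_i$. Either way $D_i$ contains a terminal of $(s_j,t_j)$, as required.

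There is no genuine obstacle here; the one point needing care is the observation that the endpoints of $D_i$ are terminals rather than path-vertices, which is precisely what Lemma~\ref{lem:DP} supplies and which rules out the degenerate situation of $p_j$ sitting at an end of $D_i$ with only one of its two neighbours used. I note that the proof does not actually use that $D_i$ is a \emph{shortest} $(x_i,y_i)$-path; it uses only that $D_i$ is some $(x_i,y_i)$-path, so that the dominating-pair property applies.
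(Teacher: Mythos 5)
Your proposal is correct and follows essentially the same route as the paper: $D_i$ dominates $H_i$ because $\{x_i,y_i\}$ is a dominating pair, the path-vertex of $P_j$ is dominated, and its only neighbours are $s_j$ and $t_j$, so one of them lies on $D_i$. The paper's version is terser (it leaves implicit the subcase where the path-vertex itself lies on $D_i$), but your explicit handling of that subcase via the fact that the endpoints $x_i,y_i$ are terminals is exactly the right justification and matches what Lemma~\ref{lem:DP} provides.
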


\begin{proof}
The path $D_i$ dominates all vertices of $H_i$. In particular, it dominates the path-vertex of the $(s_j,t_j)$-path for each $(s_j,t_j)\in S_i$. 
This vertex is only adjacent to $s_j$ and $t_j$.
Hence, $s_j\in V_{D_i}$ or $t_j\in V_{D_i}$. \qed
\end{proof}

\begin{lemma}\label{lem:Pi-deg}
Each vertex of $D_i$ is adjacent to at most five path-vertices of $H_i$ that are not on $D_i$ and to at most two terminals that are not on $D_i$.
\end{lemma}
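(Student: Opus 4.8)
The plan is to treat the two bounds separately, since they draw on different earlier results. Fix a vertex $u$ on $D_i$. Every vertex of $H_i$ is either a terminal or a path-vertex, so there are only two cases for $u$. If $u$ is a path-vertex, it has degree $2$ in $H_i$ and both its neighbours are terminals, so both claims hold trivially; hence I would assume from now on that $u$ is a terminal. For the path-vertex bound, observe that a path-vertex is adjacent to $u$ exactly when it is the inner vertex of the path~$P_j$ for some pair $(s_j,t_j)$ containing $u$, and distinct such pairs yield distinct path-vertices. Thus the number of path-vertices adjacent to $u$ equals the number of pairs containing $u$, which is at most five by Lemma~\ref{lem:degree}; in particular at most five of them lie off $D_i$.

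For the terminal bound I would argue by contradiction, assuming that $u$ is adjacent to three terminals $a,b,c$ that are not on $D_i$. Since $a,b,c\in T$ and $G[T]$ is triangle-free by Lemma~\ref{lem:triangle}, no two of $a,b,c$ can be adjacent, for otherwise they would close a triangle with $u$. Each of $a,b,c$ is a terminal and hence lies in at least one pair, so I would choose a pair $(a,a')$ with path-vertex $z_a$, and similarly define $b',z_b$ and $c',z_c$. The crucial observation is that $a,b,c$ lie off $D_i$, so Lemma~\ref{lem:Pi-term}, which guarantees that $D_i$ contains a terminal of every pair in $S_i$, forces each partner $a',b',c'$ onto $D_i$. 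In particular $a',b',c'\notin\{a,b,c\}$, and condition~ii) gives $a',b',c'\neq u$ because $u$ is adjacent to each of $a,b,c$.

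I would then show that $\{z_a,z_b,z_c\}$ is an asteroidal triple in $H_i$, contradicting the AT-freeness of $H$ established in Step~5. These three path-vertices are pairwise non-adjacent, since each has degree $2$ with only terminal neighbours. To connect $z_a$ and $z_b$ while avoiding $N[z_c]=\{z_c,c,c'\}$, I would take the path $z_a,a,u,b,z_b$: none of its five vertices equals $z_c$, $c$, or $c'$, precisely because $a,b,z_a,z_b$ lie off $D_i$ while $c,c'$ lie on $D_i$ with $c'\neq u$, and $c$ is distinct from $a,b,u$. The two remaining connecting paths $z_a,a,u,c,z_c$ and $z_b,b,u,c,z_c$ are handled by the same reasoning, completing the asteroidal triple and the contradiction.

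The main obstacle, and the point at which $D_i$ genuinely enters the argument, is ruling out the degenerate situation in which one of $a,b,c$ is the partner of another, say when $(a,c)$ is itself a pair: then the connecting path $z_a,a,u,b,z_b$ would pass through $a\in N[z_c]$ and the construction would collapse. Lemma~\ref{lem:Pi-term} resolves this cleanly, since it places a terminal of every pair on $D_i$, whereas $a,b,c$ were all assumed to lie off $D_i$; this is exactly what guarantees that the partners $a',b',c'$ are disjoint from $\{a,b,c\}$ and allows the path-vertex triple to be built without interference.
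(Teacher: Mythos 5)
Your proof is correct and follows essentially the same route as the paper: the first bound via Lemma~\ref{lem:degree}, and the second by using Lemma~\ref{lem:Pi-term} to force the partners of the three off-path terminals onto $D_i$ and then exhibiting the three corresponding path-vertices as an asteroidal triple. The only cosmetic difference is your invocation of Lemma~\ref{lem:triangle} for pairwise non-adjacency of $a,b,c$, which is harmless but not actually needed since the asteroidal triple lives on the path-vertices.
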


\begin{proof}
Let $u$ be a vertex of $D_i$. If $u$ is not a terminal, then $u$ is a path-vertex, and consequently, has degree~2. Hence, we may assume that $u$ is a terminal.
Lemma~\ref{lem:degree} tells us that $u$ can represent at most five terminals. This implies that $u$ is adjacent to at most five path-vertices not on $D_i$.  

We now show that $u$ is adjacent to at most two terminals not on $D_i$. To obtain a contradiction, suppose that $u$ is adjacent to three terminals. 
Since at least one terminal of each terminal pair belongs to $D_i$ due to Lemma~\ref{lem:Pi-term}, these three terminals are from three different pairs. Hence we may without loss of generality denote them by
$s_{i_1},s_{i_2},s_{i_3}$.
Let $v,w,z$ be the path-vertices of the $(s_{i_1},t_{i_1})$-, $(s_{i_2},t_{i_2})$- and
$(s_{i_3},t_{i_3})$-paths $P_{i_1},P_{i_2},P_{i_3}$,  respectively. 
Note that $u\notin \{t_{i_1},t_{i_2},t_{i_3}\}$, because $H_i$ contains no edge between two terminals of the same pair
by Condition~(ii).
This means that no vertex of $\{v,w,z\}$ is adjacent to $u$.
Hence $v,w,z$ form an asteroidal triple, a contradiction to the AT-freeness of $H_i$. \qed
\end{proof}

Recalling Lemma~\ref{lem:H} and using Lemma~\ref{lem:Pi-term}, we obtain the following.

\begin{lemma}\label{lem:subdiv}
Let $H_i'$ be an AT-free graph obtained from $H_i$ by 
subdividing a number of path-edges one or more times.
Let $P_j'$ and $D_i'$ be the resulting paths obtained from the paths $P_j$ and $D_i$, respectively.
Then,
\begin{itemize} 
\item[a)] the length of each $P_j'$ that is not a subpath of $D_i'$ is at most~$3$ (implying that every internal vertex of $P_j'$ is  
adjacent to at least one of $s_j,t_j$); 
\item[b)] $D_i'$ dominates all but at most two vertices of $H_i'$.
\end{itemize}  
\end{lemma}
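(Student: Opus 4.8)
The plan is to first record a structural observation that drives both parts: the path $D_i'$ meets each $P_j'$ only in its end-vertices whenever $P_j'$ is not a subpath of $D_i'$. Indeed, the internal vertices of $P_j'$ are exactly the vertices created by subdividing the two path-edges of $P_j$, and $D_i'$ can contain such a vertex only if $D_i$ already passed through the path-vertex $x_j$ of $P_j$; but $x_j$ has degree~$2$ with neighbours $s_j,t_j$, so $x_j\in V_{D_i}$ forces $s_j,x_j,t_j$ to appear consecutively on $D_i$, and hence $P_j'$ to be a subpath of $D_i'$, which is excluded. Thus $V_{P_j'}\cap V_{D_i'}\subseteq\{s_j,t_j\}$, and by Lemma~\ref{lem:Pi-term} at least one of $s_j,t_j$ lies on $D_i'$.

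For part~a), suppose some $P_j'$ that is not a subpath of $D_i'$ has length at least~$4$; I would derive a long induced cycle. Assume without loss of generality that $s_j\in V_{D_i'}$. Since $D_i$ dominates $H_i$ (it is a path between the dominating pair of Lemma~\ref{lem:DP}, and by Lemma~\ref{l-cc} every such path dominates), the vertex $t_j$ has a neighbour on $D_i'$ not lying on $P_j'$; together with $s_j\in V_{D_i'}$ this yields a shortest $(s_j,t_j)$-path $R$ in the graph obtained from $H_i'$ by deleting the internal vertices of $P_j'$. By condition~ii) $s_j\not\sim t_j$, so $R$ has length at least~$2$; since the interior of $P_j'$ has degree~$2$ and $R$ is a chordless shortest path avoiding that interior, the cycle $C=P_j'\cup R$ is induced and has length at least $4+2=6$. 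Three suitably spaced vertices of $C$ then form an asteroidal triple, contradicting the AT-freeness of $H_i'$. Hence every such $P_j'$ has length at most~$3$, and the remark about internal vertices being adjacent to an end-vertex is immediate.

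For part~b), I would classify the vertices of $H_i'$. Every terminal is dominated by $D_i'$, because its domination by $D_i$ in $H_i$ is realised either by membership in $D_i$ or through a non-path-edge, both of which survive in $H_i'$. Every internal vertex of a path $P_j'$ that is a subpath of $D_i'$, and every internal vertex of a $P_j'$ with both ends on $D_i'$ (of length at most~$3$ by part~a), is dominated as well. The only vertices that can escape domination are therefore the internal vertices of a length-$3$ path $P_j'$ having exactly one end, say $s_j$, on $D_i'$: the vertex adjacent to $s_j$ is dominated, but the vertex $w_j$ adjacent to $t_j\notin V_{D_i'}$ may be undominated, and there is at most one such $w_j$ per pair.

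It remains to bound the number of undominated vertices by two, which I expect to be the main obstacle. I would show that three undominated vertices $w_1,w_2,w_3$, necessarily lying on three distinct paths $P_{j_1}',P_{j_2}',P_{j_3}'$, form an asteroidal triple in $H_i'$. They are pairwise non-adjacent because each has degree~$2$ with both neighbours on its own path. For the pair $w_1,w_2$, the walk from $w_1$ to $s_{j_1}$ along $P_{j_1}'$, from $s_{j_1}$ to $s_{j_2}$ along $D_i'$, and from $s_{j_2}$ to $w_2$ along $P_{j_2}'$ is a path that avoids both neighbours of $w_3$, since these are an interior vertex of $P_{j_3}'$ and the terminal $t_{j_3}\notin V_{D_i'}$, neither of which lies on $P_{j_1}'$, $P_{j_2}'$, or $D_i'$. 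The delicate points are exactly the degenerate situations permitted by the relaxed terminal conditions, where terminals of different pairs coincide, so I would verify throughout that $t_{j_3}$ cannot be identified with the on-$D_i'$ terminals $s_{j_1},s_{j_2}$ (it lies off $D_i'$, while they lie on it) and that distinct paths share only terminal end-vertices. The resulting contradiction with the AT-freeness of $H_i'$ shows that at most two vertices escape domination, which completes part~b).
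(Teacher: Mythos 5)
Your proof is correct and follows essentially the same strategy as the paper: part~b) is argued identically (at most one undominated vertex per pair by part~a), and three undominated vertices on distinct paths linked through $D_i'$ would form an asteroidal triple), while in part~a) you close the long path $P_j'$ into an induced cycle of length at least~$6$ and extract an asteroidal triple from it, whereas the paper directly exhibits the triple $\{u,s_j,t_j\}$ with $u$ a deep internal vertex of $P_j'$ --- the same configuration viewed slightly differently. Your preliminary observation that $D_i'$ meets a non-subpath $P_j'$ only in its end-vertices, and your explicit handling of coinciding terminals in part~b), are correct and only make explicit what the paper leaves implicit.
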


\begin{proof} First, we prove \emph{a)}. 
In order to obtain a contradiction, assume that $P_j'$ has length at least~4. Let $u$ be an internal vertex of $P_j'$ that is at distance at least~2 from $s_j$ and $t_j$. 
By Lemma~\ref{lem:Pi-term}, at least one of the vertices $s_j,t_j$ is a vertex of $D_i'$. Assume that $s_j\in V_{D_i'}$. Then either $t_j\in V_{D_i'}$ or $t_j$ is adjacent to a vertex~$z\in V_{D_i'}$.

First suppose that $t_j\in V_{D_i'}$. Let $Q$ be an $(s_j,t_j)$-subpath of $D_i'$.
Then $Q$ has length at least~2, because Condition~ii) tells us that $s_j$ and $t_j$ are not adjacent in $G$, and consequently, not in $H'$. Since $u$ has no neighbours outside $P_j'$, we find that
$Q$ avoids the neighbourhood of $u$. 
However, then $u,s_j,t_j$ form an asteroidal triple, a contradiction to the AT-freeness of $H_i$.   

Now suppose that $t_j$ is adjacent to a vertex~$z\in V_{D_i'}$. We choose $z$ in such a way that
$t_j$ is not adjacent to any vertex on the subpath $Q$ of $D_i'$ from $s_j$ to $z$.
The path $Qt_j$ has length at least~2, because $s_j$ and $t_j$ are not adjacent in $H$.
Moreover, $Qt_j$ avoids the neighbourhood of $u$.
Hence, $u,s_j,t_j$ form an asteroidal triple, a contradiction to the AT-freeness of $H$.

To prove \emph{b)}, assume that $D_i'$ does not dominate three vertices $u_1,u_2,u_3\in V_{H_i}$. 
Since we only subdivide path-edges and $D_i$ dominates each terminal, 
we find that $u_1$, $u_2$, $u_3$ are internal vertices of some $(s_j,t_j)$-paths $P_j'$. 
By Lemma~\ref{lem:Pi-term}, at least one of the vertices $s_j,t_j$ is a vertex of $D_i'$ for each such $P_j'$, that is, $D_i'$ dominates at least one internal vertex of $P_j'$. Then by \emph{a)}, each $P_j'$ can have at most one internal vertex not dominated by $D_i'$.  
Hence,  $u_1,u_2,u_3$ are internal vertices of distinct paths $P_{j_1}',P_{j_2}',P_{j_3}'$, respectively. We assume without loss of generality that $s_{j_1},s_{j_2},s_{j_3}\in V_{D_i'}$. Then the path composed of the $(u_1,s_{j_1})$-subpath of $P_{j_1}'$, the $(s_{j_1},s_{j_2})$-subpath of $D_{i}'$, and the $(s_{j_2},u_2)$-subpath of $P_{j_2}'$ is a $(u_1,u_2)$-path 
in $H_i'$
that avoids the neighbourhood of $u_3$. By the same arguments, $H_i'$ has a $(u_1,u_3)$-path  that avoids the neighbourhood of $u_2$ and a $(u_2,u_3)$-path 
that avoids the neighbourhood of $u_1$. We conclude that $u_1,u_2,u_3$ form an asteroidal triple 
in $H_i'$, a contradiction to the AT-freeness of 
$H_i'$.
This completes the proof of Lemma~\ref{lem:subdiv}.\qed
\end{proof}

\begin{figure}[ht]
\centering\scalebox{0.75}{\input{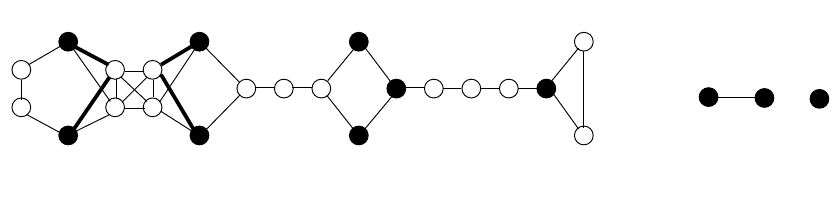_t}}
\caption{The graph $I$ that is constructed from the graph  $G$ shown in Fig.~\ref{fig:Gi}. The terminal pairs $(s_1,t_1)$ and $(s_2,t_2)$ of $G$ are interfering. Note  that there is no interference between $(s_2,t_2)$ and $(s_3,t_3)$.  
\label{fig:interf}}
\end{figure}

Recall that Condition~iii) tells us that the terminals  $s_i$ and $t_i$ are in the same connected component of the graph 
$G_i$ for $i=1,\ldots, k$. 
Two terminal pairs $(s_i,t_i)$ and $(s_j,t_j)$ are {\it interfering}  if there is an induced $(s_i,t_i)$-path $Q_i$ in $G_i$ and an induced $(s_j,t_j)$-path $Q_j$ in $G_{j}$, respectively, such that $Q_i$ and $Q_j$ are not mutually induced; see Fig.~\ref{fig:interf} for an example.  
We say that there is \emph{interference} between two 
sets of terminal pairs $S_i$ and $S_j$ if  a terminal pair from $S_i$ and a terminal pair from $S_j$ are interfering.

\begin{lemma}\label{lem:interf}
Let $(s_i,t_i)$ and $(s_j,t_j)$ be interfering terminal pairs  from two different connected components of $H$. Let $Q_i$ and
$Q_j$ be induced $(s_i,t_i)$- and $(s_j,t_j)$-paths 
in $G_{i}$ and $G_j$, respectively, such that $Q_i$ and $Q_j$ are not mutually induced.
Then $Q_i$ and $Q_j$ are vertex-disjoint.
Moreover, for any edge $uv\in E_G$ with $u\in V_{Q_i}$ and $v\in V_{Q_j}$, 
both $u\in N(s_i)\cup N(t_i)$ and $v\in N(s_j)\cup N(t_j)$ hold. 
\end{lemma}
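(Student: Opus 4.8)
The plan is to derive both conclusions from two preliminary observations together with the AT-freeness of $G$. First I would record what it means for $(s_i,t_i)$ and $(s_j,t_j)$ to lie in different connected components of $H$. Since $H$ contains $G[T]$ as an induced subgraph and links $s_\ell$ to $t_\ell$ through the path-vertex of $P_\ell$, two pairs in different components of $H$ force the four terminals $s_i,t_i,s_j,t_j$ to be distinct and pairwise non-adjacent in $G$. Second, I would exploit the definitions of $G_i$ and $G_j$: because $s_j,t_j\in T\setminus\{s_i,t_i\}$ and $s_i,t_i\in T\setminus\{s_j,t_j\}$, we have $V_{Q_i}\cap(N[s_j]\cup N[t_j])=\emptyset$ and $V_{Q_j}\cap(N[s_i]\cup N[t_i])=\emptyset$. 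These two facts, plus the inducedness of $Q_i$ and $Q_j$, are the only tools I expect to need.

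For vertex-disjointness, suppose $Q_i$ and $Q_j$ shared a vertex $w$. By the second observation $w$ avoids $N[s_i]\cup N[t_i]\cup N[s_j]\cup N[t_j]$, so $w$ is internal to both paths and non-adjacent to all four terminals. Splitting $Q_i$ and $Q_j$ at $w$, I would exhibit $\{s_i,s_j,t_j\}$ as an asteroidal triple of $G$: join $s_j$ to $t_j$ by $Q_j$ itself (which is disjoint from $N[s_i]$), and join $s_i$ to $s_j$ and $s_i$ to $t_j$ by walks consisting of the $s_i$--$w$ part of $Q_i$ followed by the $w$--$s_j$ or $w$--$t_j$ part of $Q_j$. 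Each of these two walks avoids the neighbourhood of the remaining terminal: its $Q_i$-part is globally disjoint from $N[s_j]$ and $N[t_j]$ by the second observation, and its $Q_j$-part does not reach the unique neighbour on $Q_j$ of the remaining terminal, which lies beyond $w$ because $w$ is non-adjacent to that terminal and $Q_j$ is induced. Reducing these walks to paths contradicts the AT-freeness of $G$, so $Q_i$ and $Q_j$ are vertex-disjoint.

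For the crossing-edge structure, I would fix any edge $uv$ with $u\in V_{Q_i}$ and $v\in V_{Q_j}$. A short argument from the second observation shows $u$ and $v$ must be non-terminal, hence internal to their paths: a terminal endpoint would be adjacent to a vertex that the second observation forbids, and $uv$ cannot join two terminals of different components. Assume towards a contradiction that $u\notin N(s_i)\cup N(t_i)$. Then $u$ is an internal vertex of the induced path $Q_i$ adjacent to neither endpoint, so the $s_i$--$u$ prefix avoids $N(t_i)$ and the $u$--$t_i$ suffix avoids $N(s_i)$. I would then build the asteroidal triple $\{s_i,t_i,s_j\}$: join $s_i$ to $t_i$ by $Q_i$ (disjoint from $N[s_j]$), and join each of $s_i$ and $t_i$ to $s_j$ by the appropriate half of $Q_i$, the edge $uv$, and the $v$--$s_j$ subpath of $Q_j$. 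The second observation guarantees that the $Q_j$-part and $v$ avoid $N[s_i]$ and $N[t_i]$, while the half of $Q_i$ avoids the neighbourhood of its own endpoint by the previous sentence. This contradiction forces $u\in N(s_i)\cup N(t_i)$, and the symmetric argument (building $\{s_j,t_j,s_i\}$) forces $v\in N(s_j)\cup N(t_j)$.

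The main obstacle I anticipate is not any single difficult step but the bookkeeping in the asteroidal-triple verifications. Every connecting walk is forced to pass through the shared vertex $w$ (in the disjointness part) or across the edge $uv$ (in the edge part), so one cannot simply invoke that $Q_i$ and $Q_j$ live in the separate graphs $G_i$ and $G_j$. Instead, for each of the three walks one must combine the global containment statements $V_{Q_i}\cap(N[s_j]\cup N[t_j])=\emptyset$ and $V_{Q_j}\cap(N[s_i]\cup N[t_i])=\emptyset$ with the local fact that, along an induced path, a vertex meets the neighbourhood of an endpoint only if it is the corresponding second or second-to-last vertex. Once this is in place, checking that each walk meets neither the chosen third terminal nor its neighbourhood, and then reducing the walk to a genuine path, is routine.
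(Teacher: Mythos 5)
Your proposal is correct and follows essentially the same route as the paper: both parts are proved by assuming the bad configuration, noting that $Q_i$ avoids $N[s_j]\cup N[t_j]$ and $Q_j$ avoids $N[s_i]\cup N[t_i]$ by the definitions of $G_i$ and $G_j$, and then exhibiting an asteroidal triple among the four terminals via walks routed through the shared vertex or the crossing edge. The paper merely asserts the triples (using $\{s_i,t_i,s_j\}$ in both parts, where you use $\{s_i,s_j,t_j\}$ in the first), while you spell out the verification that each connecting walk avoids the third terminal's neighbourhood; this is a difference in detail, not in method.
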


\begin{proof}
To obtain a contradiction, we first assume that $Q_i$ and $Q_j$ have a common vertex~$w$. As 
$(s_i,t_i)$ and $(s_j,t_j)$ are terminal pairs from two different connected components of $H$, 
the four vertices $s_i,t_i,s_j,t_j$ are pairwise distinct.
As $Q_i$ is a path in $G_i$, we find that $Q_i$ avoids $N[s_j]$ and $N[t_j]$. Symmetrically, 
$Q_j$ avoids $N[s_i]$ and $N[t_i]$.
This means that $w$ is not adjacent to any of the vertices $s_i,t_i,s_j,t_j$. 
However, then $s_i,t_i,s_j$ form an asteroidal triple, a contradiction to the AT-freeness of $H$.

Now assume that there is an edge $uv\in E_G$ such that $u\in V_{Q_i},v\in V_{Q_j}$ and
$u\notin N(s_i)\cup N(t_i)$. 
Recall that $Q_i$ avoids $N[s_j]$ and $N[t_j]$, and that $Q_j$ avoids $N[s_i]$ and $N[t_i]$.
Then we obtain a contradiction, because $s_i,t_i,s_j$ again form an asteroidal triple.\qed
\end{proof}

Lemma~\ref{lem:interf} implies that any interference between two terminal pairs $(s_i,t_i)$ and $(s_j,t_j)$ from different connected components of $H$ stems from an edge whose endpoints are in 
$N(s_i)\cup N(t_i)$ and $N(s_j)\cup N(t_j)$, respectively. 
We now obtain the following lemma.

\begin{lemma}\label{lem:interf-poly}
It is possible to check in polynomial time whether two terminal pairs  $(s_i,t_i)$ and $(s_j,t_j)$ from different connected components of $H$ are interfering.
\end{lemma}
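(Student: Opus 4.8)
The plan is to use the structural localization provided by Lemma~\ref{lem:interf} to reduce the interference test to a collection of reachability queries. Recall that, by Lemma~\ref{lem:interf}, any induced $(s_i,t_i)$- and $(s_j,t_j)$-paths $Q_i,Q_j$ in $G_i,G_j$ that witness interference are vertex-disjoint, and every edge between them has one endpoint in $N(s_i)\cup N(t_i)$ and the other in $N(s_j)\cup N(t_j)$. Hence I would first argue that the two pairs interfere if and only if there is an edge $uv\in E_G$ with $u\in R_i$ and $v\in R_j$, where $R_i$ is the set of vertices of $N(s_i)\cup N(t_i)$ that lie on some induced $(s_i,t_i)$-path in $G_i$, and $R_j$ is defined analogously. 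The forward direction is immediate from Lemma~\ref{lem:interf}. For the converse I would pick induced paths $Q_i,Q_j$ through $u$ and $v$; since $i$ and $j$ lie in different components of $H$ the four terminals are distinct, and $u\notin\{s_i,t_i\}$ (as $u\in N(s_i)\cup N(t_i)$ while $s_i\not\sim t_i$ by condition~(ii)), so both $u$ and $v$ are \emph{internal} vertices of their paths and the edge $uv$ certifies that $Q_i$ and $Q_j$ are not mutually induced.

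Next I would show that $R_i$ can be computed in polynomial time. The key observation is that an endpoint of an induced path is adjacent to exactly one vertex of that path; therefore, if $u\in N(s_i)$ lies on an induced $(s_i,t_i)$-path, then $u$ must be the neighbour of $s_i$ on that path, so the path has the form $s_i,u,\ldots,t_i$. Consequently $u\in R_i$ through $s_i$ exactly when there is an induced $(u,t_i)$-path in the graph $G_i-(N[s_i]\setminus\{u\})$ obtained by deleting $s_i$ together with its remaining neighbours; prepending $s_i$ to such a path yields the desired induced $(s_i,t_i)$-path, and conversely. The symmetric statement holds for $u\in N(t_i)$, and if $u\in N(s_i)\cap N(t_i)$ then $s_i,u,t_i$ is already induced (using $s_i\not\sim t_i$), so $u\in R_i$ trivially.

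The final, simplifying, step is that testing the existence of an induced $(u,t_i)$-path reduces to a plain connectivity query: a shortest $(u,t_i)$-path in any graph is automatically chordless, hence induced, so such a path exists if and only if $u$ and $t_i$ lie in the same connected component of $G_i-(N[s_i]\setminus\{u\})$. This is decidable in linear time by a graph search, and $t_i$ survives the deletion since $t_i\notin N[s_i]$. Running this query for every $u\in N(s_i)\cup N(t_i)$, doing the symmetric computation for $R_j$, and then scanning all edges of $G$ for one joining $R_i$ to $R_j$, gives the claimed polynomial-time procedure.

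I expect the main obstacle to be the verification that $R_i$ captures interference exactly; in particular, arguing that inducedness forces an interfering vertex of $N(s_i)\cup N(t_i)$ to be consecutive to an endpoint, so that membership in $R_i$ can be decided by a single deletion-plus-connectivity check rather than by searching for an induced path through a prescribed \emph{interior} vertex (which would be the genuinely hard ``$3$-in-a-path''-type question). Once this localization is established, the remaining reachability arguments are routine.
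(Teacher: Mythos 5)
Your proof is correct, and it follows the same overall strategy as the paper: use Lemma~\ref{lem:interf} to localize any interference to the neighbourhoods of the four terminals, and then reduce the test to polynomially many reachability queries. The difference is in the implementation. The paper enumerates $4$-tuples $(u_1,u_2,v_1,v_2)$ with $u_1\in N(s_i)$, $u_2\in N(t_i)$, $v_1\in N(s_j)$, $v_2\in N(t_j)$ and checks for a $(u_1,u_2)$-path in $G_i$ avoiding $s_i,t_i$, a $(v_1,v_2)$-path in $G_j$ avoiding $s_j,t_j$, and an edge between $\{u_1,u_2\}$ and $\{v_1,v_2\}$; it leaves implicit why such a (not necessarily induced) path certifies that the edge endpoint actually lies on an \emph{induced} $(s_i,t_i)$-path. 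Your version pins this down: since an induced path meets $N(s_i)$ only in the successor of $s_i$, membership of $u$ in your set $R_i$ is exactly connectivity of $u$ and $t_i$ in $G_i-(N[s_i]\setminus\{u\})$, and a shortest such path prepended with $s_i$ is automatically induced. This is a cleaner and arguably more rigorous realization of the same idea (and it uses one connectivity query per candidate vertex rather than one per pair), so I have no objections.
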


\begin{proof}
By Lemma~\ref{lem:interf}, we only need to check if there exists four vertices $u_1\in N(s_i)$, $u_2\in N(t_i)$, $v_1\in N(s_j)$, 
$v_2\in N(t_j)$ such that 
\begin{itemize}
\item there exists a path $Q_i$ in $G_i$ from $u_1$ to $u_2$ that does not contain $s_i$ or $t_i$;
\item there exists a path $Q_j$ in $G_j$ from $v_1$ to $v_2$ that does not contain $s_j$ or $t_j$; and
\item there exists an edge $uv$ in $G$ with $u\in \{u_1,u_2\}$ and $v\in \{v_1,v_2\}$.
\end{itemize}
For each 4-tuple $(u_1,u_2,v_1,v_2)$, checking the above three conditions can be done in polynomial time. \qed
\end{proof}

Lemma~\ref{lem:interf-poly} enables us to construct in polynomial time an auxiliary graph~$I$ with vertices $1,\ldots,r$ and edges $ij$ if and only if there is interference between $S_i$ and $S_j$; see Fig.~\ref{fig:interf} for an example. This leads to the following lemma.

\begin{lemma}\label{lem:F}
The graph~$I$ is a disjoint union of paths.
\end{lemma}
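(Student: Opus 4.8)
The plan is to use the standard characterisation that a graph is a disjoint union of paths if and only if it has maximum degree at most~$2$ and contains no cycle, and to establish both properties for $I$ by exhibiting an asteroidal triple in $G$ whenever one of them fails, contradicting the AT-freeness of $G$. Two facts would be used throughout. First, two terminals lying in different connected components of $H$ are non-adjacent in $G$: such an edge would already belong to $G[T]$, hence to $H$, placing the two terminals in the same component. Second, by Lemma~\ref{lem:interf} every interference is \emph{localized}: each edge $ij$ of $I$ is witnessed by an edge $uv$ of $G$ with $u\in N(s)\cup N(t)$ for some pair $(s,t)\in S_i$ and $v\in N(s')\cup N(t')$ for some pair $(s',t')\in S_j$, so it yields a short $G$-path $\sigma-u-v-\sigma'$ of length at most~$3$ joining a terminal $\sigma$ of $S_i$ to a terminal $\sigma'$ of $S_j$. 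I would also note that, although the $H_i$ are components of $H$ (and the path-vertices do not exist in $G$), the terminals of any $T_i$ can be connected \emph{within $G$} by combining the edges of $G[T_i]$ with the induced $(s_j,t_j)$-paths in $G_j$ guaranteed by condition~iii); this provides component-internal routing in $G$.

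First I would bound the maximum degree. Suppose a component $H_i$ interferes with three others $H_{j_1},H_{j_2},H_{j_3}$, with interfering terminals $\sigma_\ell\in S_i$ and $p_\ell\in S_{j_\ell}$ joined by the short connectors above. The three vertices $p_1,p_2,p_3$ are pairwise non-adjacent, as they lie in distinct components of $H$. To connect $p_a$ and $p_b$, I route from $p_a$ along its connector into $T_i$, through $H_i$ from $\sigma_a$ to $\sigma_b$ using the internal connectivity just described (preferably guided by the dominating path $D_i$ of Lemma~\ref{lem:DP}), and out along the second connector to $p_b$. Each such path lives inside $T_{j_a}\cup T_i\cup T_{j_b}$ together with connector and $Q_j$-vertices, hence avoids the component of the third terminal $p_c$; provided it also avoids $N[p_c]$, the triple $p_1,p_2,p_3$ is asteroidal in $G$, a contradiction. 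Thus $\Delta(I)\le 2$.

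To rule out cycles, note that since $\Delta(I)\le 2$ any cycle forces a whole component of $I$ to be a cycle $i_1 i_2\cdots i_m i_1$ with $m\ge 3$. Picking three of its vertices $i_1,i_2,i_3$ with terminal representatives $p_1,p_2,p_3$ again gives pairwise non-adjacent vertices, and for each pair the two arcs of the cycle furnish a $G$-path obtained by concatenating component-internal paths with interference connectors along the arc that omits the third vertex. As before, localization should guarantee these paths avoid $N[p_c]$ for the omitted representative, producing an asteroidal triple and the desired contradiction. Together with the degree bound, this shows $I$ is a disjoint union of paths.

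The main obstacle in every case is not the construction of the connecting paths but verifying that the path between two representatives $p_a,p_b$ avoids the whole closed neighbourhood $N[p_c]$ of the third representative, rather than merely its component. A neighbour of $p_c$ lying on the constructed path would be an additional cross-component adjacency; the delicate point is to choose the representatives and connectors so that any such adjacency is itself an interference already accounted for, and then to invoke Lemma~\ref{lem:interf} to confine all cross-component adjacencies to the neighbourhoods of the interfering terminals. Making this avoidance rigorous—especially where the internal routes pass through the non-terminal vertices of the induced paths $Q_j$—is where I expect the real work to lie.
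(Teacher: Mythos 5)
Your strategy is exactly the paper's: bound the degree of $I$ by~$2$ and exclude cycles, in each case by routing paths between representative terminals of the offending components and exhibiting an asteroidal triple. The one step you explicitly leave open --- verifying that the constructed $(p_a,p_b)$-path avoids the entire closed neighbourhood $N[p_c]$ of the third representative, not merely its component --- is, however, the whole content of the argument, and you should not leave it as ``where I expect the real work to lie'': it closes immediately from machinery you already cite. Every non-terminal vertex on your routes lies on some path $R_p$ or $Q$ taken inside a graph $G_p$, and $G_p$ is by definition obtained by deleting $N[v]$ for \emph{every} terminal $v\notin\{s_p,t_p\}$; hence no such vertex can lie in $N[p_c]$, since $p_c$ belongs to a different component of $H$ than the endpoints of that pair. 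Every terminal vertex on your routes lies in $T_i\cup T_{j_a}\cup T_{j_b}$, and none of these can be adjacent to $p_c$ by your own ``first fact'' (an edge in $G[T]$ is an edge of $H$ and would merge components). The connector edge $uv$ contributes only vertices already on some $Q$, so it is covered by the first observation. Assembling these three remarks is all the paper does to justify avoidance, so your proof is complete once you state them; as written, the proposal stops just short of the finish line.

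One smaller point: for the cycle case you invoke $\Delta(I)\le 2$ to assume a component of $I$ is an induced cycle, which is fine, but the avoidance argument there needs the routing along an arc of the cycle to pass only through components $H_{i_l}$ on that arc; this again follows from the same three remarks applied along the arc, exactly as in the paper.
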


\begin{proof}
First, we prove that $d_I(i)\leq 2$ for all $1\leq i\leq r$. To obtain a contradiction, assume that $d_I(i)\geq 3$. Let $j_1,j_2,j_3\in\{1,\ldots,r\}$ be three vertices of $I$ adjacent to $i$.
For any pair of terminals $(s_p,t_p)$ from $S_i\cup S_{j_1}\cup S_{j_2}\cup S_{j_3}$, there is an $(s_p,t_p)$-path $R_p$ in $G_p$ by Condition~iii).
Moreover, for each $l\in\{1,2,3\}$, there are interfering terminal pairs $(s_{i_l},t_{i_l})$ and $(s_{j_l},t_{j_l})$ in 
$S_i$ and $S_{j_l}$ respectively.
Hence, by definition, there 
exist an $(s_{i_l},t_{i_l})$-path $Q_{i_l}$ in 
$G_i$
and an $(s_{j_l},t_{j_l})$-path  $Q_{j_l}$ in $G_{j_l}$, respectively,
 such that $Q_{i_l}$ and $Q_{j_l}$ are not mutually induced. 

By definition of the graphs $G_i$,
the paths $R_p$, $Q_{i_l}$, and $Q_{j_l}$ each avoid the closed neighbourhood of terminal vertices that are not their endpoints. In particular, for any pair $S_a,S_b$ among $S_i$, $S_{j_1}$, $S_{j_2}$, and $S_{j_3}$, the paths among $R_p$, $Q_{i_l}$, and $Q_{j_l}$ whose endpoints are in $T_a$ avoid the closed neighbourhood of all terminal vertices in $T_b$, because $H_i$, $H_{j_1}$, $H_{j_2}$, and $H_{j_3}$ are connected components of $H$.
 
Let $R$ be the subgraph of $G$ induced by the vertices of all the paths $R_p$, $Q_{i_l}$ and $Q_{j_l}$. 
Let $u_1,u_2,u_3$ be arbitrary terminal vertices from $T_{j_1},T_{j_2},T_{j_3}$ respectively, and let $v$ be a terminal from $T_i$. 
For each $l\in\{1,2,3\}$, there is a $(v,u_l)$-path in $R$ that avoids the neighbourhoods of terminals $u_p$ for $p\in\{1,2,3\}$, $p\neq l$. 
In particular, recalling the observations of the previous paragraph: 
\begin{itemize}
\item the connectedness of $H_i$ implies that using the edges of the paths $R_p$ there is a path in $R$ from $v$ to $s_{i_l}$ that avoids the closed neighbourhood of any terminal vertex in $T_{j_1},T_{j_2},T_{j_3}$;
\item the interference of $Q_{i_l}$ and $Q_{j_l}$ implies there is a path in $R$ from $s_{i_l}$ to $s_{j_l}$ that avoids the closed neighbourhood of any terminal vertex in $T_{j_{l'}}$ for $l' \not= l$;
\item the connectedness of $H_{j_l}$ implies that using the edges of the paths $R_p$ there is a path in $R$ from $s_{j_l}$ to $u_l$ that avoids the closed neighbourhood of any terminal vertex in $T_{j_{l'}}$ for $l' \not= l$ and in $T_i$.
\end{itemize}
Hence, $u_1,u_2,u_3$ form an asteroidal triple, a contradiction.

It remains to prove that $I$ has no cycles. 
In order to obtain a contradiction, assume that $i_0i_1\ldots i_q$ with $i_0=i_q$ is a cycle in $I$. 
For any pair of terminals $(s_p,t_p)$ from $S_{i_1}\cup\ldots\cup S_{i_q}$,  there is an $(s_p,t_p)$-path $R_p$ in 
$G_p$ by Condition~iii).
Moreover, for each $l\in\{1,\ldots,q\}$, there are interfering terminal pairs $(s_{l_1},t_{l_1})$ and $(s_{l_2},t_{l_2})$ in 
$S_{i_{l-1}}$ and $S_{i_l}$ respectively.
Hence, by definition, there
exists an $(s_{l_1},t_{l_1})$-path~$Q_{l_1}$ in $G_{l_1}$ and an $(s_{l_2},t_{l_2})$-path~$Q_{l_2}$ in $G_{l_2}$, respectively,
such that $Q_{l_1}$ and $Q_{l_2}$ are not mutually induced.

Let $R$ be the subgraph of $G$ induced by the vertices of all the paths $R_p$, $Q_{l_1}$ and $Q_{l_2}$. 
Let $u_1,\ldots,u_q$ be arbitrary terminal vertices from $H_{i_1},\ldots,H_{i_q}$, respectively. Observe that for each $j\in\{1,\ldots,q\}$, there is a $(u_{j-1},u_j)$-path in $R$ that avoids the neighbourhood of $u_l$ for each $l\in\{1,\ldots,q\}$, $l\neq j-1,j$. 
The existence of such a path follows along the same lines as before.
Hence, $u_1,u_2,u_3$ form an asteroidal triple, a contradiction.
\qed 
\end{proof}

\begin{figure}[ht]
\centering\scalebox{0.75}{\input{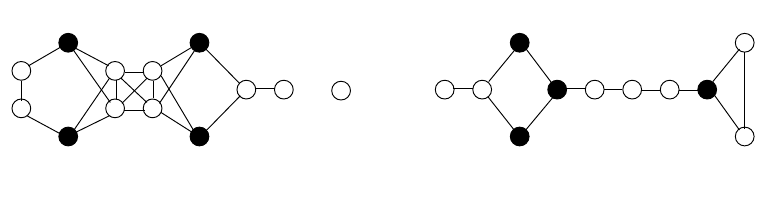_t}}
\caption{The graphs $G_1^*$ and $G_2^*$ that are constructed from the graph~$G$ shown in Fig.~\ref{fig:Gi}.   
\label{fig:star}}
\end{figure}

Let $I_1,\ldots,I_l$ be the connected components of $I$, and let  $J_1,\ldots,J_l$ be their vertex sets, respectively. 
For $h=1,\ldots,l$, we define $X_h=\{(s_j,t_j) \;|\; (s_j,t_j)\in \bigcup_{p\in J_h}S_p\}$, and we 
let $G_h^*$ be the graph obtained from $G$ by 
removing all vertices of the 
closed neighbourhoods of all the terminals that are not included in $X_h$; see Fig~\ref{fig:star} for an example.
Lemma~\ref{lem:F-comp} shows how $G$ is related to the graphs $G_h^*$. 

\begin{lemma}\label{lem:F-comp}
The instance $(G,S)$ is a yes-instance of {\sc Induced Disjoint Paths} if and only if  $(G_h^*,X_h)$ is a yes-instance of {\sc Induced Disjoint Paths} for all $1\leq h\leq l$.
\end{lemma}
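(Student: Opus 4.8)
**The plan is to prove both directions of the equivalence, with the forward direction being routine and the backward direction requiring the structural work.**

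For the forward direction, suppose $(G,S)$ is a yes-instance, with solution paths $P_1,\ldots,P_k$. For each $h$, I would take the subcollection of paths $\{P_j : (s_j,t_j)\in X_h\}$ and argue that it is a solution for $(G_h^*,X_h)$. The only thing to check is that each such $P_j$ actually lives in $G_h^*$, i.e.\ it avoids the closed neighbourhoods of all terminals not in $X_h$. Since $P_j$ is part of a mutually induced solution, its internal vertices are non-adjacent to every terminal outside $\{s_j,t_j\}$, and in particular to all terminals not in $X_h$; the endpoints $s_j,t_j$ are themselves in $X_h$. Hence $P_j$ survives in $G_h^*$, and the paths remain mutually induced in the induced subgraph $G_h^*$. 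This direction uses only the definition of a solution and the definition of $G_h^*$.

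The backward direction is the substantive part. Assume that for every $h$ we have a solution $\mathcal{P}^{(h)}=\{P_j^{(h)} : (s_j,t_j)\in X_h\}$ for $(G_h^*,X_h)$. I would form the union $\mathcal{P}=\bigcup_{h=1}^l \mathcal{P}^{(h)}$ and claim it is a solution for $(G,S)$. Each path already satisfies conditions~1 and~4 (being an induced $(s_j,t_j)$-path), and within a single $\mathcal{P}^{(h)}$ the paths are mutually induced by hypothesis. The key obstacle is to rule out illegitimate interactions \emph{across} different blocks $X_h$ and $X_{h'}$: two paths from different blocks could a priori share a vertex or have adjacent internal vertices. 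This is exactly where I would invoke the structure built up earlier. By construction of $G_h^*$, any path in $\mathcal{P}^{(h)}$ avoids the closed neighbourhoods of all terminals outside $X_h$, which already forbids an internal vertex of such a path from coinciding with or being adjacent to a terminal of another block.

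The remaining danger is an edge (or shared vertex) between two \emph{non-terminal} portions of paths from different blocks $X_h,X_{h'}$. Here I would argue that such an interaction would witness interference between some $S_p$ with $p\in J_h$ and some $S_q$ with $q\in J_{h'}$: restricting the two offending paths to the relevant components yields induced $(s_p,t_p)$- and $(s_q,t_q)$-paths in $G_p$ and $G_q$ that are not mutually induced. By Lemma~\ref{lem:interf} and the definition of the interference graph $I$, this forces an edge $pq$ in $I$, placing $p$ and $q$ in the \emph{same} connected component of $I$, hence $J_h=J_{h'}$ and $h=h'$ --- a contradiction. The main subtlety I expect to handle carefully is verifying that the offending paths, once restricted to the single components $H_p$ and $H_q$, genuinely live in $G_p$ and $G_q$ and remain induced, so that Lemma~\ref{lem:interf} applies; this is where the fact that each $H_i$ is a connected component of $H$ and that $G_h^*$ deletes closed neighbourhoods of outside terminals does the bookkeeping. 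Once cross-block interference is excluded, $\mathcal{P}$ satisfies conditions~1--4 and is a solution for $(G,S)$, completing the proof.
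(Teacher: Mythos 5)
Your proposal is correct and follows essentially the same route as the paper: the forward direction restricts a solution to each $(G_h^*,X_h)$, and the backward direction takes the union and rules out cross-block conflicts by observing that any such conflict would witness interference between terminal pairs in different connected components of the interference graph $I$, contradicting the definition of the sets $J_h$. The paper states this in three sentences; you simply spell out the bookkeeping (in particular, that the offending subpaths genuinely live in the graphs $G_p$ and $G_q$ so that the definition of interference applies), which is exactly the right detail to verify.
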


\begin{proof}
Clearly, a solution for $(G,S)$ induces a solution for the instances $(G_h^*,X_h)$.
Suppose that we have a solution for each $(G_h^*,X_h)$.  Then two paths from two different solutions are mutually induced, because no two
terminal pairs from two different sets $X_h$ and $X_{h'}$ are interfering.
Hence, the union of these solutions for the instances $(G_h^*,X_h)$ forms a solution for $(G,S)$.\qed 
\end{proof}

Lemma~\ref{lem:F-comp} gives us the following step.

\medskip
\noindent{\bf Step~6.}
If $I$ is disconnected, then solve
{\sc Induced Disjoint Paths} for each $(G_h^*,X_h)$. Return {\tt yes} if the return value is {\tt yes} for all of these instances and return {\tt no} otherwise.

\medskip
By Step~6, we may assume that $I$ is connected.
Then, by Lemma~\ref{lem:F}, we may assume that $I$ is a path.
This leads to the following new condition:

\medskip
\noindent
{\bf Condition iv)}
There is interference between two sets $S_i$ and $S_j$ for some $1\leq i<j\leq r$ if and only if $j=i+1$.

\medskip
\noindent
We are now ready to be a bit more precise, which is necessary for our algorithm. 
For $i=1,\ldots, r-1$, let $W_i$ be the set of all vertices $u\in V_G$, such that there is an edge 
$uv\in E_G$ with the following property: there are interfering terminal pairs $(s_p,t_p)$ and $(s_q,t_q)$ in $S_{i}$ and $S_{i+1}$, respectively, such that $G_p$ has an induced $(s_p,t_p)$-path containing $u$
and $G_q$ has an induced $(s_q,t_q)$-path containing $v$. Using this definition we can state the next lemma.

\begin{lemma}\label{lem:interf-Hi}
For $i=1,\ldots,r-1$, there is a set  $Z_i\subseteq T_i$ of at most two terminals such that $W_i\subseteq N(Z_i)$.
\end{lemma}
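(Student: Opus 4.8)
We need to show that for each $i \in \{1,\ldots,r-1\}$, all the vertices in $W_i$ lie in the neighbourhood of at most two terminals from $T_i$. Recall that $W_i$ collects exactly those vertices $u$ that can serve as the $S_i$-side endpoint of an interference edge $uv$ between $S_i$ and $S_{i+1}$. By Lemma~\ref{lem:interf}, every such interference edge $uv$ satisfies $u \in N(s_p)\cup N(t_p)$ for the relevant interfering pair $(s_p,t_p)\in S_i$. So every vertex of $W_i$ is adjacent to at least one terminal of $T_i$; the whole content of the lemma is that only (at most) two distinct terminals of $T_i$ are ever needed as such neighbours.

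**The plan.** The natural approach is by contradiction using AT-freeness of the connected component $H_i$, exactly in the spirit of Lemmas~\ref{lem:triangle},~\ref{lem:degree} and~\ref{lem:Pi-deg}. Suppose for contradiction that no two terminals suffice, i.e. there is no set $Z_i$ of size at most two with $W_i\subseteq N(Z_i)$. First I would argue that this forces the existence of three \emph{pairwise non-adjacent} terminals in $T_i$, each of which is needed to cover some vertex of $W_i$. Concretely, pick a minimal set of terminals of $T_i$ whose neighbourhoods cover $W_i$; if this set cannot be taken of size $\le 2$, we can extract three terminals $a_1,a_2,a_3\in T_i$ and three witnesses $u_1,u_2,u_3\in W_i$ with $u_\ell \in N(a_\ell)$ but $u_\ell \notin N(a_{\ell'})$ for $\ell'\ne\ell$. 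The three $a_\ell$ must then be pairwise non-adjacent: if two of them were adjacent, $G[T]$ would contain a triangle or (combined with the path-vertices guaranteed by condition~ii)) violate AT-freeness along the lines of Lemma~\ref{lem:triangle}.

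**Building the asteroidal triple.** The key step is to exhibit, for each pair $a_p,a_q$ among $a_1,a_2,a_3$, a path in $H_i$ (or in $G$ restricted appropriately) joining them that avoids the neighbourhood of the third. Here I would use two ingredients: (1) the connectedness of the component $H_i$ together with condition~iii), which guarantees the realizing induced $(s_j,t_j)$-paths $R_j$ inside each $G_j$ that avoid the closed neighbourhoods of all \emph{other} terminals — precisely the mechanism used in the proof of Lemma~\ref{lem:F}; and (2) the fact that the interference edges attaching the $u_\ell$ to $S_{i+1}$ live entirely in $N(s_p)\cup N(t_p)$ on the $S_i$ side. Concatenating the $R_j$-segments within $H_i$ gives a connected route among the $a_\ell$'s that stays off the forbidden neighbourhoods, and the non-adjacency of the $a_\ell$ then yields that $a_1,a_2,a_3$ — or, if needed, the path-vertices of their pairs, as in Lemma~\ref{lem:degree} — form an asteroidal triple in $H_i$, contradicting AT-freeness.

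**The main obstacle.** The delicate point, which I expect to be the crux, is ensuring that the avoidance path between $a_p$ and $a_q$ genuinely dodges the \emph{open} neighbourhood of $a_{r}$ for the third index $r$. Since each witness $u_\ell$ is only guaranteed \emph{not} to be adjacent to $a_{\ell'}$ for $\ell'\ne \ell$, one must route the connecting paths so that they pass through the $R_j$-segments (which by construction avoid closed neighbourhoods of foreign terminals) rather than through stray vertices adjacent to $a_r$. I would handle this by carefully selecting, as in Lemma~\ref{lem:F}, terminal endpoints in each $T_j$ and invoking connectedness of $H_i$ to stitch the segments, verifying at each junction that the relevant closed-neighbourhood avoidance is inherited from the defining property of the graphs $G_j$. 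Once the three avoidance paths are produced, the asteroidal triple follows immediately and the contradiction closes the argument, establishing the desired bound $|Z_i|\le 2$.
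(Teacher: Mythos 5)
Your high-level plan (contradiction via an asteroidal triple, using the interference edges and the component structure) is in the right spirit, but two of its load-bearing steps fail. First, the reduction to three \emph{pairwise non-adjacent} terminals with private witnesses is not available. From a minimal cover of $W_i$ of size at least three you do get three terminals each with a private witness, but your argument that they must be pairwise non-adjacent is wrong: two adjacent terminals among the three do not create a triangle in $G[T]$, and nothing else forbids, say, $a_1a_2\in E$ with $a_3$ non-adjacent to both. The correct reduction --- and the one the paper uses --- needs only \emph{two} non-adjacent terminals $z_1,z_2$ with mutually private witnesses $u_1\in (N(z_1)\setminus N(z_2))\cap W_i$ and $u_2\in (N(z_2)\setminus N(z_1))\cap W_i$; ruling this configuration out, together with triangle-freeness of $G[T]$ (Lemma~\ref{lem:triangle}), makes the sets $N(z)\cap W_i$ pairwise nested along independent sets of terminals and yields a cover of size two.

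Second, and more seriously, your asteroidal triple lives entirely inside $T_i$ and you propose to connect $a_p$ to $a_q$ by concatenating the realizing paths $R_j$ \emph{within} $H_i$. The closed-neighbourhood avoidance of the graphs $G_j$ only protects against terminals other than $s_j,t_j$; a route through the connected component $H_i$ must pass through a sequence of intermediate terminals of $T_i$, and nothing prevents one of them from being adjacent to $a_r$ --- indeed $a_r$ could be a cut vertex of $H_i$ separating $a_p$ from $a_q$, in which case no avoiding route exists at all. This is precisely why the paper takes the third vertex of the asteroidal triple to be an arbitrary terminal $w\in T_{i+1}$ of the \emph{next} component: the $(z_1,z_2)$-path through $H_i$ automatically avoids $N[w]$ because $H_i$ and $H_{i+1}$ are distinct components of $H$, while each $(z_\ell,w)$-path is routed through the private witness $u_\ell$, across the interference edge into the territory of $S_{i+1}$, and along the realizing paths of $H_{i+1}$ --- avoiding $N[z_{3-\ell}]$ exactly because $u_\ell\notin N(z_{3-\ell})$ and all the $S_{i+1}$-side pieces avoid closed neighbourhoods of terminals in $T_i$. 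Without moving the third vertex out of $T_i$ (or rerouting all three connections through the $S_{i+1}$ side, which in any case still leaves the first gap), the contradiction does not go through.
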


\begin{proof}
By Lemma~\ref{lem:interf}, each vertex from $W_i$ is in the neighbourhood of some terminal vertex in $T_i$. By  Lemma~\ref{lem:triangle}, $G[T]$ is triangle free. 
Hence, to obtain a contradiction, it is sufficient to suppose that there are two non-adjacent terminal vertices $z_1,z_2$ in $T_i$ and vertices $u_1\in (N(z_1)\setminus N(z_2))\cap W_i$, $u_2\in (N(z_2)\setminus N(z_1))\cap W_i$. For any pair of terminals $(s_j,t_j)$ from $S_{i}$ or $S_{i+1}$,  there is an $(s_j,t_j)$-path $R_j$ in 
$G_j$ by Condition~iii).
Moreover, by the definition of $W_i$, 
there are vertices $v_1$ and $v_2$ adjacent to $u_1$ and $u_2$, respectively, such that 
there are two terminal pairs $(s_p,t_p)$ and $(s_q,t_q)$ in $S_{i+1}$ and there are 
$(s_p,t_p)$ and $(s_q,t_q)$-paths $Q_p$ and $Q_q$ in $G_p$ and $G_q$ that pass through $v_1$ and~$v_2$, respectively.

Consider the subgraph~$R$ of $G$ induced by all the vertices of the paths $R_j$, $Q_p,Q_q$, and the vertices $u_1,u_2$. Let $w$ be an arbitrary terminal in $S_{i+1}$. It remains to observe that:
\begin{itemize}
\item the connectedness of $H_i$ implies that using the paths $R_j$ there is a path in $R$ from $z_1$ to $z_2$ that avoids the closed neighbourhood of any terminal vertex in $T_{i+1}$, and in particular, of $w$;
\item the fact that $z_2 \not\in T_{i+1}$ and $u_1 \not\in N(z_2)$ implies that using the vertices $u_1$, $v_1$, and the path $Q_p$ there is a path in $R$ from $z_1$ to $s_p$ that avoids the closed neighbourhood of $z_2$ and of any terminal in $T_{i+1} \setminus \{s_p\}$;
\item similarly, there is a path in $R$ from $z_2$ to $s_q$ that avoids the closed neighbourhood of $z_1$ and of any terminal in $T_{i+1} \setminus \{s_q\}$;
\item the connectedness of $H_{i+1}$ implies that using the paths $R_j$ there is a path in $R$ from $s_p$ to $w$ and from $s_q$ to $w$ that avoids the closed neighbourhood of any terminal vertex in $T_{i}$, and in particular, of $z_1$ and $z_2$.
\end{itemize}
Hence, $z_1,z_2,w$ form an asteroidal triple, a contradiction.\qed
\end{proof}
 
We obtain the following result on the sets $Z_1,\ldots,Z_{r-1}$ defined
in Lemma~\ref{lem:interf-Hi}.

\begin{lemma}\label{lem:Zi}
The sets $Z_1,\ldots,Z_{r-1}$ can be found in polynomial time.
\end{lemma}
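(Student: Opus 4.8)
The plan is to treat this as a purely algorithmic task, since Lemma~\ref{lem:interf-Hi} already guarantees that for each $i$ a suitable set $Z_i\subseteq T_i$ with $|Z_i|\le 2$ and $W_i\subseteq N(Z_i)$ exists. Fix $i\in\{1,\ldots,r-1\}$. I would proceed in two stages: first compute the set $W_i$ explicitly, and then search for a covering pair of terminals by brute force. Once membership in $W_i$ can be decided in polynomial time, the rest is routine: there are only $O(|T_i|^2)$ candidate subsets $Z\subseteq T_i$ with $|Z|\le 2$ (counting singletons and the empty set), and for each candidate the containment $W_i\subseteq N(Z)$ is verified by scanning the at most $n$ vertices of $W_i$. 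I output the first candidate that passes, which must exist by Lemma~\ref{lem:interf-Hi}. Repeating this for every $i$ multiplies the running time by at most $r-1\le n$, so the whole procedure stays polynomial.

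The heart of the argument is therefore to show that ``$u\in W_i$'' can be decided in polynomial time for a given vertex~$u$. The apparent difficulty is that the definition of $W_i$ refers to \emph{induced} $(s_p,t_p)$- and $(s_q,t_q)$-paths passing through prescribed vertices, and detecting an induced path through a fixed interior vertex is not obviously tractable. To get around this I would reuse the mechanism behind Lemma~\ref{lem:interf-poly}. By Lemma~\ref{lem:interf}, any edge $uv$ witnessing interference between a pair of $S_i$ and a pair of $S_{i+1}$ satisfies $u\in N(s_p)\cup N(t_p)$ for some $(s_p,t_p)\in S_i$ and $v\in N(s_q)\cup N(t_q)$ for some $(s_q,t_q)\in S_{i+1}$. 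Hence to test $u\in W_i$ it suffices to run the four-tuple test from the proof of Lemma~\ref{lem:interf-poly} over all pairs $(s_p,t_p)\in S_i$ and $(s_q,t_q)\in S_{i+1}$, with $u$ pinned as the coordinate adjacent to $s_p$ or $t_p$ and the witnessing edge required to be incident with~$u$. That test only asks for the existence of \emph{arbitrary} paths in $G_p$ and $G_q$ between prescribed terminal-neighbours that avoid the relevant terminals, together with the witnessing edge. As noted in Remark~2, any such path can be shortcut into an induced one, so the configuration certifies that $(s_p,t_p)$ and $(s_q,t_q)$ are interfering with $u$ lying on the corresponding induced $(s_p,t_p)$-path; conversely, every $u\in W_i$ yields such a configuration. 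Each instance of the test is a collection of plain connectivity queries, and there are only polynomially many choices of pairs and neighbours, so membership in $W_i$ is decidable in polynomial time, and $W_i$ itself is obtained by running the test for every $u\in V_G$.

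The only step that requires care is this membership test, and specifically the equivalence between the induced-path formulation in the definition of $W_i$ and the connectivity formulation used to decide it; I expect this to be the main obstacle. It is resolved exactly as for Lemma~\ref{lem:interf-poly}, by combining the localisation of witnessing edges to terminal neighbourhoods provided by Lemma~\ref{lem:interf} with the shortcutting observation of Remark~2, which together let us replace the search for induced paths through a fixed vertex by polynomially many reachability checks. Given $W_i$, the covering search over pairs of terminals is immediate, and Lemma~\ref{lem:interf-Hi} certifies that this search never fails, which completes the proof.
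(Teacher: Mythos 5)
Your overall strategy coincides with the paper's: compute $W_i$ explicitly by combining the localisation of conflict vertices to terminal neighbourhoods from Lemma~\ref{lem:interf} with the reachability tests behind Lemma~\ref{lem:interf-poly}, then find a covering set of at most two terminals by brute force over the $O(|T_i|^2)$ candidates; the paper's own proof is exactly this, stated even more tersely. The one place where your write-up is not quite right is the step you yourself single out as the crux. You argue that a (not necessarily induced) path $Q_p$ in $G_p$ from $u=u_1\in N(s_p)$ to some $u_2\in N(t_p)$ avoiding $s_p,t_p$ certifies, after shortcutting, an induced $(s_p,t_p)$-path containing $u$. Shortcutting the walk $s_p\,Q_p\,t_p$ does produce an induced $(s_p,t_p)$-path, but it need not retain $u$: if $s_p$ has another neighbour further along $Q_p$, the shortcut skips $u$ entirely, so your test as stated over-approximates $W_i$. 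An over-approximation is not harmless here, because Lemma~\ref{lem:interf-Hi} only guarantees a two-terminal cover of the true $W_i$, so the covering search could fail on a strict superset. The repair is short: since $u\in N(s_p)$ and $s_p t_p\notin E$ by condition~ii), any induced $(s_p,t_p)$-path containing $u$ must have $u$ as its second vertex, so such a path exists if and only if $t_p$ is reachable from $u$ in $G_p$ after deleting $s_p$ and all of $N(s_p)\setminus\{u\}$ (symmetrically if $u\in N(t_p)$); a shortcut of any witness path in that restricted graph is then forced to keep $u$ as the unique neighbour of $s_p$ on the path. With the reachability queries restricted in this way, and the analogous restriction for the partner vertex $v$ on the $S_{i+1}$ side, your membership test becomes both sound and complete, and the rest of your argument goes through.
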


\begin{proof}
Let $1\leq i\leq r-1$.
By Lemma~\ref{lem:interf} we can compute the set $W_i$ in polynomial time (see also the proof of Lemma~\ref{lem:interf-poly}).
By Lemma~\ref{lem:interf-Hi}, it suffices to determine as $Z_i$ 
a set of at most two of terminals in $T_i$ that are adjacent to every vertex of $W_i$.
This can be done in polynomial time as well. \qed 
\end{proof}

\subsection{Phase 3: Dynamic Programming}\label{sec:dp}
We are now ready to give a dynamic-programming algorithm for {\sc Induced Disjoint Paths}.
For simplicity, we solve the decision problem here, that is, we only check for the existence of paths, but the algorithm can be modified to get the paths themselves (if they exist).

Our algorithm is based on the information obtained from Phases~1 and~2, which leads to the following intuition. By Condition~iv), there is only interference between sets $S_i$ and $S_{i+1}$ for $1 \leq i < r$. By Lemma~\ref{lem:interf}, this interference is restricted to the neighbourhoods of the terminals of the pairs in $S_i$ and $S_{i+1}$. Even stronger, by the definition of $W_i$ and Lemma~\ref{lem:interf-Hi}, the interference caused by terminals in $T_i$ is restricted to the neighbourhoods of the terminal vertices in $Z_i$, where $|Z_i| \leq 2$. By Lemma~\ref{lem:degree}, each terminal vertex represents at most five terminals. Hence, there are at most ten vertices in $N(Z_i)$ that cause interference between $S_i$ and $S_{i+1}$. Unfortunately, we do not know these ten vertices. However, since $|Z_i| \leq 2$ and $Z_i$ can be found in polynomial time (by Lemma~\ref{lem:Zi}), we can enumerate all possibilities for these ten vertices in polynomial time. In fact, it suffices to enumerate only the set $Y$ of vertices among the ten vertices that are actually used by the paths in a solution.

This intuition leads to the following dynamic programming algorithm: for each $i\in\{1,\ldots,r\}$ and 
$Y\subseteq N(Z_{i})$ of size at most~$10$, 
we define a routine that solves {\sc Induced Disjoint Paths} for the graph~$F_i=G[V\setminus(\bigcup_{j\in\{i+1,\ldots r\}}\bigcup_{u\in T_j}N[u])]$ with the set of terminal pairs 
$S_i'=\bigcup_{j=1}^iS_j$ under the following 
additional condition: the set of non-terminal vertices from $N(Z_i)$ used by the paths in a solution is a subset of $Y$.
Call this routine {\tt Sol}$(i,Y)$. We execute this routine sequentially for $i=1,\ldots,r$, and we are clearly interested in deciding whether {\tt Sol}$(r,Y)$ returns {\tt yes} for some $Y \subseteq N(Z_r)$. The crux then is to show what the return value for {\tt Sol}$(i,Y)$ for a particular value of $i$ should be, using the return values for {\tt Sol}$(i-1,\cdot)$.

To this end, we construct the subroutine {\tt Component}$(i,X,Y)$: for each $i\in\{1,\ldots,r\}$
and for any two sets $X\subseteq N(Z_{i-1})$, $Y\subseteq N(Z_{i})$ of size at most~$10$ each, 
the subroutine solves {\sc Induced Disjoint Paths} 
for the graph~$F_i=G[V\setminus(\bigcup_{j\in\{1,\ldots r\},j\neq i}\bigcup_{u\in T_j}N[u])]$ with set of terminal pairs $S_i$ 
under the following two additional conditions:
\begin{itemize}
\item[a)] the paths from a solution are not adjacent to the vertices of $X$, and
\item[b)] the set of non-terminal vertices from $N(Z_i)$ used by the paths in a solution is a subset of $Y$.
\end{itemize}
Then, to determine the return value of {\tt Sol}$(i,Y)$, we consider all $X\subseteq N(Z_{i-1})$ for which {\tt Sol}$(i-1,X)$ returns {\tt yes} and let {\tt Sol}$(i,Y)$ return {\tt yes} if {\tt Component}$(i,X,Y)$ returns {\tt yes}. Otherwise, we let {\tt Sol}$(i,Y)$ return {\tt no}.

The correctness of the above algorithm is clear from the preceding discussion, the description of the dynamic program routine {\tt Sol}, and the description of the subroutine {\tt Component}. Moreover, the algorithm will run in polynomial time if the desired subroutine {\tt Component} can be constructed to run in polynomial time. We claim that {\tt Component} can indeed be so constructed, and the remainder of this section is dedicated to proving this claim.

\subsubsection{Constructing the Component Subroutine.} \label{sec:comp}
We are looking for an induced subgraph~$H'_i$ in $F_i$ obtained by subdivisions $P_l'$ of the paths $P_l$ of $H_i$ (under some additional constraints). We call this a \emph{solution}. Recall that $D_i$ is a dominating path of $H_i$ that joins terminals $x_i$ and $y_i$, which means that $D_i$ is a shortest path between $x_i$ and $y_i$ such that each vertex of $H_i$ lies on $D_i$ or has a neighbour on $D_i$. We construct $D_i$ by a breadth-first search that puts path vertices in its queue first (this is only done to improve the running time and does not influence correctness).

In order to find $H_i'$, we will attempt to ``trace'' the $(x_i,y_i)$-path $D_i'$ in $F_i$ that is in the solution $H_i'$ and is a subdivision of $D_i$. 
By Lemma~\ref{lem:Pi-deg}, we are interested in only a bounded number of vertices adjacent to the vertices of $D_i'$, and by 
Lemma~\ref{lem:subdiv}, the paths outside $D_i'$ 
have length at most~3
and~$D_i'$ dominates almost all vertices of these paths. Hence, vertices of $H_i'$ are ``close'' to $D_i'$ and their number is ``small''. The crux, therefore, is to find $D_i'$. To this end, we devise a dynamic-programming algorithm that maintains a small part of $D_i'$ and of the solution that we are building. We show that this is indeed possible by relying on the AT-freeness of $F_i$.

To formalize the preceding intuition, we require some definitions. Let $u_1,\ldots,u_p$ be the terminal vertices on $D_i$ enumerated in path order, that is, $x_i=u_1$ and $y_i=u_p$. 
Recall that $P_l$ is an $(s_l,t_l)$-path in $H_i$ for $(s_l,t_l)\in S_i$; each $P_l$ has a single internal vertex, its path vertex. 
For $j\in\{1,\ldots,p\}$, let $U_j$ be the set of terminal vertices in 
$(N_G(u_j)\setminus\{u_1,\ldots,u_p\})\cup\{u_j\}$.
Observe that $\bigcup_{j=1}^{p} U_j = T_i$, because $D_i$ is a dominating path of $H_i$ and if $D_i$ contains a path vertex (i.e.~a non-terminal vertex of $H_i$), then $D_i$ also contains both its neighbours, the vertices of the corresponding terminal pair (note that the ends of $D_i$ are terminal vertices).
From this observation and Condition~ii), for any pair  $(s_l,t_l) \in S_i$, we have that $s_l\in U_{j}$ and $t_l\in U_{j'}$ for some $j,j'\in\{1,\ldots,p\}$ and $j\neq j'$. 
To simplify our arguments, we assume 
without loss of generality
that if $s_l\in U_{j}$ and $t_l\in U_{j'}$, then $j<j'$.
For a set $R'$ of vertices, we let $U(R') = \bigcup_{j \in R'} U_j$, where the union is over all terminals in both $R'$ and $D_i$.
For a vertex~$z$ of $D_i$, we say that a terminal~$u\in T_i$ is \emph{behind} $z$ if $u\in U_j$ for $u_j$ in the $(x_i,z)$-subpath of $D_i$; otherwise we say that a terminal~$u\in T_i$ is \emph{ahead} of $z$.
By our earlier observation, the notions of behind and ahead of $z$ induce a partition of $T_i$.

Now, as mentioned before, we would like to ``trace'' an $(x_i,y_i)$-path $D_i'$ in $F_i$ that is a subdivision of $D_i$. We will argue that, in the dynamic program, it suffices to maintain only the last five vertices of the path $D_i'$ under construction. To this end, we require the following structural property.

\begin{lemma} \label{lem:struct-at}
Let $C = \{x_1,\ldots,x_t\}$ be the ordered set of vertices of a cycle in an AT-free graph 
for some $t\geq 3$. 
Then there exist integers $i,j$ with $1 \leq i < t$, $2 \leq j \leq 4$ 
and $i+j\leq t$ 
such that $x_i$ and $x_{i+j}$ are adjacent.\footnote{Note that the lemma does \emph{not} use indices module $t$. Although the lemma is certainly true in that setting, this is not how we will use it.}
\end{lemma}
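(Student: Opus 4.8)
The plan is to argue by strong induction on the cycle length $t$, using as the only consequence of AT-freeness the well-known fact, already invoked in Section~\ref{s-intro}, that an AT-free graph contains no induced cycle on six or more vertices. Throughout I would treat $x_tx_1$ as an ordinary cycle edge, so that a \emph{chord} of $C$ is an edge $x_ax_b$ with $1\le a<b\le t$ and $b-a\ge 2$. The point to keep in mind is that such a chord \emph{automatically} satisfies the index constraints of the lemma: indeed $i:=a\ge 1$, $a\le b-2\le t-2<t$, and $a+(b-a)=b\le t$. Hence the only thing at stake for any chord is whether its span $b-a$ lies in $\{2,3,4\}$, and likewise for any adjacency of the form $x_ix_{i+j}$ the constraint $i+j\le t$ is exactly the statement that we do not wrap around.

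For the base cases $t\in\{3,4,5\}$ I would simply use the cycle edge $x_tx_1$: setting $i=1$ and $j=t-1$ gives $j\in\{2,3,4\}$, $i+j=t\le t$, and $x_1$ adjacent to $x_{1+j}=x_t$. For the inductive step I assume $t\ge 6$ and that the statement holds for every cycle of length strictly less than $t$. Since $C$ has length at least six and the graph is AT-free, $C$ cannot be chordless, so it has a chord $x_ax_b$ with $a<b$. If $b-a\in\{2,3,4\}$ we are immediately done with $(i,j)=(a,b-a)$. Otherwise $b-a\ge 5$, and I would pass to the shorter cycle $C'=x_a x_{a+1}\cdots x_b x_a$ formed by the forward arc between $x_a$ and $x_b$ together with the chord. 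Its vertices are distinct (a contiguous sub-arc of $C$) and its length $b-a+1$ satisfies $6\le b-a+1\le t-1<t$, the strict inequality holding because a chord cannot be the pair $(1,t)$; thus the induction hypothesis applies to $C'$.

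The last step is to translate the short chord produced for $C'$ back into one for $C$. Writing the vertices of $C'$ as $y_m=x_{a+m-1}$, the induction hypothesis yields $p,q$ with $q\in\{2,3,4\}$ and $p+q\le b-a+1$ such that $y_p$ and $y_{p+q}$ are adjacent. Since $y_p=x_{a+p-1}$ and $y_{p+q}=x_{a+p+q-1}$, this is precisely the adjacency $x_ix_{i+q}$ for $i=a+p-1$, whose span is $q\in\{2,3,4\}$ and which satisfies $i+q=a+p+q-1\le b\le t$ as well as $i\le t-2<t$, exactly as required by the lemma. Note that adjacency is a property of the ambient graph, independent of which cycle we read it off, so this transfer is legitimate.

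The substantive content is carried entirely by the no-induced-$C_{\ge 6}$ property, which guarantees a chord at every level of the recursion; in particular I never have to exhibit an asteroidal triple by hand, and configurations that ``only have long chords'' are ruled out because the shorter cycle $C'$ they create must itself have a chord. I expect the main obstacle to be purely the index bookkeeping under the passage to $C'$ — verifying that the span is preserved, that the translated indices stay in range, and, most importantly, that each chord yields a \emph{proper} subcycle so that the induction strictly decreases $t$ and terminates at the base cases. A tempting but doomed alternative, worth contrasting, is to fix a single candidate triple such as $x_1,x_3,x_5$: this works for $t\le 7$ but fails once long chords (e.g.\ $x_3x_8$) can re-enter the neighbourhood of a would-be apex, which is exactly the difficulty the inductive reduction circumvents.
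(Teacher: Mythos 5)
Your proof is correct and follows essentially the same route as the paper's: both dispose of the base case $t\le 5$ via the cycle edge $x_1x_t$, and both exploit that a chord of span at least $5$ cuts off a shorter cycle which, being AT-free-forbidden as an induced $C_{\ge 6}$, must again contain a chord. The only difference is packaging — your strong induction on the subcycle bounded by an arbitrary long chord replaces the paper's one-shot extremal argument (choosing the chord of minimum span, whose bounded subcycle is then chordless), and your index translation $i=a+p-1$, $i+q\le b\le t$ correctly verifies the constraints the paper obtains directly from minimality.
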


\begin{proof}
Suppose $C$ has a chord. Then there exist indices $\alpha,\beta$ with $1 \leq \alpha < \alpha + 1 < \beta \leq t$ such that $x_{\alpha}x_{\beta}$ is an edge. Choose the chord such that $\beta-\alpha$ is smallest. If $\beta-\alpha \leq 4$, then the lemma holds. Hence, we may assume otherwise. Note that for any integers $\gamma,\delta$ with $\alpha \leq \gamma < \gamma + 1 < \delta \leq \beta$ it holds that either $v_{\gamma}v_{\delta}$ is not an edge (this would contradict the choice of $\alpha,\beta$) or $\gamma = \alpha$ and $\delta = \beta$. Hence, the vertices of $v_{\alpha},\ldots,v_{\beta}$ induce a cycle in the graph, which has length at least $6$ since $\beta-\alpha > 4$. Such a chordless cycle trivially has an asteroidal triple for the graph (for example, take $v_{\alpha}, v_{\alpha+2}, v_{\alpha+4}$), a contradiction.

It follows that the cycle has no chord. If $|C| \leq 5$, then the lemma follows immediately. Otherwise, the chordless cycle trivially contains an asteroidal triple for the graph (for example, take $v_1$, $v_3$, $v_5$), a contradiction. \qed
\end{proof}
We can now proceed with the description of the algorithm but first need a definition. Namely, a
pair of paths $P'_a$, $P'_b$ that are subdivisions of $P_a$, $P_b$ in $G_a$ and $G_b$, respectively, \emph{conflicts} if $P'_a$ intersects $P'_b$ in a vertex $v$ or there is an edge $vw \in E_G$ such that $v$ is an internal vertex of $P'_a$ and $w$ is an internal vertex of $P'_b$. In the former case, we call $v$ an \emph{intersection conflict vertex}, or simply a \emph{conflict vertex}. In the latter case, we call $vw$ a \emph{conflict edge} and both $v$ and $w$ \emph{conflict vertices}.

Note that it would appear another potential conflict between paths $P_a'$, $P_b'$ in $G_a$ and $G_b$ can occur, namely if there is an edge between an endpoint $v$ of (say) $P_a'$ and an internal vertex $w$ of $P_b'$. However, then $w \not\in V_{G_b}$ by definition (a contradiction) or $v \in \{s_b,t_b\}$ (which is allowed). Hence, if we are able to find paths $P'_l$ for all $(s_l,t_l) \in S_i$ such that $P'_l$ is an induced path in $G_l$ and the paths do not conflict, then these paths form a mutually induced set of paths.

Intuitively, ensuring there are no conflicts would be relatively straightforward to guarantee for pairs whose path vertices are not on $D_i$, because by combining that they should have length at most~$3$ by Lemma~\ref{lem:subdiv} and an argument that we need to consider few such pairs at any point in time, this limits the potential of conflict to polynomially many choices. However, for the other terminal pairs, this is much more complicated, because their paths $P'_l$ can have arbitrary length.

We need two more definitions.
Let $O_i \subseteq S_i$ be the set of pairs $(s_a,t_a) \in S_i$ for which the path vertex of $P_a$ is on $D_i$.
A \emph{walk} $W$ in $G$ is an ordered set $\{v_1,\ldots,v_{|W|}\}$ of vertices of $G$ such that $v_iv_{i+1} \in E_G$ for $i \in \{1,\ldots,|W|-1\}$.

Our algorithm is built on the intuition that we only ensure that $D_i'$ and the paths $P'_l$ with $l \in O_i$ are walks. In AT-free graphs, however, Lemma~\ref{lem:struct-at} will help prove that by considering the last five vertices of $D_i'$, we avoid all conflicts.

\medskip
\noindent 
{\bf Definition.}
{\it A \emph{realization} $(H_i',D_i', \{P'_l\})$ for $(H_i,D_i,\{P_l\})$ consists of:
\begin{itemize}
\item an induced subgraph $H'_i$ of $F_i$ such that no non-terminal vertices in $H'_i$ are adjacent to $X$ and all non-terminal vertices in $V_{H'_i} \cap N(Z_i)$ are in $Y$;
\item a walk $D'_i$ in $H_i'$ that is a subdivision of $D_i$; in particular, it visits $u_1,\ldots,u_p$ in order (without duplication);
\item for each $(s_l,t_l) \in S_i\setminus O_i$, there is an induced path $P'_l$ in $H'_i$ and in $G_l$ from $s_l$ to $t_l$ of length $2$ or $3$;
\item for each $(s_l,t_l) \in O_i$, there is a walk $P'_l$ in $H'_i$ and $G_l$ of length at least~$2$ from $s_l$ to $t_l$ and $P'_l$ is a subwalk of $D'_i$.
\end{itemize}
A realization is an \emph{embedding} if the set $\{P'_l\}$ of paths is a set of mutually induced disjoint paths.}

\medskip
\noindent
Note that if $(H_i',D_i', \{P'_l\})$ is an embedding for $(H_i,D_i,\{P_l\})$, then the paths $\{P_l'\}$ form a solution. Moreover, the conditions on the paths imply that $D_i'$ is an induced path of $F_i$. Indeed, then the only possible edges between vertices of $D_i'$ would be between non-consecutive terminals on $D_i'$. However, these edges are part of $G$ and thus $H_i$ as well. As $D_i'$ is a subdivision of $D_i$ by the definition of a realization, this contradicts that $D_i$ is a shortest path in $H_i$.

Observe that the preceding argument also implies that if $D_i'$ is not an induced path, then this is due to a conflicting pair or because $D_i'$ is not a subdivision of $D_i$.

Clearly, our goal is to find an embedding, which is stronger than a realization. As mentioned, our algorithm aims to find a realization. The crux is how to ensure that we end up with an embedding. The following lemma is crucial.

\begin{lemma} \label{lem:component-correct}
Suppose $(H_i',D_i', \{P'_l\})$ is a realization for $(H_i,D_i,\{P_l\})$ in an AT-free graph, but not an embedding. Then there is a subwalk $R'$ of $D'_i$ with $|R'| \leq 5$ such that:
\begin{itemize}
\item $R'$ is not an induced path, \emph{or}
\item $U(R')$ contains at least one of $\{s_a,t_a\}$ and at least one of $\{s_b,t_b\}$ for some conflict pair $P'_a$,$P'_b$; moreover, if $(s_a,t_a) \in O_i$ (or $(s_b,t_b) \in O_i$), then $R'$ contains a conflict vertex of $P'_a$ (or $P'_b$).
\end{itemize}
\end{lemma}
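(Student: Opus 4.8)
The plan is to show that if a realization fails to be an embedding, then \emph{some} violation of the embedding conditions occurs, and that any such violation can be certified by looking at a bounded window of at most five consecutive vertices of $D_i'$. Since an embedding requires exactly that the paths $\{P_l'\}$ be mutually induced and disjoint, a realization that is not an embedding must either have a $D_i'$ that is not an induced path (so some non-consecutive vertices of $D_i'$ are adjacent), or contain a conflicting pair $P_a', P_b'$ in the sense of the Definition. The first case is already essentially handled by the remark preceding the lemma, so the bulk of the work is the conflict case. First I would dispose of the case where $D_i'$ itself is not an induced path: by the discussion above, the only possible non-induced behaviour comes from an edge between non-consecutive vertices of $D_i'$, and I would invoke Lemma~\ref{lem:struct-at} on the cycle formed by such a chord together with the $D_i'$-subpath between its endpoints to extract two vertices at distance at most~$4$ along $D_i'$ that are adjacent, yielding a subwalk $R'$ with $|R'|\le 5$ that is not an induced path.

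For the conflict case, the key idea is to locate the conflict along $D_i'$ using the domination and degree properties already established. Suppose $P_a', P_b'$ conflict, via either a conflict vertex $v$ (intersection) or a conflict edge $vw$. By Lemma~\ref{lem:Pi-term}, at least one of $\{s_a,t_a\}$ and at least one of $\{s_b,t_b\}$ lies on $D_i$, hence on $D_i'$; these terminals are the anchors I want to capture in $U(R')$. The plan is to show that the relevant terminals (and, when the pair is in $O_i$, the actual conflict vertex on the path, since $O_i$-paths run along $D_i'$) all lie within a window of five consecutive vertices of $D_i'$. I would argue this by combining two facts: by Lemma~\ref{lem:subdiv}(a), every pair in $S_i\setminus O_i$ has a path of length at most~$3$, so its conflict vertices are within distance~$1$ of one of its endpoints on $D_i'$; and by Lemma~\ref{lem:subdiv}(b) together with Lemma~\ref{lem:Pi-deg}, each vertex of $D_i'$ is adjacent to only a bounded number of off-path terminals and path-vertices, so the conflicting structures attached near a given stretch of $D_i'$ are confined. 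The genuinely delicate subcase is when both conflicting pairs are in $O_i$: here both $P_a'$ and $P_b'$ are subwalks of $D_i'$ and can be long, so an intersection or conflict edge between them corresponds to a near-repetition or a short chord of $D_i'$, and I would again apply Lemma~\ref{lem:struct-at} to the cycle these subwalks close up (or to the cycle formed by a conflict edge and the intervening $D_i'$-segment) to force the two conflict vertices to within distance~$4$ along $D_i'$.

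The main obstacle I anticipate is bounding the window in the mixed and $O_i$--$O_i$ conflict cases simultaneously, because the $O_i$-paths have unbounded length and a conflict edge $vw$ between two such paths need not be short in path-order on $D_i'$ a priori; the whole point of invoking the cycle-chord structure of Lemma~\ref{lem:struct-at} is to convert ``there is a conflict edge somewhere'' into ``there is an adjacency between two vertices of $D_i'$ that are at most four apart,'' which is exactly what collapses the witness to $|R'|\le 5$. I would therefore organize the proof as a case analysis — ($D_i'$ not induced), (conflict with at least one pair in $S_i\setminus O_i$), and (both pairs in $O_i$) — and in each case produce the subwalk $R'$ explicitly, verifying that it has length at most~$5$, that it satisfies the stated property about $U(R')$ containing an endpoint of each conflicting pair, and that it contains a conflict vertex of any $O_i$-pair involved. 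The cleanest phrasing is to let $R'$ be the $D_i'$-subwalk between (and including) the two adjacent or coincident vertices furnished by Lemma~\ref{lem:struct-at}, and then to check the $U(R')$ condition by tracking which terminals on $D_i'$ dominate the conflict vertices via the length-at-most-$3$ bound of Lemma~\ref{lem:subdiv}(a).
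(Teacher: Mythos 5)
Your plan follows essentially the same route as the paper's proof: first reduce the case where $D_i'$ fails to be an induced path (or is not even a path) to Lemma~\ref{lem:struct-at} applied to the cycle closed by a chord or a repeated vertex, then handle conflicts by a case analysis on whether the conflicting pairs lie in $O_i$, using Lemma~\ref{lem:Pi-term} to anchor an endpoint of each pair on $D_i'$ and Lemma~\ref{lem:subdiv}(a) to confine the conflict vertices of $S_i\setminus O_i$ paths to distance one from their endpoints. The $O_i$--$O_i$ case indeed collapses into the ``$D_i'$ not induced'' case, exactly as you suggest.

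There is one step where your plan, as written, would not quite go through. In the mixed and $(S_i\setminus O_i)$--$(S_i\setminus O_i)$ conflict-edge cases, the cycle $C_{vw}$ formed by the conflict edge, a segment of $D_i'$, and segments of $P_a'$ and/or $P_b'$ contains internal vertices that are \emph{not} on $D_i'$. Applying Lemma~\ref{lem:struct-at} to such a cycle produces a chord, but that chord need not join two vertices of $D_i'$: it can join an internal vertex of $P_a'$ to an internal vertex of $P_b'$, i.e.\ it is simply another conflict edge, and you have made no progress toward a window of five consecutive $D_i'$-vertices. The paper closes this by an extremal choice: pick the conflict edge $vw$ whose associated cycle $C_{vw}$ is shortest; then any chord of $C_{vw}$ is excluded from lying inside $D_i'$, inside $P_a'$, or inside $P_b'$ (by the already-established induced-path claims), and also from joining $D_i'$ to an internal vertex of $P_a'$ or $P_b'$ (by the earlier eliminations), so it must be a new conflict edge whose cycle is strictly shorter --- a contradiction. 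Hence $|C_{vw}|\leq 5$ outright, and the desired subwalk $R'$ is read off from $C_{vw}$. You should add this minimality (or equivalent descent) argument; with it, your case analysis matches the paper's proof.
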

\begin{proof}
We start with the following claims.

\begin{quote}
(C1) $D'_i$ is a path, or a subwalk $R'$ as in the lemma statement exists.
\end{quote}
Suppose that $D'_i$ is not a path. Let $x$ be any vertex on the walk $D_i'$ such that the distance between two occurrences of $x$ on $D_i'$ is minimum. Let $x = x_1,\ldots,x_t=x$ be the subwalk of $D_i'$ between those two occurrences; that is, $x \not\in \{x_2,\ldots,x_{t-1}\}$, $x_2,\ldots,x_{t-1}$ are distinct, and $t$ is minimum over all choices of $x$. (Note that any $x_i$ might still occur multiple times on $D_i'$ outside of the subwalk.) It follows that the subwalk forms a cycle. Then by Lemma~\ref{lem:struct-at}, there exist integers $i,j$ with $1 \leq i \leq t-1$ and $2 \leq j \leq 4$ such that $x_{i} x_{i+j}$ is an edge. Then let $R' = \{x_i,\ldots,x_{i+j}\}$. This subwalk of $D'_i$ is not an induced path and has $|R'| \leq 5$.
\medskip

\begin{quote}
(C2) $D'_i$ is an induced path, or a subwalk $R'$ as in the lemma statement exists.
\end{quote}
By (C1), we may assume that $D'_i$ is a path. Let $x,y$ be two non-consecutive vertices of $D'_i$ such that $xy$ is an edge. Consider the cycle $x = x_1,\ldots,x_t = y$ formed by the edge $xy$ and the part of the path $D'_i$ between $x$ and $y$. Then by Lemma~\ref{lem:struct-at}, there exist integers $i,j$ with $1 \leq i \leq t-1$ and $2 \leq j \leq 4$ such that $x_{i} x_{i+j}$ is an edge. Then let $R' = \{x_i,\ldots,x_{i+j}\}$. This subwalk of $D'_i$ is not an induced path and has $|R'| \leq 5$.
\medskip

\begin{quote}
(C3) All paths in $\{P'_l\}$ are induced paths, or a subwalk $R'$ as in the lemma statement exists.
\end{quote}
Note that a path $P'_l$ with $(s_l,t_l) \in S_i \setminus O_i$ is an induced path by the definition of a realization. Since a path $P'_l$ with $(s_l,t_l) \in O_i$ is a subwalk of $D_i'$, the claim now follows from (C2).
\medskip

By the preceding claims, we may assume that $D_i'$ is an induced path and that all paths in $\{P'_l\}$ are induced paths. Hence, what is preventing the realization from being an embedding are conflicts. We now consider any conflicting pair and show that a subwalk $R'$ as in the lemma statement exists.

Suppose that $P'_a$ and $P'_b$ with $(s_a,t_a),(s_b,t_b) \in O_i$ conflict. Since both are subwalks of $D_i'$ by the definition of a realization, this implies that $D_i'$ is either not a path or not induced. Then we obtain a contradiction, because (C1) or (C2) implies that a subwalk $R'$ as in the lemma statement exists. Hence, we may now assume that no such conflicting pairs exists.

Suppose that $P'_a$ and $P_b'$ with $(s_a,t_a) \in O_i$ and $(s_b,t_b) \in S_i\setminus O_i$ conflict and have an intersection conflict vertex $v$. Since $P'_b$ has length at most~$3$, $v$ is a neighbour of $s_b$ or $t_b$; say $v \in N(s_b)$. Since $P'_a$ is a path in $G_a$, but contains $v$, it follows that $s_a = s_b$ or $t_a = s_b$; say $s_a = s_b$. Then $v \in N(s_a)$ and thus $v$ and $s_a$ are consecutive vertices of $P'_a$ and thus $D_i'$. Let $R' = \{v,s_a\}$. Note that $U(R')$ contains $s_a$ and $s_b$ and $R'$ contains a conflict vertex of $P'_a$. Thus, a subwalk $R'$ as in the lemma statement exists. Hence, we may now assume that no such conflicting pair exists. Since $D'_i$ is a subdivision of $D_i$ by the definition of a realization, this also implies that we may assume that no path~$P'_b$ with $(s_b,t_b) \in S_i\setminus O_i$ has an internal vertex that is a vertex of $D_i'$.

Suppose that $P'_a$ and $P_b'$ with $(s_a,t_a) \in O_i$ and $(s_b,t_b) \in S_i\setminus O_i$ conflict and have a conflict edge $vw$ but no intersection conflict vertex. Assume $v \in P'_a$ and $w \in P'_b$. By Lemma~\ref{lem:Pi-term}, without loss of generality $s_b \in D_i$ and thus $s_b \in D_i'$. Consider the cycle $C_{vw}$ that contains the edge $vw$ and the vertex $s_b$ of the graph induced by $P'_b$ and $D'_i$. Note that such a cycle indeed exists and is uniquely defined, and consists of the edge $vw$, a part of $P'_b$ and a part of $D'_i$. Now choose $a,b$ and $vw$ such that the length of $C_{vw}$ is minimum.

If $|C_{vw}| > 5$, then by Lemma~\ref{lem:struct-at}, $C_{vw}$ has a chord $xy$. This chord cannot have both endpoints in $D_i'$ or both in $P_b'$, or (C2) or (C3) implies that a subwalk $R'$ as in the lemma statement exists. In particular, combined with the fact that $P'_a$ and $P'_b$ are paths in $G_a$ and $G_b$ respectively by the definition of a realization, this implies that the set of endpoints of the chord is disjoint from $\{s_a,t_a,s_b,t_b\}$. Hence, the chord $xy$ is a conflict edge of $P'_a$ and $P'_b$ and the cycle $C_{xy}$ induced by it is shorter than $C_{vw}$. This contradicts the choice of $vw$.

It follows that $|C_{vw}| \leq 5$. Let $R'$ be the subpath of $C_{vw}$ between $s_b$ and $w$. Note that this is a subpath of $D'_i$. Moreover, $|R'| \leq 4$, because $|C_{vw}| \leq 5$ and $v$ is not on $R'$. Since $s_b \in R'$, $s_b \in U(R')$. Moreover, since $P'_a$ is a subpath of $D'_i$ and $s_b$ is a terminal on $R'$, it must be that $s_a$ or $t_a$ is in $R'$, and thus at least one of them is in $U(R')$. Finally, we note that $w \in R'$ by definition. Thus, a subwalk $R'$ as in the lemma statement exists. Hence, we may now assume that no such conflicting pair exists.

Suppose that $P'_a$ and $P_b'$ with $(s_a,t_a) \in S_i\setminus O_i$ and $(s_b,t_b) \in S_i\setminus O_i$ conflict and have an intersection conflict vertex $v$. Since $P'_b$ has length at most~$3$, $v$ is a neighbour of $s_b$ or $t_b$; say $v \in N(s_b)$. Since $P'_a$ is a path in $G_a$, but contains $v$, it follows that $s_a = s_b$ or $t_a = s_b$; say $s_a = s_b$. If $s_a$ (and thus $s_b$) is on $D_i$, then let $R' = \{s_a\}$ and we obtain a subwalk $R'$ as in the lemma statement. Otherwise, there is a terminal vertex~$u$ in $D_i$ such that $s_a \in N(u)$. Note that $u \in D'_i$. Let $R' = \{u\}$; then $s_a \in U(R')$ and we obtain a subwalk $R'$ as in the lemma statement. Hence, we may now assume that no such conflicting pair exists.

Suppose that $P'_a$ and $P_b'$ with $(s_a,t_a) \in S_i\setminus O_i$ and $(s_b,t_b) \in S_i\setminus O_i$ conflict and have a conflict edge $vw$ but no intersection conflict vertex. Assume $v \in P'_a$ and $w \in P'_b$. By Lemma~\ref{lem:Pi-term}, we may assume without loss of generality that $s_a \in D_i$ and $s_b \in D_i$. Hence, $s_a \in D_i'$ and $s_b \in D_i'$. Consider the cycle $C_{vw}$ formed by the edge $vw$, the subpath of $P'_a$ between $v$ and $s_a$, the subpath of $D'_i$ between $s_a$ and $s_b$ and the subpath of $P'_b$ between $s_b$ and~$w$. Choose $a,b$ and $vw$ such that the length of $C_{vw}$ is minimum.

If $|C_{vw}| > 5$, then by Lemma~\ref{lem:struct-at}, $C_{vw}$ has a chord $xy$. This chord cannot have both endpoints in $D'_i$, both in $P'_a$, or both in $P'_b$, or (C2) or (C3) implies that a subwalk~$R'$ as in the lemma statement exists. Moreover, by our earlier assumption, the chord cannot have an endpoint in $D'_i$ and an endpoint in one of $P'_a$ or $P'_b$. Also, combined with the fact that $P'_a$ and $P'_b$ are paths in $G_a$ and $G_b$ respectively by the definition of a realization, this implies that the set of endpoints of the chord is disjoint from $\{s_a,t_a,s_b,t_b\}$. Hence, the chord $xy$ is a conflict edge of $P'_a$ and $P'_b$ and the cycle $C_{xy}$ induced by it is shorter than $C_{vw}$. This contradicts the choice of $vw$.

It follows that $|C_{vw}| \leq 5$. Let $R'$ be the subpath of $C_{vw}$ between $s_a$ and $s_b$. Note that this is a subpath of $D'_i$. Moreover, $|R'| \leq 3$, because $|C_{vw}| \leq 5$ and $v$ and $w$ are not on $R'$ (as $v$ and $w$ are internal vertices of pairs $(s_a,t_a) \in S_i\setminus O_i$ and $(s_b,t_b) \in S_i\setminus O_i$). Since $s_a,s_b \in R'$, $s_a,s_b \in U(R')$. Thus, a subwalk $R'$ as in the lemma statement exists.
This completes our proof.
\qed\end{proof}

\subsubsection{Dynamic Program.}
We now define the table used in the subroutine {\tt Component}$(i,X,Y)$. Each entry uses the following fields:
\begin{itemize}
\item an integer $\ell \geq 0$;
\item an ordered set of vertices $R' \subseteq V_{F_i}$ of at most five vertices of $F_i$ such that $F_i[R']$ is a path;
\item a vertex $z \in D_i$;
\item a set of non-terminal vertices $N' \subseteq V_{F_i}$;
\end{itemize}
and stores whether there exists a subgraph $H_i'$ of $F_i$ with the following properties:
\begin{itemize}
\item no non-terminal vertices in $H'_i$ are adjacent to $X$ and all non-terminal vertices in $V_{H'_i} \cap N_{F_i}(Z_i)$ are in $Y$;
\item there is a walk $D'_i$ in $H_i'$ of length $\ell$ that is a subdivision of the subpath of $D_i$ from $x_i$ to $z$;
\item for each $(s_l,t_l) \in S_i\setminus O_i$ for which at least one terminal is behind $z$, there is an induced path $P'_l$ in $H'_i$ and $G_l$ from $s_l$ to $t_l$ of length $2$ or $3$; if $s_l$ or $t_l$ is in $U(R')$, then all internal vertices of this path are in $N'$;
\item for each $(s_l,t_l) \in O_i$ for which both terminals are behind $z$, there is a walk $P'_l$ in $H'_i$ and $G_l$ of length at least~$2$ from $s_l$ to $t_l$ and $P'_l$ is a subwalk of $D'_i$;
\item if a single terminal of $(s_l,t_l) \in O_i$ is behind $z$ and $z$ is a path vertex, there is a walk $P'_l$ in $H'_i$ and $G_l$ of length at least~$2$, starting in $s_l$ but not ending in $t_l$, and $P'_l$ is a subwalk of $D'_i$;
\item the path $F_i[R']$ and all paths for terminal pairs $(s_l,t_l) \in S_i\setminus O_i$ for which $s_l$ or $t_l$ is in $U(R')$ are mutually induced disjoint;
\item $R'$ is equal to the last $\min\{5,|D'_i|\}$ vertices of $D_i'$;
\item $N' \subseteq V_{H_i'}$;
\item there is a subpath $R$ of $D_i$, ending in $z$, and a map from $R'$ to $R$ such that any terminal vertex $v$ on $R'$ is mapped to the same vertex on $R$ and any non-terminal vertex $v$ on $R'$ is mapped to a path vertex on $R$. Moreover, the map respects the order of the terminal and path vertices of $R$.
\end{itemize}
The last condition is called the \emph{mappability condition} and is fully determined by $R'$ and $z$. We can always pick the subpath $R$ to have length at most the length of $R'$, and thus at most~$5$. The first five conditions effectively ensure that $H'_i$ is a partial embedding of all terminal pairs with a terminal vertex behind $z$.

We use $\mathrm{last}(R)$ respectively $\mathrm{last}(R')$ to denote the last vertex of $R$ respectively $R'$. We also use $\mathrm{term}(z)$ to be equal to $(s_l,t_l) \in O_i$ for which either $z$ is the path vertex or the vertex immediately following $z$ on $D_i$ is the path vertex; otherwise, $\mathrm{term}(z)$ is undefined.

\paragraph{Table construction.}
The table is initialized by having an entry for $\ell = 0$, $z = x_i$, $R' = \{x_i\}$, and each set of non-terminal vertices $N'$ for which $F_i[N' \cup \{x_i\}]$ satisfies the conditions of an entry. In particular, for each $(s_l,t_l) \in S_i\setminus O_i$ for which $\{s_l,t_l\} \cap U(R') \not= \emptyset$, there is an induced path $P'_l$ in $H'_i$ and $G_l$ from $s_l$ to $t_l$ of length $2$ or $3$ for which all internal (non-terminal) vertices are in $N'$, and all such paths are mutually induced disjoint. Also, no non-terminal vertices in $N'$ are adjacent to $X$ and all non-terminal vertices in $N' \cap N_{F_i}(Z_i)$ are in $Y$.

Suppose the table has been fully filled for $\ell-1$. Consider all vertices $u \in V_{F_i}$ for which the table contains an entry for $(\ell-1,R',z,N')$ such that $u$ is adjacent to $\mathrm{last}(R')$ and is not adjacent to any other vertices in $R'$.
For each such $u$ and each entry $(\ell-1,R',z,N')$ with these properties, we create (possibly) new entries for the table for $\ell$. We call these new entries \emph{successor entries}. We distinguish two cases.

\medskip
\noindent
{\bf Case 1.} $u \not\in T_i$.\\ Note that in this case, $u$ will be an internal vertex of a path in the solution for a terminal pair in $O_i$. Hence, it cannot be adjacent to internal vertices of other paths and terminals except, possibly, the terminals joined by the path. We only construct new entries if the following conditions hold:
\begin{itemize}
\item $u$ is not adjacent to any vertex in $N'$;
\item $u$ is not adjacent to $X$ and if it is in $N_{F_i}(Z_i)$, then it is in $Y$;
\item $z$ is a path vertex or the vertex on $D_i$ after $z$ is a path vertex;
\item $u$ is a vertex in $G_l$ for the terminal pair $(s_l,t_l) = \mathrm{term}(z)$.
\end{itemize}
If the above holds, then entries $(\ell,R'_n,z_n,N'_n)$ are constructed as follows.  If $z$ is a path vertex, then $z_n=z$; otherwise, $z_n$ is the vertex on $D_i$ after $z$.

If $|R'| \leq 4$, then $R'_n$ is equal to $R'$ and $u$ is added as the last vertex of the ordered set. Let $R$ and $R_n$ be the subpaths of $D_i$ that follow from the mappability condition on $R'$ and $z$ respectively $R'_n$ and $z_n$. Note that $R_n \supseteq R$ and the set of terminal vertices of $R_n$ and $R$ are equal. Hence, we can set $N'_n = N'$ by the above conditions.

If $|R'| = 5$, let $x$ be the first vertex of $R'$. Then $R'_n$ is equal to the last four vertices of $R'$ and $u$ is added as the last vertex of the ordered set. 
Let $R$ and $R_n$ be the subpaths of $D_i$ that follow from the mappability condition on $R'$ and $z$ respectively $R'_n$ and $z_n$. If $x$ is a non-terminal vertex, then the sets of terminal vertices of $R_n$ and $R$ are equal and we can set $N'_n = N'$. Otherwise, obtain $N'_n$ from $N'$ by removing any vertices from $N'$ belonging to a path for a terminal $(s_l,t_l) \in S_i \setminus O_i$ for which neither $s_l$ nor $t_l$ is in $U(R'_n)$. In either case, $N'_n$ is correctly set by the above conditions.

\medskip
\noindent
{\bf Case 2.} $u \in T_i$.\\ In this case, $u$ can be adjacent to other terminals and $u$ can be a terminal vertex for several terminal pairs. This makes the construction of entries slightly more involved. We only construct new entries if $u$ is the vertex on $D_i$ after $z$. In that case, entries $(\ell,R'_n,z_n,N'_n)$ are constructed as follows. Set $z_n = u$.

If $|R'| \leq 4$, then $R'_n$ is equal to $R'$ and $u$ is added as the last vertex of the ordered set. If $|R'| = 5$, then $R'_n$ is equal to the last four vertices of $R'$ and $u$ is added as the last vertex of the ordered set. Let $R$ and $R_n$ be the subpath of $D_i$ that follow from the mappability condition on $R'$ and $z$ respectively $R'_n$ and $z_n$.

Now construct $N'_n$ from $N'$. If $|R'| = 5$, let $x$ be the first vertex of $R'$. If $x$ is a terminal vertex, then remove any vertices from $N'$ belonging to a path for a terminal $(s_l,t_l) \in S_i \setminus O_i$ for which neither $s_l$ nor $t_l$ is in $U(R'_n)$. 

We might also add several possible sets of paths to $N'_n$. For each terminal $(s_l,t_l) \in S_i \setminus O_i$ for which $s_l$ or $t_l$ is in $U(R'_n) \setminus U(R')$ while neither are in $U(R')$, consider all possible induced paths in $G_l$ between $s_l$ and $t_l$ of which the internal vertices are not adjacent to any vertex in $R'$, $R'_n$, and $N'$. Consider all possible combinations of such paths for such terminal pairs, and add the internal vertices of them to $N'_n$. Note that this creates many different sets $N'_n$, each of which yields a new table entry.

\medskip
\noindent
{\bf Decision and analysis.}
We finish the description of the subroutine {\tt Component}$(i,X,Y)$. It can be readily verified that the entries constructed as above satisfy all stated conditions. Finally, if there is an entry in the table for $(\ell,R',y_i,N')$ for any $\ell,R',N'$, then return {\tt yes}; otherwise, return {\tt no}.

The crucial property of {\tt Component}$(i,X,Y)$ is given in the following lemma.

\begin{lemma}\label{lem:comp-iXY}
For each $i\in\{1,\ldots,k\}$, the subroutine {\tt Component}$(i,X,Y)$ tests in polynomial time whether $(H_i,D_i,\{P_l\})$ has an embedding.
\end{lemma}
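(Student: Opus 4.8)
The plan is to establish two things about the subroutine: that it is correct, i.e.\ it answers ``yes'' precisely when $(H_i,D_i,\{P_l\})$ admits an embedding, and that it runs in polynomial time. Correctness splits into soundness (a ``yes''-answer yields an embedding) and completeness (an embedding yields a ``yes''-answer), and the engine driving both directions is Lemma~\ref{lem:component-correct}, which guarantees that every obstruction to a realization being an embedding is witnessed \emph{locally}, inside a subwalk of $D_i'$ of at most five vertices. Together with the structural Lemma~\ref{lem:struct-at}, this is exactly what justifies carrying only the last five vertices $R'$ of the path $D_i'$ under construction, rather than the entire path.

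For completeness I would take an embedding $(H_i',D_i',\{P_l'\})$ and trace $D_i'$ from $x_i$ to $y_i$, reading off at each prefix a candidate table entry: $\ell$ is the current length, $R'$ is the window of the last $\min\{5,|D_i'|\}$ vertices, $z$ is the image of the frontier on $D_i$, and $N'$ collects the internal vertices of the short paths $P_l'$ (those with $(s_l,t_l)\in S_i\setminus O_i$) whose terminals currently lie in $U(R')$. I would then verify by induction on $\ell$ that each such candidate is actually produced from its predecessor by the Case~1/Case~2 construction rules: because the embedding is genuinely mutually induced, the adjacency tests (the new vertex meets only $\mathrm{last}(R')$, avoids $X$, respects $Y$ and $N'$) and the path-selection steps never fail, so the entry $(\ell,R',y_i,N')$ is eventually created and the subroutine answers ``yes''. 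For soundness I would argue, again by induction on $\ell$, that any realized entry certifies a subgraph $H_i'$ and walk $D_i'$ for which \emph{every} length-$\le 5$ subwalk is an induced path and carries no conflict witness of the type in Lemma~\ref{lem:component-correct}: the entry field forcing $F_i[R']$ to be an induced path, the construction rule forbidding chords of the new vertex into $R'$, and the mutual-inducedness check on the paths attached to $U(R')$ maintain this invariant step by step, while the dropping of stale vertices from $N'$ is justified precisely because no future window can witness a conflict of an already-passed terminal. When $z=y_i$, all terminals lie behind the frontier, so $H_i'$ is a realization; since no size-$\le 5$ witness exists, Lemma~\ref{lem:component-correct} forces it to be an embedding.

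For the running time I would bound the number of table entries. Here $\ell\le n$ and $z$ range over $O(n)$ values, and $R'$, an ordered set of at most five vertices of $F_i$, has $O(n^5)$ choices. The key point is that $N'$ has \emph{constant} size: each of the $\le 5$ terminal vertices of $R'$ contributes at most three terminals to $U(R')$ by Lemma~\ref{lem:Pi-deg} (at most two off $D_i$, plus itself), so $|U(R')|\le 15$; each such terminal lies in at most five pairs by Lemma~\ref{lem:degree}, and each relevant short path in $S_i\setminus O_i$ has at most two internal vertices by Lemma~\ref{lem:subdiv}. Thus $|N'|=O(1)$, the number of choices for $N'$ is $n^{O(1)}$, and the whole table has polynomially many entries. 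Processing one entry costs polynomial time as well: there are $O(n)$ candidate extension vertices $u$, and in Case~2 the enumeration of induced $(s_l,t_l)$-paths for the $O(1)$ newly entering pairs involves only $O(n^2)$ choices each, since each such path has at most two internal vertices.

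The main obstacle is the soundness argument, and within it the bookkeeping of $N'$. One must show that the invariant ``no size-$\le 5$ subwalk of the currently built $D_i'$ witnesses any conflict'' really is preserved: when the window advances past a terminal and we discard the corresponding internal path-vertices from $N'$, we are implicitly asserting that those vertices can never again participate in a detectable conflict. This is exactly the content of the ``moreover'' clause of Lemma~\ref{lem:component-correct}, which localizes the witness of any conflict involving an $O_i$-path to a window containing a conflict vertex, and localizes conflicts among short paths to a window containing a shared or adjacent terminal. Matching the add/drop operations on $N'$ in Cases~1 and~2 to these guarantees, and checking that the mappability condition keeps $R'$ faithfully aligned with $D_i$, is where the argument requires the most care.
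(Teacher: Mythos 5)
Your plan follows the paper's proof of Lemma~\ref{lem:comp-iXY} quite closely: completeness by tracing the embedding's $D_i'$ through a chain of successor entries, soundness by reading a realization off an accepting chain and invoking Lemma~\ref{lem:component-correct} to upgrade it to an embedding, and a constant bound on $|N'|$ (your count via Lemma~\ref{lem:Pi-deg}, Lemma~\ref{lem:degree} and Lemma~\ref{lem:subdiv}a gives a different constant than the paper's $47$, but both suffice).

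There is, however, one concrete step missing from your soundness argument. Before you can apply Lemma~\ref{lem:component-correct} you must exhibit a \emph{realization}, and for that the accepting chain of entries has to define a single, consistent path $P_l'$ for each pair $(s_l,t_l)\in S_i\setminus O_i$. The algorithm discards a pair's internal vertices from $N'$ as soon as neither terminal lies in $U(R')$, and re-enumerates paths when a terminal of the pair later enters the window. So if $s_l$ and $t_l$ were far apart along $D_i'$, the chain could store one candidate path while $s_l$ is in the window and a \emph{different} one when $t_l$ arrives; the subgraph $H_i'$ you build from $\bigcup_\ell N'_\ell$ would then contain both, and the object you feed to Lemma~\ref{lem:component-correct} is not a realization at all. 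Your stated justification for the dropping (``no future window can witness a conflict of an already-passed terminal'') concerns conflict detection and does not address this well-definedness issue. The paper closes the gap with a dedicated AT-freeness argument: if $t_l$ did not appear in $U(R'_{\ell_l})\cup\cdots\cup U(R'_{\ell_l+4})$ after $s_l$ first enters at step $\ell_l$, then the stored path for $(s_l,t_l)$ together with the stretch of $D_i'$ from $s_l$ to its first later neighbour of that path would induce a cycle on at least six vertices in $F_i$, which is impossible. Hence both terminals of every pair in $S_i\setminus O_i$ occur in nearby windows, the path is never dropped and re-chosen, and the realization is well defined. You need this (or an equivalent) argument for the soundness direction to go through.
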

\begin{proof}
Suppose that $(H_i,D_i,\{P_l\})$ has an embedding $(H'_i,D'_i,\{P_l'\})$. Let $v_1,\ldots,v_j$ be vertices of $D'_i$ in order from $x_i=v_1$ to $y_i=v_j$. By induction, it can be readily shown from the construction of the entries that there is a sequence of successor entries $(\ell,\{v_{\ell-4},\ldots,v_\ell\},z_\ell,N'_\ell)$ for $1 \leq \ell \leq j$ and appropriate $z_\ell,N'_\ell$ that traces along $D_i'$ for which there is an entry for $(j,\{v_{j-4},\ldots,v_j\},y_i,N'_j)$ for some appropriate $N'_j$. Hence, the subroutine will return {\tt yes}.

Suppose that the subroutine returns {\tt yes}. Let $(\ell,R'_\ell,z_\ell,N'_\ell)$ be a sequence of successor entries for $1 \leq \ell \leq j$, $R'_\ell, z_\ell,N'_\ell$ for which there is an entry for $(j,R'_j,y_i,N'_j)$ for some $R'_j,N'_j$. Let $D'_i$ be the walk induced by the sets $R'_\ell$. Let $H'_i$ be the subgraph of $F_i$ induced by the terminal vertices $T_i$, the walk $D_i'$, and $\bigcup_{1 \leq \ell \leq j} N'_\ell$. Let $P'_l$ be the set of walks for terminal pairs $(s_l,t_l)$ implied by the entries $(\ell,R'_\ell,z_\ell,N'_\ell)$.

Note that the walks for terminal pairs $(s_l,t_l) \in O_i$ are uniquely defined by $D'_i$. Then by Lemma~\ref{lem:component-correct}, claims (C1) and (C2), and the fact that each set $R'_\ell$ is an induced path by definition, it follows that $D_i'$ is an induced path.

For terminal pairs $(s_l,t_l) \in S_i \setminus O_i$, a small argument is needed, because potentially there is no $R'_\ell$ for which $s_l,t_l \in U(R'_\ell)$, meaning that a path for $(s_l,t_l)$ appears, disappears, and then re-appears and is potentially different. Let $\ell_l$ be smallest such that $\{s_l,t_l\} \cap U(R'_{\ell_l})$. Without loss of generality, $s_l \in U(R'_{\ell_l})$. Suppose that $t_l \not\in U(R'_{\ell_l}) \cup \cdots \cup U(R'_{\ell_l+4})$. Note there is an induced path $P'_l$ in $G_l$ between $s_l$ and $t_l$ stored in $N'_{\ell_l},\ldots,N'_{\ell_l+4}$ that is the same path for each entry. Moreover, by construction, no vertices of this path are adjacent to vertices of $R'_{\ell_l},\ldots,R'_{\ell_l+4}$. Let $v$ be the first vertex on $D'_i$ after $\mathrm{last}(R'_{\ell_l+4})$ that is adjacent to a vertex of $P'_l$. Since $t_l$ is not in $U(R'_{1}),\ldots,U(R'_{\ell_l+4})$ by definition and $t_l \in U(D'_i)$ by definition, $v$ indeed exists. Then the subpath of $D'_i$ from $s_l$ to $v$ plus the subpath of $P'_l$ from $s_l$ to the neighbour of $v$ on $P'_l$ induce a cycle in $F_i$. Since the subpath of $D'_i$ from $s_l$ to $v$ contains at least six vertices, the length of the cycle is at least~$6$, a contradiction to the AT-freeness of $F_i$. Hence, the sequence of entries uniquely defines a path $P'_l$ between $s_l$ and $t_l$ for each terminal pair $(s_l,t_l) \in S_i \setminus O_i$.

Consider $(H'_i,D'_i,\{P_l'\})$. By construction, we know that $(H'_i,D'_i,\{P_l'\})$ is a realization of $(H_i,D_i,\{P_l\})$. For sake of contradiction, suppose that $(H'_i,D'_i,\{P_l'\})$ is not an embedding of $(H_i,D_i,\{P_l\})$. Then by Lemma~\ref{lem:component-correct}, there is a subwalk $R'$ of $D'_i$ with $|R'| \leq 5$ such that:
\begin{itemize}
\item $R'$ is not an induced path, \emph{or}
\item $U(R')$ contains at least one of $\{s_a,t_a\}$ and at least one of $\{s_b,t_b\}$ for some conflict pair $P'_a$,$P'_b$; moreover, if $(s_a,t_a) \in O_i$ (or $(s_b,t_b) \in O_i$), then $R'$ contains a conflict vertex of $P'_a$ (or $P'_b$).
\end{itemize}
However, the first case is excluded by the fact that each set $R'_\ell$ is an induced path by definition. The second case is also excluded. Let $(s_a,t_a) \in S_i$ and $(s_b,t_b) \in S_i$ be terminal pairs for which $U(R'_\ell)$ contains at least one of $\{s_a,t_a\}$ and at least one of $\{s_b,t_b\}$.
\begin{itemize}
\item If $(s_a,t_a) \in O_i$ and $(s_b,t_b) \in O_i$, then their paths cannot conflict, because $R'_\ell$ would contain conflict vertices of both pairs, but $R'_\ell$ is an induced path.
\item If $(s_a,t_a) \in O_i$ and $(s_b,t_b) \in S_i \setminus O_i$, then their paths cannot conflict, because $R'_\ell$ would contain a conflict vertex and $P'_b$ would be contained in $N'_\ell$ and thus is not be adjacent or intersect with $R'_\ell$ by construction.
\item If $(s_a,t_a) \in S_i \setminus O_i$ and $(s_b,t_b) \in S_i \setminus O_i$, then their paths cannot conflict by the construction of $N'_\ell$, because at least one of $\{s_a,t_a\}$ and at least one of $\{s_b,t_b\}$ and thus both $P'_a$ and $P'_b$ are contained in $N'_\ell$.
\end{itemize}
It follows that $(H'_i,D'_i,\{P_l'\})$ is an embedding of $(H_i,D_i,\{P_l\})$.

To prove polynomial time complexity, it suffices to bound $|N'|$. We argue that if $(H_i,D_i,\{P_l\})$ has an embedding $(H'_i,D'_i,\{P_l'\})$ and $(\ell,R'_\ell = \{v_{\ell-4},\ldots,v_\ell\},z_\ell,N'_\ell)$ for $1 \leq \ell \leq |D_i'|$ is the corresponding sequence of successor entries, then $|N'_\ell|$ is bounded by~47 for each $\ell$. Hence, we may limit the size of $N'$ to~47 during our dynamic program. By Lemma~\ref{lem:subdiv}, $D_i'$ will dominate all but at most two vertices of $H'_i$. We call these the special vertices. As argued in Lemma~\ref{lem:subdiv}, the special vertices are internal vertices of some $(s_l,t_l)$ paths $P'_j$.

We claim that all non-terminals in $N'_\ell$ that are not special are adjacent to a terminal in $R'$ or the two terminal vertices preceding $R'$ on $D_i'$ or the two terminal vertices succeeding $R'$ on $D_i'$. Call the latter two sets of terminal vertices $K$ and $L$ respectively. We consider two cases.

In the first case, consider a terminal pair $(s_l,t_l)$ for which $t_l \in U(R'_\ell) \setminus D_i'$ (note that this implies that $(s_l,t_l) \in S_i\setminus O_i$). Then $s_l \in D_i'$ by Lemma~\ref{lem:Pi-term}. By Lemma~\ref{lem:subdiv}, $P_l'$ has length at most~$3$. If $P_l'$ has length~$2$, then the single non-terminal vertex on $P_l'$ is adjacent to $s_l$. If $P_l'$ has length~$3$, then the vertex of $P_j'$ that follows $t_l$ is not dominated by $D_i'$, or the set of paths is not mutually induced. Note that this vertex is special. Moreover, the other internal vertex of $P_j'$ is adjacent to $s_l$. Now let $v \in R'_\ell$ such that $t_l$ is a neighbour of $v$ in $H_i'$. Suppose that $s_l$ precedes the vertices of $K$ on $D_i'$ (the case that $s_l$ succeeds $L$ on $D_i'$ is similar). 
Then, as $D_i'$ is a subdivision of $D_i$, the subpath of $D_i$ between $s_l$ and $v$ has length at least~$3$, whereas the path in $H_i$ between $s_l$ and $v$ via the path vertex of $P_l$ and $t_l$ has length at most~$3$. Then either we contradict that $D_i$ is a shortest path or, since $D_i$ was constructed by a breadth-first search that puts path vertices in its queue first, the former path would be preferred. Hence, $s_l \in K \cup R'_\ell \cup L$, implying the claim in the first case.

In the second case, consider a terminal pair $(s_l,t_l)$ for which $t_l \in R'_\ell$ and $(s_l,t_l) \in S_i\setminus O_i$. If $s_l \not\in D_i'$, then following the preceding argument, any internal vertices of $P_l'$ that are not adjacent to $t_l$ are not on $D_i'$, and thus special. Otherwise, $s_l \in D_i'$. By Lemma~\ref{lem:subdiv}, $P_l'$ has length at most~$3$, and thus its internal vertices are adjacent to $s_l$ or $t_l$. Suppose that $s_l$ precedes the vertices of $K$ on $D_i'$ (the case that $s_l$ succeeds $L$ on $D_i'$ is similar). Again, the subpath of $D_i$ between $s_l$ and $t_l$ has length at least~$3$, whereas the path in $H_i$ between $s_l$ and $t_l$ via the path vertex of $P_l$ has length at most~$2$, contradicting that $D_i$ is a shortest path. Hence, $s_l \in K \cup R'_\ell \cup L$, implying the claim in the second case and overall.

Now following Lemma~\ref{lem:Pi-deg}, any terminal vertex of $D_i$ is adjacent to at most five path-vertices of $H_i$. This implies that any terminal vertex of $D_i'$ is adjacent to at most five non-terminal vertices of $H_i'$. Following the claim, all non-terminals in $N'_\ell$ (except the special vertices) are adjacent to a terminal in $K \cup R' \cup L$. By definition, there are at most~$9$ such vertices, implying that $|N'_\ell| \leq 9\cdot5+2 = 47$.
It follows that the size of the table is polynomially bounded. It can then be readily seen that the algorithm runs in polynomial time.
\qed\end{proof}

The full algorithm, its correctness, and its running time now follows from Lemma~\ref{lem:comp-iXY} and the preceding discussion.

\begin{theorem}\label{thm:main}
The {\sc Induced Disjoint Paths} problem can be solved in polynomial time for AT-free graphs.
\end{theorem}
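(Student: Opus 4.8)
The plan is to assemble the three-phase algorithm developed in Sections~\ref{sec:preproc}--\ref{sec:dp} and argue that each phase runs in polynomial time and preserves the answer. First I would run the Phase~1 preprocessing on the input pair $(G,S)$. By Lemma~\ref{lem:p1-3}, Steps~1--3 take polynomial time, keep $G$ an induced (hence AT-free) subgraph, and produce an equivalent instance; Step~4, justified by Lemma~\ref{lem:cut}, either rejects or guarantees condition~iii). Thus after Phase~1 I may assume, at no asymptotic cost, that the instance satisfies conditions i)--iii), and in particular that the two terminals of each pair are non-adjacent.

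Next I would carry out Phase~2 to impose the structural conditions needed by the dynamic program. I build the auxiliary graph $H$ by joining each terminal pair with a path of length~$2$ and apply Step~5: if $H$ is not AT-free, Lemma~\ref{lem:H-AT} lets me reject. Lemma~\ref{lem:H} then identifies yes-instances with anchored induced topological minors of $H$ in $G$, which is precisely the object the later phase searches for. I compute the interference graph $I$ on the connected components $H_1,\dots,H_r$ of $H$ in polynomial time using Lemma~\ref{lem:interf-poly}. By Lemma~\ref{lem:F}, $I$ is a disjoint union of paths, and Lemma~\ref{lem:F-comp} (Step~6) lets me solve each connected component of $I$ independently; so I may assume $I$ is a single path, giving condition~iv). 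Finally, Lemma~\ref{lem:Zi} furnishes in polynomial time the sets $Z_1,\dots,Z_{r-1}$ of at most two terminals each that capture all interference between consecutive components.

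With the instance in this normalized form I would run the Phase~3 dynamic program {\tt Sol}$(i,Y)$ along the interference path. Because $|Z_i|\le 2$ and, by Lemma~\ref{lem:degree}, each terminal vertex represents at most five pairs, the interference between $S_i$ and $S_{i+1}$ is carried by at most ten vertices of $N(Z_i)$; hence there are only polynomially many relevant sets $X\subseteq N(Z_{i-1})$ and $Y\subseteq N(Z_i)$ of size at most~$10$ to enumerate. I would fill the table of {\tt Sol} sequentially for $i=1,\dots,r$, answering {\tt Sol}$(i,Y)$ from the values {\tt Sol}$(i-1,X)$ via the subroutine {\tt Component}$(i,X,Y)$, and report a yes-instance exactly when some {\tt Sol}$(r,Y)$ is yes.

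The main obstacle, and the heart of the whole argument, is the correctness and efficiency of {\tt Component}, which is exactly what Lemma~\ref{lem:comp-iXY} supplies: it tests in polynomial time whether $(H_i,D_i,\{P_l\})$ admits an embedding. The difficulty there is that the paths for pairs in $O_i$ can be arbitrarily long, so one cannot simply enumerate them; instead one traces the dominating path $D_i'$ and, crucially, exploits AT-freeness (through Lemma~\ref{lem:struct-at} and Lemma~\ref{lem:component-correct}) to show that only the last five vertices of $D_i'$, together with a bounded set $N'$ of \emph{nearby} non-terminals---of size at most~$47$ by Lemma~\ref{lem:subdiv} and Lemma~\ref{lem:Pi-deg}---need to be remembered to detect every conflict. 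Granting Lemma~\ref{lem:comp-iXY}, each {\tt Component} call is polynomial, there are polynomially many triples $(i,X,Y)$, and by the equivalences of Lemmas~\ref{lem:H} and~\ref{lem:F-comp} the combined procedure is correct; composing the three polynomial-time phases then yields the claimed polynomial-time algorithm.
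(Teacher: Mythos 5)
Your proposal is correct and follows essentially the same route as the paper, which itself derives Theorem~\ref{thm:main} by composing the Phase~1 preprocessing (Lemmas~\ref{lem:p1-3} and~\ref{lem:cut}), the Phase~2 structural reductions (Lemmas~\ref{lem:H}--\ref{lem:Zi}), and the Phase~3 dynamic program whose core is Lemma~\ref{lem:comp-iXY}. You have identified the same key lemmas, the same decomposition along the interference path, and the same bounded-state argument for {\tt Component}, so there is nothing to add.
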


\section{Induced Topological Minors}\label{s-itm}

Recall that a graph~$G$ contains a graph~$H$ as an induced topological minor if $G$ contains an induced subgraph isomorphic to a subdivision of $H$, that is, to a graph obtained from $H$ by a number of edge subdivisions, and that the {\sc Induced Topological Minor} problem is the corresponding decision problem. 
Recall also that a graph~$G$ contains a graph~$H$ on vertices $x_1,\ldots,x_k$ as an anchored induced topological minor if there exist $k$ pre-specified vertices $u_1,\ldots,u_k$ in $G$, such that $G$ contains an induced subgraph~$H'$ isomorphic to a subdivision of $H$ and the isomorphism maps $x_i$ to $u_i$
 for $i=1,\ldots,k$. In that case we say that $H$ is {\it anchored in $V_H\subseteq V_G$}. The {\sc Anchored Induced Topological Minor} is the corresponding decision problem.

As mentioned in Section~\ref{s-intro} one can use an algorithm that solves {\sc Induced Disjoint Paths} in polynomial time for any fixed integer~$k$ on some graph class ${\cal G}$ as a subroutine for obtaining an algorithm that solves {\sc $H$-Induced Topological Minor} in polynomial time for any fixed graph~$H$ on ${\cal G}$. This is because
{\sc Anchored Induced Topological Minor} can be reduced to {\sc Induced Disjoint Paths}. This reduction is known, see for example Corollary 3 of the paper of Belmonte et al.~\cite{BGHHKP12}, where ${\cal G}$ is the class of chordal graphs.
Due to Theorem~\ref{thm:main}, we can now also let ${\cal G}$ be the class of AT-free graphs. Below we provide a short proof for this;. In particular we do this in order to show that the reduction works when $H$ has isolated vertices. This case is not yet captured as condition~i) excludes the case when $s_i=t_i$ for some $1\leq i\leq k$.

\begin{corollary}\label{cor:anchor}
The {\sc Anchored Induced Topological Minor} problem can be solved in polynomial time for AT-free graphs.
\end{corollary}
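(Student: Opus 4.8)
The plan is to reduce \textsc{Anchored Induced Topological Minor} on an AT-free graph $G$ to a single instance of \textsc{Induced Disjoint Paths} and then invoke Theorem~\ref{thm:main}. Suppose $H$ has vertices $x_1,\ldots,x_r$ that are to be anchored in the distinct vertices $u_1,\ldots,u_r$ of $G$. First I would perform two sanity checks in polynomial time. Since distinct vertices of a subdivision of $H$ remain distinct, I reject immediately if the $u_i$ are not pairwise distinct. Next, because we seek an \emph{induced} subgraph, the only edges permitted among the branch vertices are those prescribed by $H$; hence for every $i\neq j$ with $x_ix_j\notin E(H)$ I verify that $u_iu_j\notin E(G)$ and reject otherwise. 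After these checks, the adjacencies of $G$ among $u_1,\ldots,u_r$ form a subgraph of $H$ under the bijection $u_i\leftrightarrow x_i$.

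The core of the reduction is to introduce, for each edge $x_ix_j\in E(H)$, the terminal pair $(u_i,u_j)$, yielding a set $S$ of terminal pairs. These satisfy condition~i): they are pairwise distinct since the $u_i$ and the edges of $H$ are distinct, and $u_i\neq u_j$ as $i\neq j$. I would then argue that $(G,S)$ is a yes-instance of \textsc{Induced Disjoint Paths} if and only if $G$ contains $H$ as an induced topological minor anchored in $u_1,\ldots,u_r$. The relaxed conditions~2 and~3 are exactly what make this work: a solution is a collection of induced paths, one per edge of $H$, that may meet only in shared branch vertices and whose interior vertices are adjacent to no other branch vertex, which is precisely the structure of an induced subdivision of $H$. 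An edge $x_ix_j\in E(H)$ with $u_iu_j\in E(G)$ causes no trouble, since by condition~4 the only induced $(u_i,u_j)$-path is the edge itself (any longer path would have $u_iu_j$ as a chord), so such an edge is automatically realized directly, and the precondition guarantees this creates no spurious adjacency. The forward direction reads the paths off a given subdivision; the backward direction takes the union of the solution paths with $G[\{u_1,\ldots,u_r\}]$, the precondition ensuring no extra edges appear among the branch vertices.

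The point that requires genuine care is the treatment of isolated vertices of $H$, and this is the main obstacle. If $x_i$ is isolated, it contributes no edge and hence no terminal pair, so $u_i$ is not a terminal and \textsc{Induced Disjoint Paths} is blind to it; yet in an anchored induced subdivision $u_i$ must appear and be isolated inside the chosen induced subgraph. Introducing a degenerate pair $(u_i,u_i)$ is not allowed, as condition~i) forbids $s_i=t_i$. Instead, I would delete $N_G(u_i)$ from $G$ for every isolated $x_i$, keeping $u_i$ itself. By the precondition, $u_i$ has no anchor among its neighbours, so this removes only non-anchor vertices; it preserves AT-freeness as we pass to an induced subgraph, and it discards exactly the vertices that no valid solution can use, since any interior path vertex adjacent to $u_i$ would destroy the required isolation of $u_i$. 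After the deletion, $u_i$ is isolated in the reduced graph, and a solution to \textsc{Induced Disjoint Paths} on the reduced instance extends to an anchored induced subdivision by reinserting the isolated anchors.

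Finally, I would assemble the pieces: the two sanity checks, the neighbourhood deletions for isolated vertices, and a single call to the polynomial-time algorithm of Theorem~\ref{thm:main} on the reduced instance $(G',S)$, returning its answer. The algorithmic step is handed to us by Theorem~\ref{thm:main}; what remains to be checked carefully is the faithfulness of the reduction in both directions, in particular that conditions~2 and~3 capture induced subdivisions exactly and that the isolated-vertex deletions neither create nor destroy solutions.
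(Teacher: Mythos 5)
Your proposal is correct and follows essentially the same route as the paper: one terminal pair per edge of $H$, a call to Theorem~\ref{thm:main}, and a special treatment of isolated vertices of $H$ by deleting their neighbourhoods in $G$ (the paper also deletes the isolated anchors themselves, which is immaterial). Your explicit sanity check that non-edges of $H$ correspond to non-edges of $G$ among the anchors is a detail the paper leaves implicit, but it does not change the argument.
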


\begin{proof}
Let $G$ and $H$ be graphs, such that $H$ is anchored in $V_H\subseteq V_G$. 
If $H$ has no isolated vertices, then 
we reduce {\sc Anchored Induced Topological Minor} to {\sc Induced Disjoint Paths} on $G$ by constructing 
the pair of terminals $(u,v)$ for each edge $uv\in E_H$. 

Suppose now that $H$ has at least one isolated vertex. Then for each isolated vertex~$u$ in $H$, we do the following. First, we check whether $u$ is adjacent in $G$ to some other vertex of $H$. If so, then we stop and return {\tt no}.
Otherwise, we remove $u$ from $H$ and remove $u$ with its neighbourhood in $G$ from the graph~$G$.
If we
removed all vertices of~$H$, then we return {\tt yes}. Otherwise,
denote by $G'$ and $H'$ the graphs obtained from $G$ and $H$, respectively, by these removals. 
Now for each edge $uv\in E_{H'}$, we construct the pair of terminals $(u,v)$ in $G'$ and solve the obtained instance of  
{\sc Induced Disjoint Paths} on $G'$. 
\qed
\end{proof}

If $H$ is a \emph{fixed} graph, that is, not part of the input, then {\sc Induced Topological Minor} can be solved in polynomial time
by a standard reduction, as mentioned in Section~\ref{s-intro}.
To make our paper self-contained, we give a short proof.

\begin{corollary}\label{cor:ITP-XP}
The {\sc Induced Topological Minor} problem can be solved in time $n^{k+O(1)}$ for pairs $(G,H)$ where $G$ is an $n$-vertex AT-free graph and $H$ is a $k$-vertex graph.
\end{corollary}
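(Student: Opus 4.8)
The plan is to reduce {\sc Induced Topological Minor} to {\sc Anchored Induced Topological Minor} by brute-force guessing where the branch vertices of $H$ are mapped, and then invoke Corollary~\ref{cor:anchor} together with Theorem~\ref{thm:main}. Since $H$ has $k$ vertices, a subdivision $H'$ of $H$ that appears as an induced subgraph of $G$ uses exactly $k$ \emph{branch vertices}, namely the images of the original vertices of $H$ (the internal vertices of subdivided edges have degree~$2$ in $H'$ and are not branch vertices, except in the degenerate case where $H$ itself has degree-$2$ vertices, which we may treat directly as part of the anchored instance). The key observation is that once we fix which $k$ vertices of $G$ play the role of the branch vertices, and fix the bijection between them and $V_H$, the remaining task is exactly to decide whether $G$ contains $H$ as an induced topological minor \emph{anchored} in this choice.

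First I would enumerate all ordered $k$-tuples $(u_1,\ldots,u_k)$ of distinct vertices of $G$; there are at most $n^k$ such tuples, which we can list in time $n^{k+O(1)}$. For each tuple, I would interpret it as a candidate assignment of the vertices $x_1,\ldots,x_k$ of $H$ to the vertices $u_1,\ldots,u_k$ of $G$, and then call the algorithm of Corollary~\ref{cor:anchor} to test in polynomial time whether $G$ contains $H$ as an induced topological minor anchored in $(x_1,u_1),\ldots,(x_k,u_k)$. If any of these polynomially-bounded-many anchored tests succeeds, I would return {\tt Yes}; if all of them fail, I would return {\tt No}. Correctness follows because $G$ contains $H$ as an (unanchored) induced topological minor if and only if there is \emph{some} way to place the branch vertices, i.e. some choice of anchoring for which the anchored version is a yes-instance. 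The running time is $n^k$ calls, each polynomial in $n$ (with the degree of the polynomial and the cost of building the terminal pairs depending only on $H$, hence absorbed into the $O(1)$), giving the claimed $n^{k+O(1)}$ bound.

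The main thing to handle carefully, rather than a genuine obstacle, is the bookkeeping around degenerate features of $H$: isolated vertices of $H$ (already dealt with inside the proof of Corollary~\ref{cor:anchor}) and the possibility that two branch vertices of $H$ are adjacent in $H$, in which case the corresponding edge is realised by a path of length at least~$1$ in the subdivision, which the anchored reduction in Corollary~\ref{cor:anchor} accommodates by introducing the terminal pair $(x_i,x_j)$. I would also note that distinctness of the $u_i$ is the right convention, since an induced subgraph isomorphic to a subdivision of $H$ maps distinct vertices of $H$ to distinct vertices of $G$; thus restricting to tuples of distinct vertices loses no solutions. Because all the real algorithmic work is discharged by Corollary~\ref{cor:anchor}, the only content here is the $n^k$-fold enumeration and the argument that some anchoring must witness any unanchored solution, so this is a short derivation rather than a new technical argument.
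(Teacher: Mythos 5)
Your proposal is correct and follows essentially the same route as the paper: enumerate the $O(n^k)$ choices of anchor vertices and invoke the polynomial-time algorithm for {\sc Anchored Induced Topological Minor} (Corollary~\ref{cor:anchor}) on each. The only cosmetic difference is that the paper handles isolated vertices of $H$ by a separate guessing step inside this proof, whereas you correctly observe that the anchored subroutine already absorbs that case.
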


\begin{proof}
Let $G$ be an AT-free graph on $n$ vertices, and let $H$ be a graph with $V_H=\{x_1,\ldots,x_k\}$ for some $k\leq n$. 
First suppose that $H$ has no isolated vertices.
We guess $k$ vertices $u_1,\ldots,u_k$ in $G$, and for each choice we check whether $G$ has an anchored topological minor $H'$ with the terminals $u_1,\ldots,u_k$ isomorphic to $H$, such that the isomorphism maps $v_i$ to $u_i$ for $i\in\{1,\ldots,k\}$. 
Then $H$ is an induced topological minor of $G$ if and only if $H'$ is an anchored topological minor of $G$ for some choice of the vertices $u_1,\ldots,u_k$. We check
the latter statement by applying the $n^{O(1)}$ time algorithm of Corollary~\ref{cor:anchor}. 
As we have $O(n^k)$ possible choices, we only have to do this $O(n^k)$ times.

If $H$ has one or more isolated vertices, we do the following. First, we guess a set of independent vertices in $G$ that correspond to the isolated vertices of $H$. 
If we cannot find such a set of isolated vertices in $G$, then we stop and return {\tt no}. 
Otherwise we remove the  guessed vertices from $G$ together with their neighbours. We check for each choice, whether the resulting graph~$G'$ contains the graph~$H'$ obtained from $H$ by removing its isolated vertices as an induced topological minor. 
Note that such a new instance of {\sc Induced Topological Minor} is trivial if $G'$ or $H'$ has no vertices.
Since we have  $O(n^k)$ new graphs $G'$, this only adds a $O(n^k)$ factor to the running time.
We conclude that the total running time is $n^{k+O(1)}$, as desired.
\qed
\end{proof}

If $H$ is a part of the input, or if we parameterize the problem by the size of $H$, then the problem becomes hard even for 
a subclass of AT-free graphs. 
A graph is \emph{cobipartite} if its vertex set can be partitioned into two cliques. Hence, the class of cobipartite graphs forms a subclass of the class of AT-free graphs.

\begin{figure}[ht]
\centering\scalebox{0.75}{\input{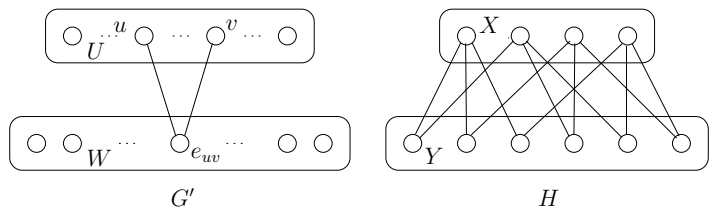}}
\caption{Construction of the graphs $G'$ and $H$ in the proof of Theorem~\ref{thm:W-hard} for $k=4$. 
\label{fig:hard}}
\end{figure}

\begin{theorem}\label{thm:W-hard}
The {\sc Induced Topological Minor} problem is \classNP-complete for cobipartite graphs, and
\classW{1}-hard for cobipartite graphs parameterized by $|V_H|$.
\end{theorem}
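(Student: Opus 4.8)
The plan is to prove both hardness statements with a single reduction from {\sc Clique}, which is \classNP-complete and, parameterized by the solution size~$k$, \classW{1}-hard. Membership in \classNP{} is immediate: a certificate is a subset $S\subseteq V_G$ together with a description of how $G[S]$ realizes a subdivision of~$H$, and this can be checked in polynomial time. The reduction will run in polynomial time in the size of the {\sc Clique} instance (giving \classNP-hardness, hence \classNP-completeness), and it will output a pattern~$H$ whose size $|V_H|$ depends only on~$k$ (giving a valid parameterized reduction, hence \classW{1}-hardness with parameter $|V_H|$).

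First I would work in the complementary, incidence picture, which turns cobipartite graphs into bipartite ones. Given an instance $(G',k)$ of {\sc Clique} with $k\geq 4$ (the cases $k\leq 3$ are dealt with directly), let $B(G')$ be the incidence graph of~$G'$: it is bipartite with sides $V(G')$ and $E(G')$, and $v\in V(G')$ is adjacent to $e\in E(G')$ exactly when $v$ is an endpoint of~$e$. Let $I_k$ be the incidence graph of the complete graph~$K_k$, so $I_k$ has $k$ \emph{vertex-nodes} and $\binom{k}{2}$ \emph{edge-nodes}. I then set the host graph to $G=\overline{B(G')}$ and the pattern to $H=\overline{I_k}$. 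Since complementing a bipartite graph makes each side a clique, both $G$ and $H$ are cobipartite, and $|V_H|=k+\binom{k}{2}=O(k^2)$ depends only on~$k$. As induced-subgraph containment is preserved by complementation on a fixed vertex set, $G$ contains~$H$ as an induced subgraph if and only if $B(G')$ contains~$I_k$ as an induced subgraph.

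The combinatorial core is the equivalence: $G'$ has a $k$-clique if and only if $B(G')$ contains $I_k$ as an induced subgraph. For the forward direction, the $k$ clique vertices together with the $\binom{k}{2}$ edges joining them induce exactly a copy of~$I_k$. For the converse I use a degree argument valid for $k\geq 4$: in~$I_k$ the vertex-nodes have degree $k-1\geq 3$, whereas every vertex on the $E(G')$-side of $B(G')$ has degree~$2$; hence in any induced copy the $k$ vertex-nodes must map to $V(G')$, and since $V(G')$ is independent in $B(G')$, the $\binom{k}{2}$ edge-nodes must map to $E(G')$. Each selected edge is then incident to precisely its two endpoints among the $k$ chosen vertices, and the $\binom{k}{2}$ edge-nodes yield $\binom{k}{2}$ distinct pairs, so all pairs among the $k$ chosen vertices are adjacent, i.e.\ they form a clique.

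It remains to upgrade ``induced subgraph'' to ``induced topological minor'', and this is the step I expect to be the main obstacle. One direction is free, since an induced subgraph is an induced topological minor with no subdivision. For the other direction I must exclude that $G$ contains some \emph{proper} subdivision of~$H$ as an induced subgraph in the absence of a clique. Here I would exploit that a cobipartite graph, and hence every induced subgraph of~$G$, contains no induced~$P_5$ (a $P_5$ cannot be covered by two cliques). The lemma to prove is that \emph{every} subdivision of $H=\overline{I_k}$ with at least one subdivided edge already contains an induced~$P_5$, so no such subdivision can appear as an induced subgraph of the $P_5$-free graph~$G$. Indeed, after subdividing an edge~$uv$ the new degree-$2$ vertex~$w$ has two non-adjacent neighbours $u,v$, and the dense cross-adjacency of $\overline{I_k}$ always lets one extend $u\,\text{--}\,w\,\text{--}\,v$ to an induced path on five vertices; for a subdivided cross edge $x_iy_e$ of $\overline{I_k}$, for example, the vertices $x_{i'},x_i,w,y_e,y_{e'}$ with $i'\in e$ and $e'=\{i,i'\}$ form such a $P_5$. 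Checking this for all edge types of~$H$ and for several subdivisions, together with the finitely many small-$k$ cases, is the technical crux; once established, induced-topological-minor containment of~$H$ in~$G$ coincides with induced-subgraph containment, and the two equivalences above complete the proof.
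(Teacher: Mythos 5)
Your proposal is correct and follows the same two-step strategy as the paper (reduce from {\sc Clique}; prove an induced-subgraph equivalence; then rule out proper subdivisions), but the concrete construction and the subdivision-exclusion argument are genuinely different. The paper takes the incidence graph of $G'$ and of $K_k$ and simply completes both sides to cliques, \emph{keeping} the cross-adjacency ``endpoint $=$ adjacent''; you instead take the full bipartite complement, so your cross-adjacency is ``endpoint $=$ non-adjacent''. Both yield cobipartite host and pattern with $|V_H|=k+\binom{k}{2}$, and your induced-subgraph equivalence (via the degree-$(k-1)$ versus degree-$2$ argument on the uncomplemented side) is sound for $k\geq 4$. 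Where the two proofs really diverge is in excluding subdivision vertices: the paper argues directly inside its specific graphs (a subdivision vertex $z$ of degree $2$ in a clique side forces at least three of the $x_i'$ and some $y_{ij}'$ into the same clique, contradicting that $y_{ij}'$ has only two neighbours in $X'$), whereas you invoke the general fact that cobipartite graphs have independence number at most $2$ and show that every proper subdivision of $\overline{I_k}$ violates this. Your route is arguably more conceptual and reusable; the paper's is more self-contained but tied to its particular $H$.

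The one incomplete spot is the lemma you flag as the ``technical crux''. It is true, and you can make it essentially case-free by aiming for an induced $3K_1$ rather than a $P_5$ (cobipartite graphs are exactly the $3K_1$-free... no, but they are in particular $3K_1$-free, which suffices): let $w$ be any subdivision vertex of a proper subdivision $H''$ of $\overline{I_k}$; then $|N_{H''}[w]|=3$, and since subdividing edges never creates adjacencies between original vertices, it is enough to find $j\in f$ with $x_j,y_f\notin N_{H''}(w)$, so that $\{w,x_j,y_f\}$ is independent. As $N_{H''}(w)$ contains at most two original vertices, at least $k-2\geq 2$ indices $j$ are available, and for each such $j$ at least $(k-1)-2\geq 1$ of the sets $f\ni j$ has $y_f\notin N_{H''}(w)$. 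This disposes of all edge types and all multiplicities of subdivision at once, and with it your proof is complete.
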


\begin{proof}
We prove \classW{1}-hardness.  This proof immediately implies \classNP-hardness, as we reduce from the well-known \classW{1}-complete {\sc Clique} problem~\cite{DowneyF99}. For a graph~$G$ and a parameter $k$, this problem is to test whether $G$ has a clique of size $k$. 

Let $(G,k)$ be an instance of {\sc Clique}. We assume without loss of generality  that $k\geq 5$ and that $G$ has minimum degree at least~4.
We construct a cobipartite graph~$G'$ as follows:
\begin{itemize}
\item[$\bullet$] create a copy of $V_G$ and construct a clique $U$ on these vertices;
\item[$\bullet$] for each edge $uv\in E_G$, create a vertex~$e_{uv}$ adjacent to $u$ and $v$;
\item[$\bullet$] construct the clique $W=\{e_{uv}\mid uv\in E_G\}$. 
\end{itemize}
Now we construct the graph~$H$:
\begin{itemize}
\item[$\bullet$] create a clique $X$ of size $k$ with vertices $x_1,\ldots,x_k$;
\item[$\bullet$] create a clique $Y$ with $\frac{1}{2}k(k-1)$ vertices $y_{ij}$, $1\leq i< j\leq k$;
\item[$\bullet$] for each pair $i,j$ with $1 \leq i < j \leq k$, make $x_i$ and $x_j$ adjacent to $y_{ij}$.
\end{itemize}
The construction is shown in Fig.~\ref{fig:hard}. 
We prove that $G$ has a clique of size $k$ if and only if $G'$ contains $H$ as an induced topological minor.

First suppose that $G$ has a clique $\{u_1,\ldots,u_k\}$. Then the subgraph of $G'$ induced by this clique and the set of vertices $\{e_{u_iu_j}\mid 1\leq i<j\leq k\}$ is isomorphic to $H$, i.e, $H$ is an induced subgraph of $G'$, and therefore an induced topological minor of $G'$.

Now suppose that $G'$ contains $H$ as an induced topological minor. Then $G'$ has an induced subgraph~$H'$ that is a subdivision of $H$. 
We call the vertices of $H'$ that are vertices of $H$ {\it branch vertices} and all other vertices of $H'$ {\it subdivision} vertices.

Let $x_i'$ for $1\leq i\leq k$ and $y_{ij}'$ for $1\leq i<j\leq k$ be the branch vertices of $H'$, where $x_i'$ and $y_{ij}'$ correspond to the vertices $x_i$ and $y_{ij}$ of the graph~$H$ respectively. Let $X'=\{x_i'\mid 1\leq i\leq k\}$ and 
$Y'=\{y_{ij}'\mid 1\leq i<j\leq k\}$.

Suppose that $H'$ has a subdivision vertex~$z$. Assume that $z\in U$. Since $d_{H'}(z)=2$, $H'$ contains at most two other vertices of $U$. Hence, $W$ contains at least $k-2\geq 3$ vertices of $X'$ and at least one vertex~$y_{ij}'\in Y'$. Then $y_{ij}'$ is adjacent to at least three vertices of $X'$, a contradiction. For the case $z\in W$ we use the same arguments, and
conclude that $H'$ has no subdivision vertices.

Assume that there are vertices $y_{ij}'\in U$ and $y_{pq}'\in W$. Since $k\geq 5$, $U$ or $W$ contains at least three vertices of $X'$. Note that $y_{ij}'$ is adjacent to at least three vertices of $X'$ in the first case and that $y_{pq}'$ has at least three such neighbours in the second, a contradiction. Therefore, either $Y'\subseteq U$ or $Y'\subseteq W$.

Now we prove that either $X'\subseteq U$ or $X'\subseteq W$. To obtain a contradiction, assume that there are $x_i'\in U$ and $x_j'\in W$. If $Y'\subseteq U$, then all vertices of $Y'$ are adjacent to $x_i'$, but since $k>2$, at least one vertex of $Y'$ is not adjacent to $x_i'$, a contradiction. We get the same conclusion if $Y'\subseteq W$. 

The sets $X'$ and $Y'$ clearly cannot be subsets of the same clique $U$ or $W$. 
Moreover, because each vertex of $W$ is adjacent to exactly two vertices of $U$ and each vertex of $X'$ is adjacent to at least four vertices of $Y'$, $X'\subseteq U$ and $Y'\subseteq W$. By the construction of $G'$ and $H$, this means that $X'$ is a clique of size $k$ in $G$.
This concludes the proof of Theorem~\ref{thm:W-hard}.\qed
\end{proof}

\section{Induced Paths and Trees through Specified Vertices}\label{s-path+tree}
In this section, we consider the $k$-{\sc in-a-Tree} and $k$-{\sc in-a-Path} problem. Recall that this is the problem of detecting an induced tree respectively path containing a set $S$ of $k$ specified vertices (called terminals as well).

We use the following structural property of AT-free graphs.
\begin{lemma}\label{lem:separ}
Let $G$ be an AT-free graph. For two vertices
 $u$ and $v$ of $G$ let $P$ be an induced $(u,v)$-path of length at least~$4$  in $G$. 
Let $G'$ be the subgraph obtained from $G$ after removing the vertices of $N[V_P\setminus\{u,v\}]$. 
If $G_1$ and $G_2$ are connected components of $G'$ containing a neighbour of $u$ and $v$ in $G$, respectively, then $G_1\neq G_2$.
\end{lemma}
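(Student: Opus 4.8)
The plan is to argue by contradiction: assume $G_1 = G_2$ and then produce an asteroidal triple in $G$, contradicting AT-freeness. Write $P = p_0 p_1 \cdots p_\ell$ with $p_0 = u$, $p_\ell = v$ and $\ell \ge 4$, so that the internal vertices are $p_1,\dots,p_{\ell-1}$. The first thing I would notice is that, since $u$ is adjacent to $p_1$ and $v$ is adjacent to $p_{\ell-1}$, both $u$ and $v$ themselves lie in $N[V_P\setminus\{u,v\}]$ and are therefore deleted from $G'$. Consequently a neighbour $a$ of $u$ that survives in $G'$ satisfies $a\notin N[V_P\setminus\{u,v\}]$, and likewise for a surviving neighbour $b$ of $v$; in particular neither $a$ nor $b$ is adjacent to any internal vertex of $P$. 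Under the assumption $G_1=G_2$ I may take such $a,b$ in one common component of $G'$ and fix an $(a,b)$-path $Q$ inside $G'$, so that every vertex of $Q$ also avoids $N[V_P\setminus\{u,v\}]$.

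For the triple I would take $u$, $v$, and the internal vertex $w := p_2$. Because $\ell \ge 4$ and $P$ is induced, $w$ is nonadjacent to both $u$ and $v$, and $u,v$ are nonadjacent (they are at distance $\ell\ge 4$ on an induced path), so the three vertices are pairwise nonadjacent, as required. The three connecting paths are then as follows. First, the subpath $p_0 p_1 p_2$ of $P$ is a $(u,w)$-path avoiding $N[v]$, since $P$ is induced and $p_0,p_1,p_2$ are at distance at least $2$ from $p_\ell$. Second, the subpath $p_2 p_3 \cdots p_\ell$ of $P$ is a $(v,w)$-path avoiding $N[u]$, because the only neighbour of $p_0$ on $P$ is $p_1$. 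Third, concatenating $u$, the edge $ua$, the path $Q$, and the edge $bv$ gives a $(u,v)$-walk avoiding $N[w]=N[p_2]$: its endpoints $u,v$ are nonadjacent to $p_2$, while $a$, $b$, and all of $Q$ lie in $G'$ and hence avoid $N[p_2]\subseteq N[V_P\setminus\{u,v\}]$. After reducing this walk to a path, all three avoidance conditions hold, so $\{u,v,w\}$ is an asteroidal triple, the desired contradiction.

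The only real content here is the choice of the third vertex together with the observation that the hypothesis $G_1=G_2$ is exactly what supplies a $(u,v)$-path steering clear of a deep internal vertex of $P$; the remaining two paths are free subpaths of $P$ whose avoidance properties follow at once from $P$ being induced. The main point to get right is therefore the bookkeeping: checking that $w=p_2$ is genuinely an internal vertex nonadjacent to both endpoints (this needs $\ell\ge 4$ and fails for shorter paths, which is precisely why the length hypothesis appears), that the deletion of $N[V_P\setminus\{u,v\}]$ removes $u$ and $v$ yet leaves $a$, $b$, and $Q$ untouched by $N[w]$, and that the concatenated $(u,v)$-walk can be taken to be a genuine path. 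Once these are verified, the asteroidal triple contradicts the AT-freeness of $G$ and forces $G_1\neq G_2$.
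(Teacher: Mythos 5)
Your proof is correct. It takes the same overall strategy as the paper (assume $G_1=G_2$ and derive an asteroidal triple), but the route differs in a way worth noting. The paper takes an \emph{induced} $(u',v')$-path $P'$ inside $G_1$, argues that the union of $P$ and $P'$ is a chordless cycle on at least six vertices, and then extracts an asteroidal triple from that cycle. You instead name the triple explicitly as $\{u,v,p_2\}$ and verify the three connecting paths directly: the two subpaths of $P$ give the $(u,p_2)$- and $(v,p_2)$-connections, and the $(a,b)$-path $Q$ in $G'$ (prefixed by $u$ and suffixed by $v$) gives the $(u,v)$-connection avoiding $N[p_2]$, since $G'$ excludes all of $N[V_P\setminus\{u,v\}]\supseteq N[p_2]$. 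Your version is slightly more robust: it does not require $Q$ to be induced, nor does it need the union of the two paths to be genuinely chordless (the paper's cycle argument implicitly has to rule out chords from internal vertices of $P'$ to $u$ or $v$, a point it passes over quickly), because the only property you use of $Q$ is that it lives in $G'$. The bookkeeping you flag --- that $\ell\ge 4$ makes $p_2$ nonadjacent to both $u$ and $v$, that $u,v$ are themselves deleted in $G'$ so the surviving neighbours $a,b$ avoid all internal vertices of $P$, and that the $(u,v)$-walk reduces to a path --- is all verified correctly.
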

\begin{proof}
For contradiction, suppose that $G_1=G_2$. Let $u'$ and $v'$ denote the neighbours of $u$ and $v$ in $G$ that belong to $G_1=G_2$.
As $G_1$ is connected, $G_1$ contains an induced $(u',v')$-path $P'$. 
As $G_1$ contains no vertices from $N[V_P\setminus\{u,v\}]$, we then find that 
 $P'$ contains no such vertices either.
As $P$ and $P'$ are induced and $P$ has length at least~4, this means that the union $C$ of $P$ and $P'$ is an induced cycle on at least six vertices. 
The latter implies that $C$ has three mutually non-adjacent vertices that form an asteroidal triple in $G$, a contradiction.\qed
\end{proof}

For $k$-{\sc in-a-Tree}, we need the following definitions and well-known observation. 
A tree $T$ is a \emph{caterpillar} if there are two leaves $x$ and $y$ such that all other vertices are either vertices of the $(x,y)$-path $P$ or are adjacent to the vertices of $P$. We say that $P$ is a \emph{central path} of $T$.
 
\begin{lemma}\label{lem:caterp}
If $T$ is an induced tree in an AT-free graph~$G$ then $T$ is a caterpillar.
\end{lemma}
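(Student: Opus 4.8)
The plan is to prove the contrapositive: if an induced tree $T$ in an AT-free graph $G$ is \emph{not} a caterpillar, then $G$ contains an asteroidal triple, contradicting AT-freeness. So I would begin by recalling what it means for a tree to fail to be a caterpillar. A standard characterization is that a tree is a caterpillar if and only if it does not contain the \emph{spider} $S_{1,1,1}$ (the subdivided claw, i.e.\ $K_{1,3}$ with each edge subdivided once) as a subgraph; equivalently, a tree is a caterpillar if and only if deleting all its leaves yields a path. Thus if $T$ is not a caterpillar, then after removing leaves the remaining tree is not a path, so it has a vertex of degree at least~$3$, and from this one extracts a subdivided claw inside $T$.

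Next I would exploit that subdivided claw directly. Suppose $T$ contains a vertex~$c$ together with three internally disjoint paths emanating from $c$, each of length at least~$2$, ending in leaves-of-the-spider $a_1, a_2, a_3$; these three paths share only the center $c$. Let $a_1, a_2, a_3$ be the three endpoints. The key claim is that $\{a_1,a_2,a_3\}$ is an asteroidal triple in $G$. First, the $a_i$ are pairwise non-adjacent: since $T$ is an \emph{induced} tree in $G$, any edge of $G$ between two of them would be an edge of $T$, but the tree-distance between any two $a_i$ is at least~$4$ (two edges out to $c$ and two back), so they are non-adjacent in $T$ and hence in $G$.

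For the connecting-path condition, I would use the path inside $T$ that avoids the third vertex's neighbourhood. Concretely, the $(a_1,a_2)$-path in $T$ passing through $c$ is an induced path in $G$ (again because $T$ is induced). I must check it avoids $N[a_3]$: the only vertices of this path are on the two branches toward $a_1$ and $a_2$, and because each branch has length at least~$2$ and $T$ is induced, no vertex on the $a_1$--$a_2$ branch is adjacent in $G$ to $a_3$ — any such adjacency would be a chord of the induced tree, which is impossible. The same argument handles the other two pairs symmetrically. Hence $\{a_1,a_2,a_3\}$ forms an asteroidal triple, contradicting the AT-freeness of $G$.

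The main obstacle I anticipate is the careful verification that the tree-paths avoid the relevant neighbourhoods in $G$, not merely in $T$. Because $T$ is only an \emph{induced} subgraph, adjacencies in $G$ among vertices of $T$ are exactly the edges of $T$; this is the crucial leverage that rules out unwanted chords and makes the avoidance conditions automatic. I would state this ``no chords'' principle explicitly and use it uniformly. A minor point to get right is extracting the subdivided-claw structure with all three branches of length at least~$2$; here one uses that any internal branch vertex can be pushed outward until the branch length is at least two, or one simply takes $c$ to be a branching node of the tree-of-internal-vertices and extends each of its three tree-directions to a leaf of $T$, which guarantees length at least~$2$ on each branch.
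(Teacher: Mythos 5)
Your proof is correct. The paper actually states Lemma~\ref{lem:caterp} without any proof, calling it a well-known observation, and your argument --- extracting a subdivided claw from a non-caterpillar tree and showing its three tips form an asteroidal triple, using that an induced tree admits no chords in $G$ --- is exactly the standard argument that justifies it.
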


We are now ready to prove the results of this section.

\begin{theorem}\label{thm:ktree}
The $k$-{\sc in-a-Tree} problem can be solved in polynomial time for AT-free graphs 
even if $k$ is part of the input.
\end{theorem}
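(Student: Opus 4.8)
The plan is to reduce the problem to detecting a highly structured induced subgraph and then to trace that subgraph by dynamic programming, in close analogy with the algorithm behind Theorem~\ref{thm:main}. The starting point is Lemma~\ref{lem:caterp}: every induced tree in an AT-free graph is a caterpillar. Hence, if $(G,S)$ is a yes-instance, a minimal induced tree $T$ containing $S$ is a caterpillar whose \emph{spine} is an induced path $P$ from one leaf $x$ to another leaf $y$, with all remaining vertices pendant leaves, each adjacent to exactly one vertex of $P$. Since repeatedly deleting a non-terminal leaf of an induced tree again yields an induced tree containing $S$, I would first argue that we may assume every leaf of $T$ — and in particular both spine endpoints $x$ and $y$ — is a terminal. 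This gives the clean reformulation: $(G,S)$ is a yes-instance if and only if $G$ has an induced path $P$ (the spine, possibly through non-terminal vertices) together with a set $L\subseteq S$ of \emph{leaf terminals} such that $S\subseteq V_P\cup L$, every $t\in L$ has exactly one neighbour on $P$, and the vertices of $L$ are pairwise non-adjacent; these are precisely the conditions making $P\cup L$ an induced caterpillar. We then guess the two spine endpoints $x,y\in S$, which costs only $O(|S|^2)$ and is absorbed into the running time.

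For each guess of $(x,y)$ I would run a dynamic program that builds the spine vertex by vertex from $x$ towards $y$, mirroring the {\tt Component} subroutine of Lemma~\ref{lem:comp-iXY}. As there, the state retains only the last five vertices $R'$ of the partial spine: by Lemma~\ref{lem:struct-at}, any chord of the eventual spine would, together with the spine, create a cycle containing a chord spanning at most four positions, so a five-vertex window is enough to certify on the fly that the spine is an induced path. A transition appends a new vertex $v$ adjacent to $\mathrm{last}(R')$ and non-adjacent to the rest of $R'$, and simultaneously decides, for each terminal in $N(v)\setminus V_P$, whether to realise it as a pendant leaf of $v$; local consistency — no two leaves in the current window are adjacent, and each window leaf touches only its attachment vertex — is verified inside the window.

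The delicate point, and where AT-freeness is essential, is keeping the table polynomial: naively one must record which of the $|S|$ terminals are already covered, which is exponential in $k$. I would eliminate this using the separation Lemma~\ref{lem:separ}. For a terminal $t$ placed on the spine, inducedness forces the only vertices of $V_P$ adjacent to $t$ to be its (at most two) spine-neighbours; for a leaf terminal, $t$ has a single neighbour on $V_P$. Hence each terminal meets the spine in a localised interval, and Lemma~\ref{lem:separ} shows that once the spine has progressed more than four vertices beyond $t$'s attachment region it lies in a different component of $G$ minus the neighbourhood of the intervening interior, so $t$ can neither be newly reached nor acquire a second spine-neighbour. Fixing a linear order on $S$ — say by distance from $x$, or via the dominating-pair order of Lemma~\ref{l-cc} — Lemma~\ref{lem:separ} shows the spine meets the sets $N[t]$ in this order, so I would carry in the state only a single frontier pointer into this order together with the window $R'$, maintaining the invariant that every terminal before the pointer is already covered and rejecting the instant a terminal is passed uncovered. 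This keeps the coverage information local and the table of polynomial size; the spine reaches $y$ in an accepting configuration exactly when an induced caterpillar through $S$ exists, and over the $O(|S|^2)$ endpoint guesses the whole procedure runs in polynomial time independent of $k$.

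I expect the main obstacle to be exactly this state-bounding argument: proving rigorously, via Lemma~\ref{lem:separ}, that the spine meets the neighbourhoods $N[t]$ in a fixed linear order and that a terminal's coverage can never be reopened once the window has left it, so that a single frontier pointer suffices. A secondary technical point is the short-spine boundary case, where $P$ has fewer than five vertices and Lemma~\ref{lem:separ} does not apply; there the caterpillar has bounded spine, so I would dispose of it by direct enumeration over the few spine vertices before invoking the dynamic program. Finally, if $G$ is disconnected, all terminals must lie in a single component, which is checked at the outset.
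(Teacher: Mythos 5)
Your overall strategy is the paper's: reduce via Lemma~\ref{lem:caterp} to finding an induced caterpillar whose spine is traced by a dynamic program that keeps a five-vertex window, and use Lemma~\ref{lem:separ} to keep the coverage bookkeeping local. However, the one place where you deviate from the paper is exactly the step you flag as the main obstacle, and there is a genuine gap there. You propose to fix a linear order on $S$ in advance (by distance from $x$, or via a dominating-pair order) and to carry a single frontier pointer, asserting that every valid spine meets the neighbourhoods $N[t]$ in that order. Lemma~\ref{lem:separ} does not give this: it provides a separation \emph{relative to a given induced path}, not a canonical order over all candidate spines. Two terminals can be incomparable in the sense that different spines encounter them in different orders, and an induced path need not visit vertices in order of their distance from its origin (the $j$-th spine vertex may be adjacent to a common neighbour of $x$ without violating inducedness or AT-freeness), so the distance order can be violated. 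Without a proof that one fixed order works for every spine, the frontier pointer is not well defined and the DP can wrongly reject.

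The paper avoids any global order. Its state is $(\ell,x,r,z,R)$ where $r$ is only the \emph{count} of terminals already placed, and the set $A(R)$ of terminals still ``ahead'' is recomputed from the current window alone, as the terminals in the connected component of $\mathrm{last}(R)$ in $G$ minus $N[V_R\setminus\{\mathrm{first}(R),\mathrm{last}(R)\}]$. Each transition appends $v$, takes $U=(N_G(w)\cap A)\setminus\{v\}$ as the pendant leaves forced to attach at $w$, checks they attach correctly and pairwise consistently, and then verifies $A'=A(R')\setminus\{v\}$ --- i.e.\ that \emph{no terminal leaves the ahead-set without being accounted for} --- before setting $r'=r+|A|-|A'|$. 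Lemma~\ref{lem:separ} is used only to show that terminals once behind the window can never be reached or touched again, so the count never double-counts and $r=k$ certifies full coverage. If you replace your frontier pointer by this count-plus-recomputed-ahead-set mechanism, your argument goes through; your secondary worries (short spines, guessing $y$) are non-issues in the paper's formulation, since $A(R)$ is defined for windows of any length and the algorithm simply stops when $r=k$ without ever fixing the far endpoint.
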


\begin{proof}
Using Lemma~\ref{lem:caterp}, we look for an induced path $P$ between two terminals such that $V_P$ and $S$ induce a caterpillar with central path $P$, that is, all terminal vertices  that are not in $P$ are pairwise non-adjacent and each has exactly one neighbour in $P$.

We build this caterpillar step by step, using a dynamic programming algorithm that is somewhat similar to the algorithm used in the subroutine {\tt Component}$(i,X,Y)$ in Section~\ref{sec:comp}. We use similar notations as well. The idea will be to ``trace'' the central path through the graph. However, in contrast to the subroutine {\tt Component}$(i,X,Y)$, we have no ``template'' to go by and need to build the central path and the caterpillar step by step.

Before we show how the dynamic programming table is constructed and updated, we first give some additional terminology.
For an ordered set (a sequence of distinct elements) $R=\{r_1,\ldots,r_c\}$, we say that $r_1$ is the \emph{first} element of~$R$ and $r_c$ is the \emph{last} element. The first and last elements of~$R$ are denoted $\mathrm{first}(R)$ and $\mathrm{last}(R)$ respectively. We also say that an ordered set of vertices $R=\{r_1,\ldots,r_c\}$ of a graph~$G$ induces a path if $r_1\ldots r_c$ is an induced path in $G$. 
Let $P$ be an induced $(u,v)$-path $P$ in a graph~$G$ and let $c$ be a positive integer.
By $P(c)$ we denote the ordered set of the last $c$ vertices of~$P$ in the path order. If $P$ has length at most $c$, then $P(c)$ is the ordered set of the vertices of~$P$. Observe that $\mathrm{last}(P(c))=v$.

For an ordered set of vertices $R$ that induces a path with end-vertices $\mathrm{first}(R)$ and $\mathrm{last}(R)$, $A(R)$ denotes the set of terminals in the connected component of $G[(V_G\setminus N_G[V_R\setminus\{\mathrm{first}(R),\mathrm{last}(R)\}])\cup\{\mathrm{last}(R)\}]$ that contains $\mathrm{last}(R)$. We can think of $A(R)$ as the terminals that are ahead of $R$.

Our dynamic programming algorithm creates tables that store entries with the following fields:
\begin{itemize}
\item[$\bullet$] a non-negative integer $\ell$;
\item[$\bullet$] a terminal~$x\in S$ (the starting terminal of the caterpillar); 
\item[$\bullet$] a non-negative integer~$r$ (the number of terminals in the caterpillar as constructed so far);
\item[$\bullet$] a vertex~$z$ of $G$ (the last vertex of the central path of the caterpillar as constructed so far; note that $z$ might be but is not necessarily equal to $x$); and
\item[$\bullet$] an ordered set $R$ of at most five vertices of $G$ with $z=\mathrm{last}(R)$ that induces a path,
\end{itemize}
such that there is an induced $(x,z)$-path $P$ in $G$
and a partition $A,B,N$ of $S$ (some sets can be empty)
with the following properties:
\begin{itemize}
\item[$\bullet$] $P(5)=R$;
\item[$\bullet$] $N=S\cap (N_G[V_R\setminus\{\mathrm{first}(R),z\}])$;
\item[$\bullet$] $|B\cup N|=r$;
\item[$\bullet$] $G[V_P\cup B\cup N]$ is a caterpillar; and
\item[$\bullet$] for $A=A(R)\setminus\{z\}$, it holds that $A\cap V_P=\emptyset$ and vertices of $A$ are not adjacent to the vertices of $(V_P\cup N)\setminus\{z\}$.
\end{itemize}
We can think of $A$ as the set of terminals that are still ahead, of $B$ as the set of terminals that are behind, and of $N$ as the set of terminals that are currently in the neighbourhood of $R$. Also note that since $z \in R$, it would not be necessary to make $z$ a separate field; however, we will do so for notational convenience.

The tables are constructed consecutively for $\ell=0,\ldots,|V_G|-1$. We stop if one of the following occurs:
\begin{itemize}
\item[$\bullet$] the table contains a entry with $r=k$, and we return {\tt yes} in this case; or
\item[$\bullet$] we cannot construct any entry for the next value of~$\ell$, and we return {\tt no} in this case.
\end{itemize}

\medskip
\noindent
{\bf The table for \boldmath$\ell=0$.} This table only contains entries 
$(0,0,x,z,R)$, where $x\in S$, $z=x$, and $R=\{x\}$.

\medskip
\noindent
{\bf The tables for \boldmath$\ell>0$.} Assume that the table for $\ell-1$ has been constructed. We consider all vertices $v\in V_{G}$ for which the table for $\ell-1$ contains a entry $(\ell-1,x,r,w,R)$ such that
$v$ is adjacent to $w$ but is not adjacent to any other vertices of $R$. Then, for 
each $v$ and $(\ell-1,x,r,w,R)$, we create new entries in the table for $\ell$. In order to do this, we construct the sets $A=A(R)\setminus\{w\}$
and $N=S\cap (N_G[V_R\setminus\{\mathrm{first}(R),\mathrm{last}(R)\}])$. 
Then we construct $R'$ by adding $v$ to $R$ as the last element. 
Let $U=(N_G(w)\cap A)\setminus\{v\}$. Then we check whether the following holds:
\begin{itemize}
\item[$\bullet$] each vertex of $U$ is adjacent to exactly one vertex of $R'$ (i.e.\ only to the vertex~$w$);
\item[$\bullet$] each vertex of $U$ is not adjacent to the vertices of $N\setminus\{w\}$;
\item[$\bullet$] for the set $A'=A\setminus(\{v\}\cup U)$, it holds that $A'=A(R')\setminus\{v\}$.
\end{itemize}
The first two verify that we are indeed inducing a caterpillar, whereas the last verifies that we are expanding the caterpillar in the correct direction (towards the remaining terminals that are ahead).
If these conditions are fulfilled, then we set $R''=R'\setminus \{\mathrm{first}(R)\}$ if $|R|=5$, and 
$R''=R'$ otherwise. Let $r'=r+|A|-|A'|$. Then we include the entry $(\ell,x,r',v,R'')$ in the table.  

Observe that by Lemma~\ref{lem:separ}, any vertex of $G$ adjacent to $z=\mathrm{last}(R)$ but non-adjacent to other vertices of $R$, is not adjacent to the vertices of the $(x,z)$-path $P$, except the last vertex~$z$, and is not adjacent to the vertices of $B$. Then the correctness of our algorithm follows from the description and this observation.

It remains to prove that the algorithm is polynomial. 
Let $n=|V_G|$. Since $r\leq k$, there are $k$ possibilities for the terminal~$x$ and at most $n^5$ ways to choose $z$ and $R$. Hence, the total size of the table for any $\ell$ is at most $k^2\cdot n^5$. Since $\ell\leq n-1$, the running time is polynomial.
\end{proof}

\begin{theorem}\label{thm:kpath}
The $k$-{\sc in-a-Path} problem can be solved in polynomial time for AT-free graphs 
even if $k$ is part of the input.
\end{theorem}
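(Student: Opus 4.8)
The plan is to reuse, with simplifications, the dynamic-programming machinery developed for Theorem~\ref{thm:ktree}, exploiting that an induced path is a much more restrictive object than an induced caterpillar. First I would observe that we may assume both end-vertices of the sought path are terminals: given any induced path that spans $S$, its subpath between the first and last terminal (in path order) is still induced and still spans $S$. Hence it suffices, for each starting terminal $x\in S$, to ``trace'' an induced path out of $x$ and record how many terminals it has collected, stopping once all $k$ of them lie on it. As in Theorem~\ref{thm:ktree}, the structural engine is Lemma~\ref{lem:separ}: once the traced path has length at least~$4$, any vertex adjacent to its current last vertex but to no other vertex of its last five-vertex window is non-adjacent to every earlier path vertex, which is exactly what permits maintaining only a bounded window.

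Concretely, I would use the same table fields as in Theorem~\ref{thm:ktree}: an integer $\ell$ (the length), the starting terminal $x$, the number $r$ of collected terminals, the current endpoint $z$, and an ordered set $R$ of at most five vertices with $z=\mathrm{last}(R)$ that induces a path. The stored invariant is that there is an induced $(x,z)$-path $P$ with $P(5)=R$ carrying exactly $r$ terminals, all remaining terminals lying ``ahead'' in $A(R)$. The essential simplification over the caterpillar algorithm is that there is \emph{no} leaf set: in a path a terminal can be collected only by the path stepping onto it, never by being adjacent to it. Consequently, in the transition that extends $P$ from $w$ to a vertex $v$ adjacent to $w$ but to no other vertex of $R$, I would additionally require that no terminal of $A(R)\setminus\{v\}$ is adjacent to $w$; equivalently, the analogue of the leaf set $U$ from Theorem~\ref{thm:ktree} must be empty. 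This is forced because by Lemma~\ref{lem:separ} any such terminal becomes permanently separated from the remaining ahead terminals the moment $w$ turns internal, and could never afterwards be spanned. The count is updated by $r'=r+1$ if $v\in S$ and $r'=r$ otherwise, the window $R$ is shifted exactly as before, and the same ahead-set consistency check $A(R)\setminus\{v\}=A(R')\setminus\{v\}$ guarantees that no terminal is silently lost. We answer {\tt Yes} as soon as an entry with $r=k$ appears.

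For correctness I would argue both directions as for Theorem~\ref{thm:ktree}. The forward direction is immediate: a solution path with terminal endpoints yields a valid sequence of successor entries that reaches $r=k$. For the reverse direction, a sequence of entries reconstructs a walk $P$; the observation drawn from Lemma~\ref{lem:separ} — that each extending vertex is non-adjacent to the whole earlier path and to the behind-set — together with the fact that every window $R$ induces a path, shows that $P$ is a globally induced path, since any chord between $P$ and an earlier segment would yield an induced cycle on at least six vertices and hence an asteroidal triple. The value $r=k$ then certifies that all of $S$ lies on $P$. The running time is polynomial: there are $k$ choices for $x$ and $O(n^{5})$ choices for $(z,R)$, so each table has size $O(k\,n^{5})$; there are at most $n$ values of $\ell$, and each transition is verifiable in polynomial time.

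The hard part will be the reverse-direction argument that local verification of a five-vertex window genuinely guarantees a globally induced path, and, dually, that rejecting every state in which an ahead terminal is adjacent to the new internal vertex $w$ discards no real solution. Both claims hinge on Lemma~\ref{lem:separ}, whose hypothesis needs path length at least~$4$; the short initial paths (length $<4$) must therefore be treated directly, which is precisely why the window has size five rather than two. I would also need to verify the boundary bookkeeping of $A(R)$ as the window first fills and as $R$ is trimmed back below five vertices, mirroring the case analysis already carried out for $k$-{\sc in-a-Tree}.
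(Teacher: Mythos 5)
Your proposal is correct and follows essentially the same route as the paper: the same five-field dynamic-programming table (length, start terminal, terminal count, endpoint, five-vertex window inducing a path), the same ahead-set consistency check (which, as you note, already subsumes your explicit ``empty leaf set'' condition), and the same appeal to Lemma~\ref{lem:separ} to turn local window checks into global inducedness. The only differences are cosmetic (your $r'=r+1$ vs.\ the paper's $r'=r+|A|-|A'|$, which coincide).
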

\begin{proof}
We use a dynamic programming algorithm that is constructed along the same lines as for the {\sc $k$-in-a-Tree} problem.

We have to find an induced path $P$ between two terminals such that $S\subseteq V_P$.
As in the proof of Theorem~\ref{thm:ktree}, for an ordered set of vertices $R$ that induces a path with the end-vertices $\mathrm{first}(R)$ and $\mathrm{last}(R)$, $A(R)$ is the set of terminals in the connected component $G[(V_G\setminus N_G[V_R\setminus\{\mathrm{first}(R),\mathrm{last}(R)\}])\cup\{\mathrm{last}(R)\}]$ that contains $\mathrm{last}(R)$.

Our dynamic programming algorithm creates tables that store entries with the following fields:
\begin{itemize}
\item[$\bullet$] a non-negative integer $\ell$;
\item[$\bullet$] a terminal~$x\in S$ (the starting terminal of the path); 
\item[$\bullet$] a non-negative integer~$r$ (the number of terminals on the path as constructed so far);
\item[$\bullet$] a vertex~$z$ of $G$ (the last vertex of the path as constructed so far; note that $z$ might be but is not necessarily equal to $x$); and
\item[$\bullet$] an ordered set $R$ of at most five vertices of $G$ with $z=last(R)$ that induces a path,
\end{itemize}
such that there is an induced $(x,z)$-path $P$ in $G$,
and a partition $A,B,N$ of $S$ (some sets can be empty)
with the following properties:
\begin{itemize}
\item[$\bullet$] $P(5)=R$;
\item[$\bullet$] $N=S\cap R$;
\item[$\bullet$] $B\subseteq V_P\setminus R$;
\item[$\bullet$] $|B\cup N|=r$; and
\item[$\bullet$] for $A=A(R)\setminus\{z\}$, it holds that $A\cap V_P=\emptyset$ and vertices of $A$ are not adjacent to the vertices of $V_P\setminus\{z\}$.
\end{itemize}
The intuition for $A$, $B$, and $N$ are similar as before.

The tables are constructed consecutively for $\ell=0,\ldots,|V_G|-1$. We stop if one of the following occurs:
\begin{itemize}
\item[$\bullet$] the table contains a entry with $r=k$, and we return {\tt yes} in this case;
\item[$\bullet$] we cannot construct any entry for the next value of $\ell$, and we return {\tt no} in this case.
\end{itemize}

The tables are constructed and updated as follows.

\medskip
\noindent
{\bf The table for \boldmath$\ell=0$.} This table only contains entries 
$(0,0,u,z,R)$, where $u\in S$, $z=u$, and $R=\{u\}$.

\medskip
\noindent
{\bf The tables for \boldmath$\ell>0$.} Assume that the table for $\ell-1$ has been constructed. We consider all vertices $v\in V_{G}$ for which the table for $\ell-1$ contains a entry $(\ell-1,x,r,w,R)$ such that
$v$ is adjacent to $w$ but is not adjacent to any other vertices of $R$. Then, for 
each $v$ and $(\ell-1,x,r,w,R)$, we create new entries for the table for $\ell$. In order to do this, we construct the set $A=A(R)\setminus\{w\}$.
Then we construct $R'$ by adding $v$ to $R$ as the last element.
Let $A'=A\setminus\{v\}$. 
Then we check whether $A'=A(R')\setminus\{v\}$. This verifies that we are expanding the path in the correct direction (towards the remaining terminals that are ahead).
If this holds, then we set $R''=R'\setminus \{\mathrm{first}(R)\}$ if $|R|=5$, and 
$R''=R'$ otherwise. Let $r'=r+|A|-|A'|$. 
Then we include the entry $(\ell,x,r',v,R'')$ in the table.  

Observe that by Lemma~\ref{lem:separ}, any vertex of $G$ adjacent to $z=\mathrm{last}(R)$ but non-adjacent to other vertices of $R$, is not adjacent to the vertices of the $(x,z)$-path $P$, except the last vertex~$z$. Then the correctness of our algorithm follows from the description and the preceding observation.

It remains to prove that the algorithm is polynomial. 
Let $n=|V_G|$. Since $r\leq k$, there are $k$ possibilities for the terminal~$x$ and at most $n^5$ ways to choose $z$ and $R$. Hence, the total size of the table for any $\ell$ is at most $k^2\cdot n^5$. Since $\ell\leq n-1$, the running time is polynomial.
\qed
\end{proof}

\section{Concluding Remarks}\label{s-con}
We have presented a polynomial-time algorithm that solves {\sc Induced Disjoint Paths} for AT-free graphs, and we used this algorithm for a polynomial-time algorithm that solves $H$-{\sc Induced Topological Minor} on this graph class for every fixed graph~$H$. We complemented the latter result by proving that  {\sc Induced Topological Minor}, restricted to cobipartite graphs, is \classNP-complete and \classW{1}-hard when parameterized by~$|V_H|$.
We also showed that the problems $k$-{\sc in-a-Tree} and $k$-{\sc in-a-Path} can be solved in polynomial time for AT-free graphs even if $k$ is part of the input.

\subsection{Open Problem 1}

Motivated by our application on testing for induced topological minors, we assumed that all terminal pairs in an instance $(G,S)$ of {\sc Induced Disjoint Paths} are distinct. 
For general graphs, we can easily drop this assumption by replacing a vertex~$u$ representing $\ell\geq 2$ terminals by $\ell$ new mutually non-adjacent vertices, each connected to all neighbours of $u$ via subdivided edges.
This yields an equivalent instance of {\sc Induced Disjoint Paths}, in which all terminal pairs are distinct. 
However, we cannot apply this reduction for AT-free graphs, because it may create asteroidal triples. 
Hence, the complexity of this more general problem remains an open question.

We note that the special case in which all terminal pairs coincide, that is, in which $(s_1,t_1)=\cdots = (s_k,t_k)$ is already
 \classNP-compete for $k=2$~\cite{Bi91,Fe89} and solvable in $O(n^2)$ time on $n$-vertex planar graphs 
 for arbitrary $k$~\cite{MRSS1994}.
We can solve this case in polynomial time for AT-free graphs as follows.

Let $G$ be an AT-free graph with $k$ terminal pairs that are copies of the same terminal pair $(s,t)$. 
If $k=1$, then the problem is trivial: we have to find an $(s,t)$-path in $G$. Hence, we assume that $k\geq 2$. 
Suppose that $(s,t)$-paths $P_{1},\ldots,P_{k}$ form a solution.
Then each path $P_i$ has length at most~4. Otherwise, if there is a path $P_i$ of length at least~5, this path and any other path $P_j$ would induce a cycle with at least six vertices, but an AT-free graph has no induced cycles of length at least~6. Moreover, at most two paths $P_i,P_j$ have length at least~3. To obtain a contradiction, assume that three paths $P_i,P_j,P_r$ have length at least~3. Let $u,v,w$ be the vertices adjacent to $t$ in $P_i,P_j,P_r$ respectively. Then $u,v,w$ form an asteroidal triple, a contradiction. 
By these observations, we consider cases when a solution has $r=0,1,2$ paths of length at least~3 and at most~4. For each $r$, we guess these paths by brute force. Then we 
construct the graph~$G'$ by the deletion of all internal vertices of these paths, together with their neighbours different from $s,t$. Now we have to find $k'$ $(s,t)$-paths in $G'$ of length~2, where $k'=k-r$ if $s,t$ are non-adjacent, and $k'=k-r-1$ if $s,t$ are neighbours. These paths exist if and only if the graph~$H=G'[N_{G'}(s)\cap N_{G'}(s)]$ has an independent set of size at least $k'$. It remains to observe that $H$ is an AT-free graph, and that the {\sc Independent Set} problem can be solved in polynomial 
time for AT-free graphs~\cite{BroersmaKKM99}.

\subsection{Open Problem 2}
Our three algorithms for solving {\sc Induced Disjoint Paths}, $k$-{\sc in-a-Tree} and $k$-{\sc in-a-Path} on AT-free graphs are polynomial-time algorithms, but the polynomials in the running time bounds have high degree. It  may be interesting to construct more efficient algorithms even for subclasses of the AT-free graphs, such as the class of permutation graphs or the class of cocomparability graphs.
In contrast to the class of AT-free graphs, both these graph classes are known to have a geometric intersection model.  

\subsection{Open Problem 3}
Our last  open problem is whether it is possible to extend our result for {\sc Induced Disjoint Paths} to some superclass of AT-free graphs, such as graphs of bounded asteroidal number. An {\it asteroidal set} in a graph~$G$ is an independent set $S\subseteq V_G$, such that every triple of vertices of $S$ is asteroidal. The {\it asteroidal number}
introduced by Walter~\cite{Walter78} (see also Kloks, Kratsch and M\"uller~\cite{KKM01})  is the size of a largest asteroidal set in $G$.
Note that complete graphs are exactly those graphs that have asteroidal number at most~$1$, and that AT-free graphs are exactly those graphs that have asteroidal number at most~2. 
There exist problems for which polynomial-time algorithms for AT-free graphs could be extended to polynomial-time algorithms
for graphs of bounded asteroidal number 
(see, e.g., \cite{BroersmaKKM99,KM12,KMT03}).
Is this also possible for {\sc Induced Disjoint Paths}, or will this problem become \classNP-complete for some constant value of the asteroidal number?

\end{document}